\newtheorem{remark}{Remark}
\newtheorem{assumption}{Assumption}
\newtheorem{lemma}{Lemma}
\newtheorem{corollary}{Corollary}
\newtheorem{theorem}{Theorem}
\newtheorem{proposition}{Proposition}
\newtheorem{condition}{Condition}
\newtheorem{example}{Example}
\newtheorem{definition}{Definition}
\begin{document}

%\onecolumn

%\doublespacing 

% paper title
% can use linebreaks \\ within to get better formatting as desired
\title{Distributed Optimization Using the Primal-Dual Method of Multipliers}

\graphicspath{{figures/}}
% author names and IEEE memberships
% note positions of commas and nonbreaking spaces ( ~ ) LaTeX will not break
% a structure at a ~ so this keeps an author's name from being broken across
% two lines.
% use \thanks{} to gain access to the first footnote area
% a separate \thanks must be used for each paragraph as LaTeX2e's \thanks
% was not built to handle multiple paragraphs
%

\author{Guoqiang~Zhang and Richard Heusdens
\thanks{G.~Zhang is with both the School of Computing and Communications, University of Technology, Sydney, Australia, and the Department
of Microelectronics, Circuits and Systems group, Delft University of Technology, The Netherlands. Email: {guoqiang.zhang@uts.edu.au}}

\thanks{R.~Heusdens is with the Department
of Microelectronics, Circuits and Systems group, Delft University of Technology, The Netherlands.
Email: {r.heusdens@tudelft.nl}}
\thanks{Part of the work has been published on ICASSP, 2015, with the paper titled \emph{Bi-Alternating Direction Method of Multipliers over Graphs}. After careful consideration, we decide to change the name of our algorithm from \emph{bi-alternating direction method of multipliers (BiADMM)} in \cite{xiaoqiang14BiADMM} and \cite{xiaoqiang15BiADMM} to \emph{primal-dual method of multipliers (PDMM)}.}
}

\maketitle

\begin{abstract}
In this paper, we propose the primal-dual method of multipliers (PDMM) for distributed optimization over a graph. In particular, we optimize a sum of convex functions defined over a graph, where every edge in the graph carries a linear equality constraint. In designing the new algorithm, an augmented primal-dual Lagrangian function is constructed which smoothly captures the graph topology. It is shown that a saddle point of the constructed function provides an optimal solution of the original problem. Further under both the synchronous and asynchronous updating schemes, PDMM has the convergence rate of $O(1/K)$ (where $K$ denotes the iteration index) for general closed, proper and convex functions. Other properties of PDMM such as convergence speeds versus different parameter-settings and resilience to transmission failure are also investigated through the experiments of distributed averaging. 

%We then consider applying the new algorithm for distributed averaging. For the case of no transmission failure, the new algorithm remarkably outperforms the state-of-the-art methods. For the case of transmission losses, the new algorithm is robust to transmission-failure.
\end{abstract}

% IEEEtran.cls defaults to using nonbold math in the Abstract.
% This preserves the distinction between vectors and scalars. However,
% if the journal you are submitting to favors bold math in the abstract,
% then you can use LaTeX's standard command \boldmath at the very start
% of the abstract to achieve this. Many IEEE journals frown on math
% in the abstract anyway.

% Note that keywords are not normally used for peerreview papers.
\begin{IEEEkeywords}
Distributed optimization, ADMM, PDMM, sublinear convergence.
\end{IEEEkeywords}

% For peer review papers, you can put extra information on the cover
% page as needed:
% \ifCLASSOPTIONpeerreview
% \begin{center} \bfseries EDICS Category: 3-BBND \end{center}
% \fi
%
% For peerreview papers, this IEEEtran command inserts a page break and
% creates the second title. It will be ignored for other modes.
\IEEEpeerreviewmaketitle

\section{Introduction}

In recent years, distributed optimization has drawn increasing attention due to the demand for big-data processing and easy access to ubiquitous computing units (e.g., a computer, a mobile phone or a sensor equipped with a CPU). The basic idea is to have a set of computing units collaborate with each other in a distributed way to complete a complex task. Popular applications include telecommunication \cite{Richardson08Coding,xiaoqiang13ADMMLDPC}, wireless sensor networks \cite{Boyd06gossip}, cloud computing and machine learning \cite{Sontag11ML}. The research challenge is on the design of efficient and robust distributed optimization algorithms for those applications.

To the best of our knowledge, almost all the optimization problems in those applications can be formulated as optimization over a graphic model $G=(\mathcal{V},\mathcal{E})$:
\begin{align}
\min_{\{\boldsymbol{x}_i\}}\sum_{i\in \mathcal{V}}f_i(\boldsymbol{x}_i)+\sum_{(i,j)\in \mathcal{E}}f_{ij}(\boldsymbol{x}_i,\boldsymbol{x}_j),\label{equ:optProGen}
\end{align}
where $\{f_i|i\in \mathcal{V}\}$ and $\{f_{ij}|(i,j)\in \mathcal{E}\}$ are referred to as node and edge-functions, respectively. For instance, for the application of distributed quadratic optimization, all the node and edge-functions are in the form of scalar quadratic functions (see \cite{xiaoqiang12LiCDQO,Moallemi09GaBP,xiaoqiang14MSM}). %Distributed optimization over the graph is realized through message-exchange via local neighboring nodes in the graph.

In the literature, a large number of applications (see \cite{Boyd11ADMM}) require that every edge function $f_{ij}(\boldsymbol{x}_i,\boldsymbol{x}_j)$, $(i,j)\in \mathcal{E}$, is essentially a linear equality constraint in terms of $\boldsymbol{x}_i$ and $\boldsymbol{x}_j$. Mathematically, we use $\boldsymbol{A}_{i j}\boldsymbol{x}_i+\boldsymbol{A}_{j i}\boldsymbol{x}_j=\boldsymbol{c}_{ij}$ to formulate the equality constraint for each $(i,j)\in \mathcal{E}$, as demonstrated in Fig.~\ref{fig:prob_graph}. In this situation,  (\ref{equ:optProGen}) can be described as
\begin{align}
\min_{\{\boldsymbol{x}_i\}}\sum_{i\in \mathcal{V}}f_i(\boldsymbol{x}_i)+\sum_{(i,j)\in \mathcal{E}}I_{\boldsymbol{A}_{i j}\boldsymbol{x}_i+\boldsymbol{A}_{j i}\boldsymbol{x}_j=\boldsymbol{c}_{ij}}(\boldsymbol{x}_i,\boldsymbol{x}_j),\label{equ:optProMulti}
\end{align}
where $I_{(\cdot)}$ denotes the indicator or characteristic function defined as $I_\mathcal{C}(\boldsymbol{x})=0$ if $\boldsymbol{x}\in \mathcal{C}$ and $I_\mathcal{C}(\boldsymbol{x})=\infty$ if $\boldsymbol{x} \notin \mathcal{C}$.
In this paper, we focus on convex optimization of form (\ref{equ:optProMulti}),  where every node-function $f_i$ is closed, proper and convex.

\begin{figure}[tb]
\centering
\begin{footnotesize}
  \includegraphics[width=70mm]{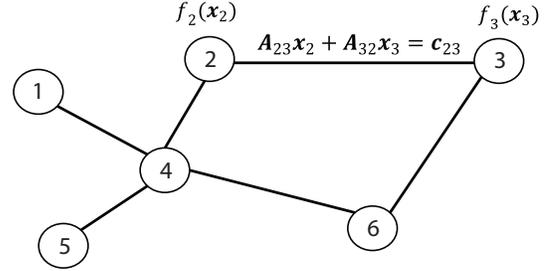}
\end{footnotesize}
\caption{\small Demonstration of Problem~(\ref{equ:optProGen}) for edge-functions being linear constraints. Every edge in the graph carries an equality constraint. %For every $(i,j)\in \mathcal{E}$, the matrices $(\boldsymbol{A}_{i j},\boldsymbol{A}_{j i})$ and the vector $\boldsymbol{c}_{ij}$ are properly set to be in line with the dimensions of $\boldsymbol{x}_i$ and $\boldsymbol{x}_j$.
} \label{fig:prob_graph}
\end{figure}

The majority of recent research have been focusing on a specialized form of the convex problem (\ref{equ:optProMulti}), where every edge-function $f_{ij}$ reduces to $I_{\boldsymbol{x}_i=\boldsymbol{x}_j}(\boldsymbol{x}_i,\boldsymbol{x}_j)$. The above problem is commonly known as the \emph{consensus problem} in the literature. Classic methods include the dual-averaging algorithm \cite{Duchi12dualAve}, the subgradient algorithm \cite{Nedic08subgradient}, the diffusion adaptation algorithm \cite{Chen12diffusion}. For the special case that $\{f_i|i\in \mathcal{V}\}$ are scalar quadratic functions (referred to as the \emph{distributed averaging} problem), the most popular methods are the randomized gossip algorithm \cite{Boyd06gossip} and the broadcast algorithm \cite{Iutzeler13gossipAlg}. See \cite{Dimakis10GossipAlg} for an overview of the literature for solving the distributed averaging problem. 

The alternating-direction method of multipliers (ADMM) can be applied to solve the general convex optimization (\ref{equ:optProMulti}). The key step is to decompose each equality constraint $\boldsymbol{A}_{i j}\boldsymbol{x}_i+\boldsymbol{A}_{j i}\boldsymbol{x}_j=\boldsymbol{c}_{ij}$ into two constraints such as $\boldsymbol{A}_{i j}\boldsymbol{x}_i+\boldsymbol{z}_{ij}=\boldsymbol{c}_{ij}$ and $ \boldsymbol{z}_{ij}=\boldsymbol{A}_{j i}\boldsymbol{x}_j$ with the help of the auxiliary variable $\boldsymbol{z}_{ij}$.  As a result, (\ref{equ:optProMulti}) can be reformulated as
\begin{align}
\min_{\boldsymbol{x},\boldsymbol{z}} f(\boldsymbol{x})+g(\boldsymbol{z})\quad \textrm{subject to}\quad \boldsymbol{A}\boldsymbol{x}+\boldsymbol{B}\boldsymbol{z}=\boldsymbol{c}, \label{equ:optProTwo}
\end{align}
where $f(\boldsymbol{x})=\sum_{i\in \mathcal{V}}f_i(\boldsymbol{x}_i)$, $g(\boldsymbol{z})=0$ and $\boldsymbol{z}$ is a vector obtained by stacking up $\boldsymbol{z}_{ij}$ one after another. See \cite{Shi14ADMM} for using ADMM to solve the consensus problem of (\ref{equ:optProMulti}) (with edge-function $I_{\boldsymbol{x}_i=\boldsymbol{x}_j}(\boldsymbol{x}_i,\boldsymbol{x}_j)$). The graphic structure is implicitly embedded in the two matrices $(\boldsymbol{A},\boldsymbol{B})$ and the vector $\boldsymbol{c}$. The reformulation essentially converts the problem on a general graph with many nodes (\ref{equ:optProMulti}) to a graph with only two nodes (\ref{equ:optProTwo}), allowing the application of ADMM. Based on (\ref{equ:optProTwo}), ADMM then constructs and optimizes an augmented Lagrangian function iteratively with respect to $(\boldsymbol{x},\boldsymbol{z})$ and a set of Lagrangian multipliers. We refer to the above procedure as synchronous ADMM as it updates all the variables at each iteration. Recently, the work of \cite{Wei13ADMM} proposed asynchronous ADMM, which optimizes the same function over a subset of the variables at each iteration.    

We note that besides solving (\ref{equ:optProMulti}), ADMM has found many successful applications in the fields of signal processing and  machine learning (see \cite{Boyd11ADMM} for an overview).  For instance, in \cite{Zhang14ADMMAsyn} and \cite{Chang15ADMMAsyn}, variants of ADMM have been proposed to solve a (possibly nonconvex) optimization problem defined over a graph with a star topology, which is motivated from big data applications. The work of \cite{Bianchi16ADMM} considers solving the consensus problem of (\ref{equ:optProMulti}) (with edge-function $I_{\boldsymbol{x}_i=\boldsymbol{x}_j}(\boldsymbol{x}_i,\boldsymbol{x}_j)$) over a general graph, where each node function $f_i$ is further expressed as a sum of two component functions. The authors of \cite{Bianchi16ADMM} propose a new algorithm which includes ADMM as a special case when one component function is zero. In general, ADMM and its variants are quite simple and often provide satisfactory results after a reasonable number of iterations, making it a popular algorithm in recent years.
% Readers who are interested in the convergence properties of ADMM may check the references \cite{Shi14ADMM, Lin15ADMM, Deng16ADMM}. 

%\cite{Wei13ADMM},
%\cite{Shi14ADMM}  
%Recently, the authors in \cite{Wei13ADMM} proposed to optimize the function block by block (referred to asynchronous ADMM).  
%\cite{Bianchi16ADMM}
%\cite{Zhang14ADMMAsyn}
%\cite{Chang15ADMMAsyn}
%I

%Also it is well known that ADMM possesses a guaranteed convergence under very mild conditions \cite{Boyd11ADMM},

%Recently, we have proposed a new algorithm for solving (\ref{equ:optProTwo}) in \cite{xiaoqiang14BiADMM}, which is referred to as the \emph {bi-alternating direction method of multipliers} (BiADMM). In this paper, we rename the new algorithm as the \emph{primal-dual method of multipliers} (PDMM). The new algorithm optimizes an augmented primal-dual Lagrangian function which is constructed based on (\ref{equ:optProTwo}). PDMM (or equivalently, BiADMM) is guaranteed to converge by following either synchronous or asynchronous updating schemes, as opposed to ADMM which has to follow the Gauss-Seidel procedure at each iteration.  

In this paper, we tackle the convex problem (\ref{equ:optProMulti}) directly instead of relying on the reformulation (\ref{equ:optProTwo}). Specifically, we construct an augmented primal-dual Lagrangian function for (\ref{equ:optProMulti}) without introducing the auxiliary variable $\boldsymbol{z}$ as is required by ADMM. We show that solving (\ref{equ:optProMulti}) is equivalent to searching for a saddle point of the augmented primal-dual Lagrangian. We then propose the primal-dual method of multipliers (PDMM) to iteratively approach one saddle point of the constructed function. It is shown that for both the synchronous and asynchronous updating schemes, the PDMM converges with the rate of $\mathcal{O}(1/K)$ for general closed, proper and convex functions. %PDMM employs broadcast transmission under perfect channel and point-to-point transmission under non-perfect channel.  

Further we evaluate PDMM through the experiments of distributed averaging. Firstly, it is found that the parameters of PDMM should be selected by a rule (see \ref{subsub:parameterSel}) for fast convergence. Secondly, when there are transmission failures in the graph, transmission losses only slow down the convergence speed of PDMM. Finally, experimental comparison suggests that PDMM outperforms ADMM and the two gossip algorithms in \cite{Boyd06gossip} and \cite{Iutzeler13gossipAlg}.

This work is mainly devoted to the theoretical analysis of PDMM.  In the literature, PDMM has already been successfully applied for solving a few other problems. The work of \cite{Heming15Thesis} investigates the efficiency of ADMM and PDMM for distributed dictionary learning. In \cite{xiaoqiang16BiADMM}, we have used both ADMM and PDMM for training a support vector machine (SVM). In the above examples it is found that PDMM outperforms ADMM in terms of convergence rate. In \cite{Sherson16LCMV_conf}, the authors describes an application of the linearly constrained minimum variance (LCMV) beamformer for use in acoustic wireless sensor networks. The proposed algorithm computes the optimal beamformer output at each node in the network without the need for sharing raw data within the network. PDMM has been successfully applied to perform distributed beamforming. This suggests that PDMM is not only theoretically interesting but also might be powerful in real applications. 

%\begin{table}
%\centering
%\begin{tabular}{|c|c|c|c|c|}
%\hline
% & \hspace{-3mm} {\small $\begin{array}{c}\textrm{strongly} \\ \textrm{convex}\end{array}$} \hspace{-3mm} & \hspace{-3mm} {\small $\begin{array}{c}\textrm{Lipschitz} \\ \textrm{continuous}\end{array}$} \hspace{-3mm} & {\small$\begin{array}{c}\textrm{full row} \\ \textrm{rank}\end{array}$}   & \hspace{-3mm} {\small $\begin{array}{c} \textrm{convergence}\\ \textrm{rates}\end{array}$} \hspace{-3mm}  \tabularnewline
%\hline \hline
%{\small 1}  & $-$ & $-$ & $-$ & {\small $\mathcal{O}(K^{-1})$} \tabularnewline
%\hline
%{\small 2} & {\small $\{f_i\}_{i=1}^m$} & {\small $\{\nabla f_i\}_{i=1}^m$} & \hspace{-3mm} {\small $\begin{array}{c}\{\boldsymbol{A}_{i j},\boldsymbol{A}_{j i}\}\\ \textrm{for all edges}\end{array}$} \hspace{-3mm} & {\small linear rate}    \tabularnewline
%\hline
%{\small 3} & {\small $\{f_i^{\ast}\}_{i=1}^m$} & {\small $\{\nabla f_i^{\ast}\}_{i=1}^m$} & \hspace{-3mm} {\small $\begin{array}{c}\{\boldsymbol{A}_{i j},\boldsymbol{A}_{j i}\}\\ \textrm{for all edges}\end{array}$} \hspace{-3mm} & {\small linear rate}    \tabularnewline
%\hline
%\end{tabular}
%\caption{\small Three scenarios for convergence rates of BiADMM. Each scenerio holds under both the synchronous and asynchronous updating schemes.}
%\label{table:scenario_converge}
%\end{table}

\vspace{-2mm}
\section{Problem Setting}
\label{sec:pre}
In this section, we first introduce basic notations needed in the rest of the paper. We then make a proper assumption about the existence of optimal solutions of the problem. Finally, we derive the dual problem to (\ref{equ:optProMulti}) and its Lagrangian function, which will be used for constructing the augmented primal-dual Lagrangian function in Section~\ref{sec:AugPDLag}.

\vspace{-2mm}
\subsection{Notations and functional properties}
\label{subsec:notation}

We first introduce notations for a graphic model. We denote a graph as $G=(\mathcal{V},\mathcal{E})$, where $\mathcal{V}=\{1,\ldots, m\}$ represents the set of nodes and $\mathcal{E}=\{(i,j)| i, j\in \mathcal{V}\}$ represents the set of edges in the graph, respectively. %Due to symmetry, if $(i,j)\in \mathcal{E}$, then $(j,i)\in \mathcal{E}$. 
We use $\vec{\mathcal{E}}$ to denote the set of all directed edges. Therefore, $|\vec{\mathcal{E}}|=2|\mathcal{E}|$. The directed edge $[i,j]$ starts from node $i$ and ends with node $j$. We use $\mathcal{N}_i$ to denote the set of all neighboring nodes of node $i$, i.e., $\mathcal{N}_i=\{j|(i,j)\in \mathcal{E}\}$. %Further, we denote the degree of node $i$ as $d_i$, accounting for the number $|\mathcal{N}_i|$ of its neighbors. 
Given a graph $G=(\mathcal{V},\mathcal{E})$, only neighboring nodes are allowed to communicate with each other directly. %That is there is no central unit that can coordinate between the nodes.

Next we introduce notations for mathematical description in the remainder of the paper. We use bold small letters to denote vectors and bold capital letters to denote matrices. %Given a real matrix $\boldsymbol{M}$, we use $\sigma_{\max}(\boldsymbol{M})$ to denote the operator that returns the largest singular value of $\boldsymbol{M}$ and $\tilde{\sigma}_{\min}(\boldsymbol{M})$ the one that returns the smallest nonzero singular value of $\boldsymbol{M}$. 
%Suppose the matrix $\boldsymbol{M}$ is symmetric. 
The notation $\boldsymbol{M}\succeq 0$ (or $\boldsymbol{M}\succ 0$) represents a symmetric positive semi-definite matrix (or a symmetric positive definite matrix). The superscript $(\cdot)^T$ represents the transpose operator. Given a vector $\boldsymbol{y}$, we use $\|\boldsymbol{y}\|$ to denote its $l_2$ norm. %As an extension, we use $\|\boldsymbol{y}\|_{\boldsymbol{M}}$ to denote the $\boldsymbol{M}$-norm of $\boldsymbol{y}$ where $\boldsymbol{M}\succ 0$, i.e., $\|\boldsymbol{y}\|_{\boldsymbol{M}}=\sqrt{\boldsymbol{y}^T\boldsymbol{My}}$.

Finally, we introduce the conjugate function. Suppose $h:\mathbb{R}^n\rightarrow \mathbb{R}\cup \{+\infty\}$ is a closed, proper and convex function. %Then there is \cite{Deng16ADMM}
%\begin{align}
%(\boldsymbol{s}_1\hspace{-0.5mm}-\hspace{-0.5mm}\boldsymbol{s}_2)^T(\boldsymbol{y}_1\hspace{-0.5mm}-\hspace{-0.5mm}\boldsymbol{y}_2)\geq %0\quad \forall \boldsymbol{y}_q, \boldsymbol{s}_q\in \partial h(\boldsymbol{y}_q)\quad q=1,2.
%\label{equ:convex_ineq}
%\end{align}
Then the conjugate of $h(\cdot)$ is defined as \cite[Definition 2.1.20]{SawaragiBook85}
\begin{align}
h^{\ast}(\boldsymbol{\delta})\stackrel{\Delta}{=} \max_{\boldsymbol{y}} \boldsymbol{\delta}^T\boldsymbol{y}-h(\boldsymbol{y}), \label{equ:conj_def}
\end{align}
where the conjugate function $h^{\ast}$ is again a closed, proper and convex function. 
Let $\boldsymbol{y}'$ be the optimal solution for a particular $\boldsymbol{\delta}'$ in (\ref{equ:conj_def}). We then  have
\begin{align}
\boldsymbol{\delta}'\in \partial_{\boldsymbol{y}}h(\boldsymbol{y}'),  \label{equ:conj_def2} 
\end{align}
where $\partial_{\boldsymbol{y}}h(\boldsymbol{y}')$ represents the set of all subgradients of $h(\cdot)$ at $\boldsymbol{y}'$ (see \cite[Definition 2.1.23]{SawaragiBook85}).  
As a consequence, since $h^{\ast\ast}=h$, we have 
\begin{align}
h(\boldsymbol{y}')=&\boldsymbol{y}'^T\boldsymbol{\delta}'-h^{\ast}(\boldsymbol{\delta}')=\max_{\boldsymbol{\delta}} \boldsymbol{y}'^T\boldsymbol{\delta}-h^{\ast}(\boldsymbol{\delta}), \label{equ:conj_def3} 
\end{align}
and we conclude that $\boldsymbol{y}'\in \partial_{\boldsymbol{\delta}}h^{\ast}(\boldsymbol{\delta}')$ as well. 

\vspace{-2mm}
\subsection{Problem assumption}
\label{subsec:proAss}
\vspace{-0mm}

With the notation $G=(\mathcal{V},\mathcal{E})$ for a graph, we first reformulate the convex problem (\ref{equ:optProMulti}) as
\begin{align}
\hspace{-0.4mm}\min_{\boldsymbol{x}}\sum_{i\in \mathcal{V}} f_i(\boldsymbol{x}_i) \textrm{ s.~t. }  \boldsymbol{A}_{i j}\boldsymbol{x}_i\hspace{-0.5mm}+\hspace{-0.6mm}\boldsymbol{A}_{j i}\boldsymbol{x}_j\hspace{-0.4mm}=\hspace{-0.4mm}\boldsymbol{c}_{ij} \textrm{ }\forall (i,j)\hspace{-0.4mm}\in\hspace{-0.4mm} \mathcal{E}, \label{equ:optProMulti_re}
\end{align}
where each function $f_i: \mathbb{R}^{n_i}\rightarrow \mathbb{R}\cup \{+\infty\} $ is assumed to be closed, proper and convex, and $\boldsymbol{x}=[\boldsymbol{x}_1^T,\boldsymbol{x}_2^T,\ldots,\boldsymbol{x}_m^T]^T$. For every edge $(i,j)\in \mathcal{E}$, we let $(\boldsymbol{c}_{ij},\boldsymbol{A}_{i j}, \boldsymbol{A}_{j i})\in (\mathbb{R}^{n_{ij}},\mathbb{R}^{n_{ij}\times n_i},\mathbb{R}^{n_{ij}\times n_j})$. The vector $\boldsymbol{x}$ is thus of dimension $n_{\boldsymbol{x}}=\sum_{i\in \mathcal{V}} n_i$. In general, $\boldsymbol{A}_{ij}$ and $\boldsymbol{A}_{ji}$ are two different matrices. The matrix $\boldsymbol{A}_{ij}$ operates on $\boldsymbol{x}_i$ in the linear constraint of edge $(i,j)\in\mathcal{E}$. The notation $\textrm{s.~t.}$ in (\ref{equ:optProMulti_re}) stands for ``subject to". We take the reformulation (\ref{equ:optProMulti_re}) as the \emph{primal} problem in terms of $\boldsymbol{x}$.

The primal Lagrangian for (\ref{equ:optProMulti_re}) can be constructed as
%\begin{align}
%L_p(\boldsymbol{x},\boldsymbol{\delta})=\sum_{(i,j)\in E}\boldsymbol{\delta}_{ij}^{\top}(\boldsymbol{c}_{ij}-\boldsymbol{A}_{i %j}\boldsymbol{x}_i-\boldsymbol{A}_{j i}\boldsymbol{x}_j)+\hspace{-1mm}\sum_{i\in V} f_i(\boldsymbol{x}_i), \label{equ:primalLag}
%\end{align}
\begin{align}
\hspace{-1.2mm}L_p(\boldsymbol{x},\boldsymbol{\delta})\hspace{-0.8mm}= \hspace{-0.5mm}\sum_{i\in \mathcal{V}}\hspace{-0.4mm} f_i(\boldsymbol{x}_i)\hspace{-0.5mm}+\hspace{-2mm}\sum_{(i,j)\in \mathcal{E}}\hspace{-1.5mm}\boldsymbol{\delta}_{ij}^{T}(\boldsymbol{c}_{ij}\hspace{-0.5mm}-\hspace{-1mm}\boldsymbol{A}_{i j}\boldsymbol{x}_i\hspace{-0.5mm}-\hspace{-1mm}\boldsymbol{A}_{j i}\boldsymbol{x}_j\hspace{-0.3mm}), \label{equ:primalLag}
\end{align}
where $\boldsymbol{\delta}_{ij}$ is the Lagrangian multiplier (or the dual variable) for the corresponding edge constraint in (\ref{equ:optProMulti_re}), and the vector $\boldsymbol{\delta}$ is obtained by stacking all the dual variables $\boldsymbol{\delta}_{ij}$, $(i,j)\in \mathcal{E}$, on top of one another. Therefore, $\boldsymbol{\delta}$ is of dimension $n_{\boldsymbol{\delta}}=\sum_{(i,j)\in \mathcal{E}}n_{ij}$. The Lagrangian function is convex in $\boldsymbol{x}$ for fixed $\boldsymbol{\delta}$, and concave in $\boldsymbol{\delta}$ for fixed $\boldsymbol{x}$. Throughout the rest of the paper, we will make the following (common) assumption:

\begin{assumption}
There exists a saddle point $(\boldsymbol{x}^{\star},\boldsymbol{\delta}^{\star})$ to the Lagrangian function $L_p(\boldsymbol{x},\boldsymbol{\delta})$ such that for all $\boldsymbol{x}\in \mathbb{R}^{n_{\boldsymbol{x}}}$ and $\boldsymbol{\delta}\in \mathbb{R}^{n_{\boldsymbol{\delta}}}$ we have
\begin{eqnarray}
L_p(\boldsymbol{x}^{\star},\boldsymbol{\delta}) \leq L_p(\boldsymbol{x}^{\star},\boldsymbol{\delta}^{\star})\leq L_p(\boldsymbol{x},\boldsymbol{\delta}^{\star}) \nonumber.
\end{eqnarray}
Or equivalently, the following optimality (KKT) conditions hold for $(\boldsymbol{x}^{\star},\boldsymbol{\delta}^{\star})$:
\begin{align}
\sum_{j\in \mathcal{N}_i}\boldsymbol{A}_{i j}^T\boldsymbol{\delta}_{ij}^{\star} \in \partial f_i(\boldsymbol{x}_i^{\star}) &\quad \forall i\in \mathcal{V} \label{equ:KKT_prim_1} \\
\boldsymbol{A}_{ji}\boldsymbol{x}_j^{\star}+\boldsymbol{A}_{ij}\boldsymbol{x}_i^{\star}=\boldsymbol{c}_{ij} &\quad \forall (i,j)\in \mathcal{E}. \label{equ:KKT_prim_2}
\end{align}\vspace{-4mm}
\label{assumption:KKT}
\end{assumption}

\subsection{Dual problem and its Lagrangian function}
\label{subsec:dualPro}
We first derive the dual problem to (\ref{equ:optProMulti_re}). Optimizing $L_p(\boldsymbol{x},\boldsymbol{\delta})$ over $\boldsymbol{\delta}$ and $\boldsymbol{x}$ yields
\begin{align}
&\max_{\boldsymbol{\delta}}\min_{\boldsymbol{x}}L_p(\boldsymbol{x},\boldsymbol{\delta}) \nonumber\\
&=\max_{\boldsymbol{\delta}}\sum_{i\in \mathcal{V}}\min_{\boldsymbol{x}_i}\Big(f_i(\boldsymbol{x}_i)\hspace{-0.5mm}-\hspace{-1.5mm}\sum_{j\in \mathcal{N}_i}\hspace{-1.5mm}\boldsymbol{\delta}_{ij}^T\boldsymbol{A}_{i j}\boldsymbol{x}_i\Big)+\hspace{-1mm}\sum_{(i,j)\in \mathcal{E}}\hspace{-1mm}\boldsymbol{\delta}_{ij}^T\boldsymbol{c}_{ij}\nonumber \\
&=\max_{\boldsymbol{\delta}} \sum_{i\in \mathcal{V}} -f_i^{\ast}\Bigg(\sum_{j\in \mathcal{N}_i}\boldsymbol{A}_{i j}^T\boldsymbol{\delta}_{ij}\Bigg)+\sum_{(i,j)\in \mathcal{E}}\boldsymbol{\delta}_{ij}^T\boldsymbol{c}_{ij},\label{equ:optProMultiDual}
\end{align}
where $f_i^{\ast}(\cdot)$ is the conjugate function of $f_i(\cdot)$ as defined in (\ref{equ:conj_def}), satisfying Fenchel's inequality
\begin{align}
f_i(\boldsymbol{x}_i)+f_i^{\ast}\Bigg(\sum_{j\in \mathcal{N}_i}\boldsymbol{A}_{i j}^T\boldsymbol{\delta}_{ij}\Bigg)\geq \hspace{-1mm}\sum_{j\in \mathcal{N}_i}\boldsymbol{\delta}_{ij}^T\boldsymbol{A}_{i j}\boldsymbol{x}_i. \label{equ:Fenchel_ineq}
\end{align}
%The dual problem (\ref{equ:optProMultiDual}) is a concave optimization problem w.r.t. $\boldsymbol{\delta}$ 
%\cite{SawaragiBook85}. Further,
Under Assumption~\ref{assumption:KKT}, the dual problem (\ref{equ:optProMultiDual}) is equivalent to the primal problem (\ref{equ:optProMulti_re}). That is suppose $(\boldsymbol{x}^{\star},\boldsymbol{\delta}^{\star})$ is a saddle point of $L_p$. Then $\boldsymbol{x}^{\star}$ solves the primal problem (\ref{equ:optProMulti_re}) and $\boldsymbol{\delta}^{\star}$ solves the dual problem (\ref{equ:optProMultiDual}).

At this point, we need to introduce auxiliary variables to decouple the node dependencies in (\ref{equ:optProMultiDual}). Indeed, every $\boldsymbol{\delta}_{ij}$, associated to edge $(i,j)$, is used by two conjugate functions $f_i^{\ast}$ and $f_j^{\ast}$. As a consequence, all conjugate functions in (\ref{equ:optProMultiDual}) are dependent on each other. To decouple the conjugate functions, we introduce for each edge $(i,j)\in \mathcal{E}$ \emph{two} auxiliary \emph{node} variables   $\boldsymbol{\lambda}_{i|j}\in \mathbb{R}^{n_{ij}}$ and $\boldsymbol{\lambda}_{j|i}\in \mathbb{R}^{n_{ij}}$, one for each node $i$ and $j$, respectively. The node variable $\boldsymbol{\lambda}_{i|j}$ is owned by and updated at node $i$ and is related to neighboring node $j$. Hence, at every node $i$ we introduce $|\mathcal{N}_i|$ new node variables. With this, we can reformulate the original dual problem as
\begin{align}
\max_{\boldsymbol{\delta},\{\boldsymbol{\lambda}_{i}\}}& -\sum_{i\in \mathcal{V}}\hspace{-0.5mm} f_i^{\ast}(\boldsymbol{A}_i^T\boldsymbol{\lambda}_i)+\hspace{-1mm}\sum_{(i,j)\in \mathcal{E}}\hspace{-1mm}\boldsymbol{\delta}_{ij}^T\boldsymbol{c}_{ij}\nonumber \\
&\textrm{ s. t. }\quad \boldsymbol{\lambda}_{i|j}=\boldsymbol{\lambda}_{j|i}=\boldsymbol{\delta}_{ij}\quad \forall (i,j)\in \mathcal{E}, \label{equ:optProMultiDual2}
\end{align}
where $\boldsymbol{\lambda}_i$ is
obtained by vertically concatenating all $\boldsymbol{\lambda}_{i|j}$, $j\in \mathcal{N}_i$, and $\boldsymbol{A}_i^T$ is obtained by horizontally concatenating all $\boldsymbol{A}_{i j}^T$, $j\in \mathcal{N}_i$. To clarify, the product $\boldsymbol{A}_i^T\boldsymbol{\lambda}_i$ in (\ref{equ:optProMultiDual2}) equals to
\begin{align}
\boldsymbol{A}_i^T\boldsymbol{\lambda}_i=\sum_{j\in \mathcal{N}_i}\boldsymbol{A}_{i j}^T\boldsymbol{\lambda}_{i|j}.\label{equ:A_lambda_relation}
\end{align}
Consequently, we let $\boldsymbol{\lambda}=[\boldsymbol{\lambda}_1^T,\boldsymbol{\lambda}_2^T,\ldots,\boldsymbol{\lambda}_m^T]^T$. In the above reformulation (\ref{equ:optProMultiDual2}), each conjugate function $f_i^{\ast}(\cdot)$ only involves the \emph{node} variable $\boldsymbol{\lambda}_i$, facilitating distributed optimization.

Next we tackle the equality constraints in (\ref{equ:optProMultiDual2}). To do so, we construct a (dual) Lagrangian function for the dual problem (\ref{equ:optProMultiDual2}), which is given by
\begin{align}
\hspace{-1.8mm}L_d'(\boldsymbol{\delta},\boldsymbol{\lambda},\boldsymbol{y})\hspace{-0.4mm}=&-\hspace{-0.5mm}\sum_{i\in \mathcal{V}}\hspace{-0.5mm} f_i^{\ast}(\boldsymbol{A}_i^T\boldsymbol{\lambda}_i)+\hspace{-1.5mm}\sum_{(i,j)\in \mathcal{E}}\hspace{-1.5mm}\boldsymbol{\delta}_{ij}^T\boldsymbol{c}_{ij}\nonumber \\
&\hspace{-1.2mm}+\hspace{-2.0mm}\sum_{(i,j)\in \mathcal{E}}\hspace{-1.4mm}\left[\boldsymbol{y}_{i|j}^T(\boldsymbol{\delta}_{ij}\hspace{-0.5mm}-\hspace{-0.6mm}\boldsymbol{\lambda}_{i|j}) \hspace{-0.5mm}+\hspace{-0.6mm}\boldsymbol{y}_{j|i}^T(\boldsymbol{\delta}_{ij}\hspace{-0.5mm}-\hspace{-0.6mm}\boldsymbol{\lambda}_{j|i})\right]\hspace{-0.4mm},
\label{equ:dualLag0}
\end{align}
where $\boldsymbol{y}$ is obtained by concatenating all the Lagrangian multipliers $\boldsymbol{y}_{i|j}$, $[i,j]\in \vec{\mathcal{E}}$, one after another.

We now argue that each Lagrangian multiplier $\boldsymbol{y}_{i|j}$, $[i,j]\in \vec{\mathcal{E}}$, in (\ref{equ:dualLag0}) can be replaced by an affine function of $\boldsymbol{x}_j$. Suppose $(\boldsymbol{x}^{\star},\boldsymbol{\delta}^{\star})$ is a saddle point of $L_{p}$. By letting $\boldsymbol{\lambda}_{i|j}^{\star}=\boldsymbol{\delta}_{ij}^{\star}$ for every $[i,j]\in \vec{\mathcal{E}}$,  Fenchel's inequality (\ref{equ:Fenchel_ineq}) must hold with equality at $(\boldsymbol{x}^{\star},\boldsymbol{\lambda}^{\star})$ from which we derive that
\begin{align}
\boldsymbol{0}&\in \partial_{\boldsymbol{\lambda}_{i|j}}\left[f_i^{\ast}(\boldsymbol{A}_i^{T}\boldsymbol{\lambda}_{i}^{\star})\right]-\boldsymbol{A}_{i j}\boldsymbol{x}_i^{\star}\nonumber \\
&=\partial_{\boldsymbol{\lambda}_{i|j}}\left[f_i^{\ast}(\boldsymbol{A}_{i}^T\boldsymbol{\lambda}_{i}^{\star})\right]+\boldsymbol{A}_{j i}\boldsymbol{x}_j^{\star}-\boldsymbol{c}_{ij}\hspace{3mm} \forall [i,j]\in\vec{\mathcal{E}}.\nonumber
\end{align}
One can then show that $(\boldsymbol{\delta}^{\star},\boldsymbol{\lambda}^{\star},\boldsymbol{y}^{\star})$ where  $\boldsymbol{y}_{i|j}^{\star}\hspace{-0.3mm}=\hspace{-0.3mm}\boldsymbol{A}_{j i}\boldsymbol{x}_j^{\star}\hspace{-0.2mm}-\hspace{-0.2mm}\boldsymbol{c}_{ij}$ for every $[i,j]\in\vec{\mathcal{E}}$, is a saddle point of $L_{d}'$. We therefore restrict the Lagrangian multiplier $\boldsymbol{y}_{i|j}$ to be of the form $\boldsymbol{y}_{i|j}=\boldsymbol{A}_{j i}\boldsymbol{x}_j-\boldsymbol{c}_{ij}$ so that the dual Lagrangian becomes
\begin{align}
\hspace{-1.8mm}L_d(\boldsymbol{\delta},\boldsymbol{\lambda},\boldsymbol{x})\hspace{-0.4mm}=&\hspace{-0.5mm}\sum_{i\in \mathcal{V}}\hspace{-0.5mm}\Big(\hspace{-0.5mm}-\hspace{-0.5mm}f_i^{\ast}(\boldsymbol{A}_i^T\boldsymbol{\lambda}_i)\hspace{-0.5mm}-\hspace{-1.2mm}\sum_{j\in \mathcal{N}_i}\hspace{-1.2mm}\boldsymbol{\lambda}_{j|i}^T(\boldsymbol{A}_{i j}\boldsymbol{x}_i\hspace{-0.2mm}-\hspace{-0.2mm}\boldsymbol{c}_{ij})\Big)\nonumber \\
&-\sum_{(i,j)\in \mathcal{E}}\boldsymbol{\delta}_{ij}^T(\boldsymbol{c}_{ij}-\boldsymbol{A}_{i j}\boldsymbol{x}_i-\boldsymbol{A}_{j i}\boldsymbol{x}_j).
\label{equ:dualLag}
\end{align}

We summarize the result in a lemma below:

\begin{lemma}
If $(\boldsymbol{x}^{\star},\boldsymbol{\delta}^{\star})$ is a saddle point of $L_p(\boldsymbol{x},\boldsymbol{\delta})$, then $(\boldsymbol{\delta}^{\star},\boldsymbol{\lambda}^{\star},\boldsymbol{x}^{\star})$ is a saddle point of $L_d(\boldsymbol{\delta},\boldsymbol{\lambda},\boldsymbol{x})$, where $\boldsymbol{\lambda}_{i|j}^{\star}=\boldsymbol{\delta}_{ij}^{\star}$ for every $[i,j]\in \vec{\mathcal{E}}$.
\label{lemma:P2DSaddlePoint}
\end{lemma}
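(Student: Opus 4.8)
The plan is to show that $(\boldsymbol{\delta}^{\star},\boldsymbol{\lambda}^{\star},\boldsymbol{x}^{\star})$ satisfies the saddle relations $L_d(\boldsymbol{\delta},\boldsymbol{\lambda},\boldsymbol{x}^{\star})\leq L_d(\boldsymbol{\delta}^{\star},\boldsymbol{\lambda}^{\star},\boldsymbol{x}^{\star})\leq L_d(\boldsymbol{\delta}^{\star},\boldsymbol{\lambda}^{\star},\boldsymbol{x})$, i.e. that $L_d$ is maximized over $(\boldsymbol{\delta},\boldsymbol{\lambda})$ and minimized over $\boldsymbol{x}$ at the candidate point. The first thing I would record is the convexity structure of $L_d$: it is jointly concave in $(\boldsymbol{\delta},\boldsymbol{\lambda})$ (affine in $\boldsymbol{\delta}$, and concave in $\boldsymbol{\lambda}$ because $-f_i^{\ast}$ is concave and enters only through the linear map $\boldsymbol{A}_i^T\boldsymbol{\lambda}_i$), and it is affine in $\boldsymbol{x}$. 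Because of this structure it suffices to verify the first-order stationarity conditions at $(\boldsymbol{\delta}^{\star},\boldsymbol{\lambda}^{\star},\boldsymbol{x}^{\star})$: a concave function attains its global maximum wherever $\boldsymbol{0}$ lies in its superdifferential, and an affine function with vanishing gradient is constant.

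For the right inequality (minimization over $\boldsymbol{x}$) I would collect the $\boldsymbol{x}_k$-dependent terms of $L_d$ and differentiate, obtaining $\nabla_{\boldsymbol{x}_k}L_d=\sum_{j\in\mathcal{N}_k}\boldsymbol{A}_{kj}^T(\boldsymbol{\delta}_{kj}-\boldsymbol{\lambda}_{j|k})$. Evaluated at the candidate point this vanishes, since $\boldsymbol{\lambda}_{j|k}^{\star}=\boldsymbol{\delta}_{jk}^{\star}=\boldsymbol{\delta}_{kj}^{\star}$ by the definition of $\boldsymbol{\lambda}^{\star}$ and the symmetry of the edge dual variable. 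As $L_d$ is affine in $\boldsymbol{x}$ with zero gradient at $(\boldsymbol{\delta}^{\star},\boldsymbol{\lambda}^{\star})$, it is in fact constant in $\boldsymbol{x}$, so the right inequality holds with equality.

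For the left inequality (maximization over $(\boldsymbol{\delta},\boldsymbol{\lambda})$ with $\boldsymbol{x}=\boldsymbol{x}^{\star}$ fixed) I would check the two stationarity conditions separately. Differentiating in $\boldsymbol{\delta}_{ij}$ gives $-(\boldsymbol{c}_{ij}-\boldsymbol{A}_{ij}\boldsymbol{x}_i^{\star}-\boldsymbol{A}_{ji}\boldsymbol{x}_j^{\star})$, which is $\boldsymbol{0}$ by primal feasibility (\ref{equ:KKT_prim_2}). The condition in $\boldsymbol{\lambda}$ is the crux. Isolating the $\boldsymbol{\lambda}_{i|j}$-dependent terms (one from $-f_i^{\ast}(\boldsymbol{A}_i^T\boldsymbol{\lambda}_i)$ via the chain rule, and one from the bilinear term $-\boldsymbol{\lambda}_{i|j}^T(\boldsymbol{A}_{ji}\boldsymbol{x}_j-\boldsymbol{c}_{ij})$), the requirement that $\boldsymbol{0}$ lie in the superdifferential reduces to finding $\boldsymbol{s}\in\partial f_i^{\ast}(\boldsymbol{A}_i^T\boldsymbol{\lambda}_i^{\star})$ with $\boldsymbol{A}_{ij}\boldsymbol{s}=\boldsymbol{c}_{ij}-\boldsymbol{A}_{ji}\boldsymbol{x}_j^{\star}$. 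Using (\ref{equ:KKT_prim_2}) the right-hand side equals $\boldsymbol{A}_{ij}\boldsymbol{x}_i^{\star}$, so it suffices to show that the single choice $\boldsymbol{s}=\boldsymbol{x}_i^{\star}$ is admissible, i.e. $\boldsymbol{x}_i^{\star}\in\partial f_i^{\ast}(\boldsymbol{A}_i^T\boldsymbol{\lambda}_i^{\star})$. Here $\boldsymbol{A}_i^T\boldsymbol{\lambda}_i^{\star}=\sum_{j\in\mathcal{N}_i}\boldsymbol{A}_{ij}^T\boldsymbol{\delta}_{ij}^{\star}$, which by the KKT condition (\ref{equ:KKT_prim_1}) is a subgradient of $f_i$ at $\boldsymbol{x}_i^{\star}$; the conjugate subgradient reciprocity recorded in (\ref{equ:conj_def2})--(\ref{equ:conj_def3}) then flips this into $\boldsymbol{x}_i^{\star}\in\partial f_i^{\ast}(\boldsymbol{A}_i^T\boldsymbol{\lambda}_i^{\star})$, as needed. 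With $\boldsymbol{0}$ in the superdifferential over $(\boldsymbol{\delta},\boldsymbol{\lambda})$, concavity delivers the left inequality.

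The main obstacle is the $\boldsymbol{\lambda}$-stationarity step. Two points require care: the bookkeeping of which terms carry $\boldsymbol{\lambda}_{i|j}$ together with the chain rule through the concatenated argument $\boldsymbol{A}_i^T\boldsymbol{\lambda}_i=\sum_{j\in\mathcal{N}_i}\boldsymbol{A}_{ij}^T\boldsymbol{\lambda}_{i|j}$, and the reciprocity $\boldsymbol{\delta}'\in\partial f(\boldsymbol{y}')\Leftrightarrow \boldsymbol{y}'\in\partial f^{\ast}(\boldsymbol{\delta}')$ used to pass between $f_i$ and $f_i^{\ast}$. It is worth stressing that a single common subgradient $\boldsymbol{s}=\boldsymbol{x}_i^{\star}$ simultaneously satisfies the stationarity condition for every neighbor $j\in\mathcal{N}_i$, which is exactly what makes the argument close; the $\boldsymbol{\delta}$- and $\boldsymbol{x}$-conditions are routine by comparison.
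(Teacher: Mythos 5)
Your proposal is correct and follows essentially the same route as the paper: the paper justifies this lemma by the sketch immediately preceding it, namely that Fenchel's inequality holds with equality at the saddle point so that $\boldsymbol{x}_i^{\star}\in\partial f_i^{\ast}(\boldsymbol{A}_i^T\boldsymbol{\lambda}_i^{\star})$, which together with primal feasibility yields the stationarity $\boldsymbol{0}\in \partial_{\boldsymbol{\lambda}_{i|j}}\left[f_i^{\ast}(\boldsymbol{A}_{i}^T\boldsymbol{\lambda}_{i}^{\star})\right]+\boldsymbol{A}_{j i}\boldsymbol{x}_j^{\star}-\boldsymbol{c}_{ij}$, exactly your key step. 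You merely make explicit what the paper leaves implicit (the $\boldsymbol{\delta}$- and $\boldsymbol{x}$-stationarity checks and the appeal to concavity/affinity to pass from stationarity to the global saddle inequalities), and your observation that one common subgradient $\boldsymbol{s}=\boldsymbol{x}_i^{\star}$ must serve all blocks $j\in\mathcal{N}_i$ simultaneously is a correct and worthwhile point of care.
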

%\begin{proof}
%If $(\boldsymbol{x}^{\star},\boldsymbol{\delta}^{\star})$ is a saddle point of $L_p(\boldsymbol{x},\boldsymbol{\delta})$, then $\boldsymbol{x}^{\star}$ solves the primal problem and $\boldsymbol{\delta}^{\star}$ solves the dual problem, so that $\boldsymbol{\lambda}_{i|j}^{\star}=\boldsymbol{\delta}_{ij}^{\star}$ for every directed edge $[i,j]\in \vec{\mathcal{E}}$. Hence, $(\boldsymbol{\delta}^{\star},\boldsymbol{\lambda}^{\star},\boldsymbol{x}^{\star})$ is a saddle point of $L_d(\boldsymbol{\delta},\boldsymbol{\lambda},\boldsymbol{x})$.
%\end{proof}

We note that $L_d(\boldsymbol{\delta},\boldsymbol{\lambda},\boldsymbol{x})$ might not be equivalent to $L_d(\boldsymbol{\delta},\boldsymbol{\lambda},\boldsymbol{y})$. By inspection of the optimality conditions of (\ref{equ:dualLag}), not every saddle point $(\boldsymbol{\delta}^{\star},\boldsymbol{\lambda}^{\star},\boldsymbol{x}^{\star})$ of $L_d$ might lead to $\{\boldsymbol{\lambda}_{i|j}^{\star}=\boldsymbol{\lambda}_{j|i}^{\star},(i,j)\in \mathcal{E}\}$ due to the generality of the matrices $\{\boldsymbol{A}_{ij},[i,j]\in \vec{\mathcal{E}}\}$. In next section we will introduce quadratic penalty functions w.r.t. $\boldsymbol{\lambda}$ to implicitly enforce the equality constraints  $\{\boldsymbol{\lambda}_{i|j}^{\star}=\boldsymbol{\lambda}_{j|i}^{\star},(i,j)\in \mathcal{E}\}$. 

%Next we identify the conditions under which a saddle point $(\boldsymbol{\delta}^{\star},\boldsymbol{\lambda}^{\star},\boldsymbol{x}^{\star})$ of $L_{d}$ leads to a saddle point $(\boldsymbol{x}^{\star},\boldsymbol{\delta}^{\star})$ of $L_p$.
%\begin{lemma} If a saddle point $(\boldsymbol{\delta}^{\star},\boldsymbol{\lambda}^{\star},\boldsymbol{x}^{\star})$ of $L_{d}$ satisfy the following conditions
%\begin{align}
%\boldsymbol{c}_{ij}-\boldsymbol{A}_{j i}\boldsymbol{x}_j^{\star} \in \partial_{\boldsymbol{\lambda}_{i|j}}\left[f_i^{\ast}(\boldsymbol{A}_{i}^T\boldsymbol{\lambda}_{i}^{\star})\right] &\quad \forall [i,j]\in \vec{\mathcal{E}} \label{equ:KKT_dual_1}  \\
%\boldsymbol{A}_{ji}\boldsymbol{x}_j^{\star}+\boldsymbol{A}_{ij}\boldsymbol{x}_i^{\star}=\boldsymbol{c}_{ij} &\quad \forall (i,j)\in \mathcal{E} \label{equ:KKT_dual_2} \\
%\boldsymbol{\lambda}_{i|j}^{\star}=\boldsymbol{\lambda}_{j|i}^{\star}=\boldsymbol{\delta}_{ij}^{\star} &\quad \forall (i,j)\in \mathcal{E}, \label{equ:KKT_dual_3}
%\end{align}
%then $(\boldsymbol{x}^{\star},\boldsymbol{\delta}^{\star})$ is a saddle point of $L_{p}$.
%\begin{proof}
%Combining (\ref{equ:KKT_dual_1})-(\ref{equ:KKT_dual_2}) produces
%\begin{align}
%\boldsymbol{A}_{ij}\boldsymbol{x}_i^{\star} \in \partial_{\boldsymbol{\lambda}_{i|j}}\left[f_i^{\ast}(\boldsymbol{A}_{i}^T\boldsymbol{\lambda}_{i}^{\star})\right] &\quad \forall [i,j]\in \vec{\mathcal{E}} 
%\end{align}
%\end{proof}
%\end{lemma}

To briefly summarize, one can alternatively solve the dual problem (\ref{equ:optProMultiDual2}) instead of the primal problem. Further, by replacing $\boldsymbol{y}$ with an affine function of $\boldsymbol{x}$ in (\ref{equ:dualLag0}), the dual Lagrangian $L_d(\boldsymbol{\delta},\boldsymbol{\lambda},\boldsymbol{x})$ share two variables $\boldsymbol{x}$ and $\boldsymbol{\boldsymbol{\delta}}$ with the primal Lagrangian $L_p(\boldsymbol{x},\boldsymbol{\delta})$. We will show in next section that the special form of $L_d$ in (\ref{equ:dualLag}) plays a crucial role for constructing the augmented primal-dual Lagrangian.

\vspace{-2mm}
\section{Augmented Primal-Dual Lagrangian}
\label{sec:AugPDLag}
\vspace{-0.5mm}
In this section, we first build and investigate a primal-dual Lagrangian from $L_p$ and $L_d$. We show that a saddle point of the primal-dual Lagrangian does not always lead to an optimal solution of the primal or the dual problem.

To address the above issue, we then construct an \emph{augmented} primal-dual Lagrangian by introducing two additional penalty functions. We show that any saddle point of the augmented primal-dual Lagrangian leads to an optimal solution of the primal and the dual problem, respectively.

%Finally, for comparison, we explain how to formulate (\ref{equ:optProMulti_re}) into the form of (\ref{equ:optProTwo}). We then briefly describe the augmented Lagrangian function, which is exploited by ADMM for solving the same problem.
\vspace{-2mm}
\subsection{Primal-dual Lagrangian}
\vspace{-1mm}

By inspection of (\ref{equ:primalLag}) and (\ref{equ:dualLag}), we see that in both $L_p$ and $L_d$, the edge variables $\boldsymbol{\delta}_{ij}$ are related to the terms $\boldsymbol{c}_{ij}-\boldsymbol{A}_{i j}\boldsymbol{x}_i-\boldsymbol{A}_{j i}\boldsymbol{x}_j$. As a consequence, if we add the primal and dual Lagrangian functions, the edge variables $\boldsymbol{\delta}_{ij}$ will cancel out and the resulting function contains node variables $\boldsymbol{x}$ and $\boldsymbol{\lambda}$ only. 
%\begin{align}
%L_{pd}(\boldsymbol{x},\boldsymbol{\lambda})=&L_p(\boldsymbol{x},\boldsymbol{\delta})+L_d(\boldsymbol{\delta},\boldsymbol{\lambda},\boldsymbol{x})
%\nonumber \\
%=&\sum_{i\in \mathcal{V}} \hspace{-0.6mm}\Big[f_i(\boldsymbol{x}_i)-\hspace{-1.5mm}\sum_{j\in \mathcal{N}(i)}\hspace{-0.5mm}\boldsymbol{\lambda}_{j|i}^T(\boldsymbol{A}_{i j}\boldsymbol{x}_i-\boldsymbol{c}_{ij})\nonumber\\
%&\hspace{7mm}-f_i^{\ast}(\boldsymbol{A}_i^T\boldsymbol{\lambda}_{i})\hspace{-0.5mm}\Big],\hspace{-0.2mm}
%\label{equ:PDLag}
%\end{align}

We hereby define the new function as the \emph{primal-dual} Lagrangian below: 
\begin{definition}
The primal-dual Lagrangian is defined as 
\begin{align}
&\hspace{-2mm}L_{pd}(\boldsymbol{x},\boldsymbol{\lambda})=L_p(\boldsymbol{x},\boldsymbol{\delta})+L_d(\boldsymbol{\delta},\boldsymbol{\lambda},\boldsymbol{x})
\nonumber \\
&\hspace{-2mm}=\sum_{i\in \mathcal{V}} \hspace{-0.6mm}\Big[f_i(\boldsymbol{x}_i)-\hspace{-1.5mm}\sum_{j\in \mathcal{N}_i}\hspace{-0.5mm}\boldsymbol{\lambda}_{j|i}^T(\boldsymbol{A}_{i j}\boldsymbol{x}_i-\boldsymbol{c}_{ij})\hspace{-0.5mm}-\hspace{-0.5mm}f_i^{\ast}(\boldsymbol{A}_i^T\boldsymbol{\lambda}_{i})\hspace{-0.5mm}\Big].\hspace{-0.2mm}
\label{equ:PDLag}
\end{align}
\end{definition}

$L_{pd}(\boldsymbol{x},\boldsymbol{\lambda})$ is convex in $\boldsymbol{x}$ for fixed $\boldsymbol{\lambda}$ and concave in $\boldsymbol{\lambda}$ for fixed $\boldsymbol{x}$, suggesting that it is essentially a saddle-point problem (see \cite{Chambolle10}, \cite{He2012PD} for solving different saddle point problems). For each edge $(i,j)\in \mathcal{E}$, the node variables $\boldsymbol{\lambda}_{i|j}$ and $\boldsymbol{\lambda}_{j|i}$ substitute the role of the edge variable $\boldsymbol{\delta}_{ij}$. The removal of $\boldsymbol{\delta}_{ij}$ enables to design a distributed algorithm that only involves node-oriented optimization (see next section for PDMM).

%For every node $i\in \mathcal{V}$, we take $\boldsymbol{x}_i$ and $\boldsymbol{\lambda}_i$ as the primal and dual variables, respectively. It is seen from (\ref{equ:PDLag}) that within $L_{pd}(\boldsymbol{x},\boldsymbol{\lambda})$, each dual variable $\boldsymbol{\lambda}_i$ is not directly related with the primal variable $\boldsymbol{x}_i$. This property is due to the careful construction of $L_{d}(\boldsymbol{\delta},\boldsymbol{\lambda},\boldsymbol{x})$. We will elaborate the computational benefit stemming of the property in next section.

%The following result shows that if $\boldsymbol{x}^{\star}$ solves the primal problem (\ref{equ:optProMulti_re}), then there exists a $\boldsymbol{\lambda}^{\star}$ such that $(\boldsymbol{x}^{\star},\boldsymbol{\lambda}^{\star})$ is a saddle point of $L_{pd}(\boldsymbol{x},\boldsymbol{\lambda})$.

Next we study the properties of saddle points of $L_{pd}(\boldsymbol{x},\boldsymbol{\lambda})$: 

\begin{lemma}
If $\boldsymbol{x}^{\star}$ solves the primal problem (\ref{equ:optProMulti_re}), then there exists a $\boldsymbol{\lambda}^{\star}$ such that $(\boldsymbol{x}^{\star},\boldsymbol{\lambda}^{\star})$ is a saddle point of $L_{pd}(\boldsymbol{x},\boldsymbol{\lambda})$.
\label{lemma:saddlePointForward}
\vspace{-4mm}
\end{lemma}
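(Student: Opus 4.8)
The plan is to exhibit the required $\boldsymbol{\lambda}^{\star}$ explicitly from the multiplier guaranteed by Assumption~\ref{assumption:KKT}, and then verify the two saddle-point inequalities through first-order (subdifferential) optimality conditions, exploiting the convex--concave structure of $L_{pd}$. Since $\boldsymbol{x}^{\star}$ solves the primal problem (\ref{equ:optProMulti_re}), Assumption~\ref{assumption:KKT} together with standard convex duality supplies a multiplier $\boldsymbol{\delta}^{\star}$ for which the KKT conditions (\ref{equ:KKT_prim_1})--(\ref{equ:KKT_prim_2}) hold. Following the construction in Lemma~\ref{lemma:P2DSaddlePoint}, I set $\boldsymbol{\lambda}_{i|j}^{\star}=\boldsymbol{\lambda}_{j|i}^{\star}=\boldsymbol{\delta}_{ij}^{\star}$ for every edge $(i,j)\in\mathcal{E}$ and collect these into $\boldsymbol{\lambda}^{\star}$; this is the claimed witness. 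Because $L_{pd}$ is convex in $\boldsymbol{x}$ for fixed $\boldsymbol{\lambda}$ and concave in $\boldsymbol{\lambda}$ for fixed $\boldsymbol{x}$, it suffices to check that $\boldsymbol{x}^{\star}$ minimizes $L_{pd}(\cdot,\boldsymbol{\lambda}^{\star})$ and that $\boldsymbol{\lambda}^{\star}$ maximizes $L_{pd}(\boldsymbol{x}^{\star},\cdot)$; each follows from the corresponding zero-in-(sub)differential condition, after which the two one-sided inequalities combine into the saddle-point relation.

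For the $\boldsymbol{x}$-side, I would observe that $\boldsymbol{x}_i$ enters $L_{pd}$ only through the $i$-th summand, so that $\partial_{\boldsymbol{x}_i}L_{pd}=\partial f_i(\boldsymbol{x}_i)-\sum_{j\in\mathcal{N}_i}\boldsymbol{A}_{ij}^T\boldsymbol{\lambda}_{j|i}$; substituting $\boldsymbol{\lambda}_{j|i}^{\star}=\boldsymbol{\delta}_{ij}^{\star}$ shows that $\boldsymbol{0}\in\partial_{\boldsymbol{x}_i}L_{pd}(\boldsymbol{x}^{\star},\boldsymbol{\lambda}^{\star})$ is precisely (\ref{equ:KKT_prim_1}), so this half is immediate.

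The more delicate $\boldsymbol{\lambda}$-side is the main obstacle, since each variable $\boldsymbol{\lambda}_{i|j}$ appears twice: once inside the conjugate term $-f_i^{\ast}(\boldsymbol{A}_i^T\boldsymbol{\lambda}_i)$ of node $i$, and once in the linear term $-\boldsymbol{\lambda}_{i|j}^T(\boldsymbol{A}_{ji}\boldsymbol{x}_j-\boldsymbol{c}_{ji})$ arising from node $j$'s summand, so its optimality condition couples the two contributions. Collecting them, the requirement for $\boldsymbol{0}$ to lie in the superdifferential with respect to $\boldsymbol{\lambda}_{i|j}$ becomes $\boldsymbol{c}_{ij}-\boldsymbol{A}_{ji}\boldsymbol{x}_j^{\star}\in\boldsymbol{A}_{ij}\,\partial f_i^{\ast}(\boldsymbol{A}_i^T\boldsymbol{\lambda}_i^{\star})$ for every $[i,j]\in\vec{\mathcal{E}}$.

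To close this, I would invoke the conjugate subgradient equivalence (\ref{equ:conj_def2})--(\ref{equ:conj_def3}): since (\ref{equ:KKT_prim_1}) reads $\boldsymbol{A}_i^T\boldsymbol{\lambda}_i^{\star}=\sum_{j\in\mathcal{N}_i}\boldsymbol{A}_{ij}^T\boldsymbol{\delta}_{ij}^{\star}\in\partial f_i(\boldsymbol{x}_i^{\star})$, the dual relation yields $\boldsymbol{x}_i^{\star}\in\partial f_i^{\ast}(\boldsymbol{A}_i^T\boldsymbol{\lambda}_i^{\star})$, so that $\boldsymbol{A}_{ij}\boldsymbol{x}_i^{\star}$ is an admissible element of $\boldsymbol{A}_{ij}\,\partial f_i^{\ast}(\boldsymbol{A}_i^T\boldsymbol{\lambda}_i^{\star})$. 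Primal feasibility (\ref{equ:KKT_prim_2}), namely $\boldsymbol{A}_{ij}\boldsymbol{x}_i^{\star}+\boldsymbol{A}_{ji}\boldsymbol{x}_j^{\star}=\boldsymbol{c}_{ij}$, then makes this element equal to $\boldsymbol{c}_{ij}-\boldsymbol{A}_{ji}\boldsymbol{x}_j^{\star}$, verifying $\boldsymbol{\lambda}_{i|j}$-optimality. Combining the $\boldsymbol{x}$- and $\boldsymbol{\lambda}$-optimality gives both saddle-point inequalities and finishes the argument. The only points needing care are the bookkeeping of which directed-edge variable sits in which summand and the routine inclusion $\boldsymbol{A}_{ij}\,\partial f_i^{\ast}(\cdot)\subseteq\partial_{\boldsymbol{\lambda}_{i|j}}[f_i^{\ast}(\boldsymbol{A}_i^T\boldsymbol{\lambda}_i)]$, which is all that the sufficiency direction requires since I need only produce one qualifying subgradient element.
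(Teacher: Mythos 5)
Your proof is correct, but it follows a genuinely different route from the paper's. The paper exploits the additive structure $L_{pd}(\boldsymbol{x},\boldsymbol{\lambda})=L_p(\boldsymbol{x},\boldsymbol{\delta})+L_d(\boldsymbol{\delta},\boldsymbol{\lambda},\boldsymbol{x})$ (valid for any $\boldsymbol{\delta}$, since the $\boldsymbol{\delta}$-terms cancel): it fixes $\boldsymbol{\delta}=\boldsymbol{\delta}^{\star}$, invokes Lemma~\ref{lemma:P2DSaddlePoint} to get that $(\boldsymbol{\delta}^{\star},\boldsymbol{\lambda}^{\star},\boldsymbol{x}^{\star})$ is a saddle point of $L_d$, and then simply adds the two saddle-point inequality chains for $L_p$ and $L_d$ to obtain the one for $L_{pd}$ — no subdifferential computation at all. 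You instead bypass Lemma~\ref{lemma:P2DSaddlePoint} and verify the first-order conditions of $L_{pd}$ directly: the $\boldsymbol{x}$-side reduces to (\ref{equ:KKT_prim_1}), and the $\boldsymbol{\lambda}_{i|j}$-side — correctly identified as the delicate part because $\boldsymbol{\lambda}_{i|j}$ appears both in $-f_i^{\ast}(\boldsymbol{A}_i^T\boldsymbol{\lambda}_i)$ and in node $j$'s linear term — closes via the conjugate-subgradient inversion (\ref{equ:conj_def2})--(\ref{equ:conj_def3}) together with primal feasibility (\ref{equ:KKT_prim_2}); your remark that only the easy inclusion $\boldsymbol{A}_{ij}\,\partial f_i^{\ast}(\cdot)\subseteq\partial_{\boldsymbol{\lambda}_{i|j}}[f_i^{\ast}(\boldsymbol{A}_i^T\boldsymbol{\lambda}_i)]$ is needed is the right observation, since you only have to exhibit one qualifying subgradient. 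What each approach buys: the paper's argument is shorter and structurally transparent given Lemma~\ref{lemma:P2DSaddlePoint}, whereas yours is self-contained (it rederives essentially the content of that lemma inline), makes the optimality conditions of $L_{pd}$ explicit, and in fact anticipates the conditions (\ref{equ:L_G_opt1})--(\ref{equ:L_G_opt4}) that the paper only writes down later for $L_{\mathcal{P}}$.
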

\begin{proof}
If $\boldsymbol{x}^{\star}$ solves the primal problem (\ref{equ:optProMulti_re}), then there exists a $\boldsymbol{\delta}^{\star}$ such that $(\boldsymbol{x}^{\star},\boldsymbol{\delta}^{\star})$ is a saddle point of $L_p(\boldsymbol{x},\boldsymbol{\delta})$ and by Lemma~\ref{lemma:P2DSaddlePoint}, there exist $\boldsymbol{\lambda}_{i|j}^{\star}=\boldsymbol{\delta}_{ij}^{\star}$ for every $[i,j]\in \vec{\mathcal{E}}$ so that $(\boldsymbol{\delta}^{\star},\boldsymbol{\lambda}^{\star},\boldsymbol{x}^{\star})$ is a saddle point of $L_d(\boldsymbol{\delta},\boldsymbol{\lambda},\boldsymbol{x})$. Hence
\begin{align}
L_{pd}(\boldsymbol{x}^{\star},\boldsymbol{\lambda})&=L_p(\boldsymbol{x}^{\star},\boldsymbol{\delta})+L_d(\boldsymbol{\delta},\boldsymbol{\lambda},\boldsymbol{x}^{\star}) \nonumber \\
                     &\leq L_p(\boldsymbol{x}^{\star},\boldsymbol{\delta}^{\star})+L_d(\boldsymbol{\delta}^{\star},\boldsymbol{\lambda}^{\star},\boldsymbol{x}^{\star}) \nonumber   \\
                     &=L_{pd}(\boldsymbol{x}^{\star},\boldsymbol{\lambda}^{\star}) \nonumber \\
                     &\leq L_p(\boldsymbol{x},\boldsymbol{\delta}^{\star})+L_d(\boldsymbol{\delta}^{\star},\boldsymbol{\lambda}^{\star},\boldsymbol{x})\nonumber
\end{align}
\hspace{30mm}$=L_{pd}(\boldsymbol{x},\boldsymbol{\lambda}^{\star}). $
\end{proof}

\vspace{0.5mm}
The fact that $(\boldsymbol{x}^{\star},\boldsymbol{\lambda}^{\star})$ is a saddle point of $L_{pd}(\boldsymbol{x},\boldsymbol{\lambda})$, however, is \emph{not} sufficient for showing $\boldsymbol{x}^{\star}$ (or $\boldsymbol{\lambda}^{\star}$) being optimal for solving the primal problem (\ref{equ:optProMulti_re}) (for solving  the dual problem (\ref{equ:optProMultiDual2})). 

\vspace{-0.5mm}
\begin{example}[$\boldsymbol{x}^{\star}$ not optimal]
Consider the following problem
\begin{align}
\min_{x_1,x_2}f_1(x_1)+f_2(x_2)\quad \textrm{s.t.}\quad x_1-x_2=0,\label{equ:examp1}
\end{align}
\vspace{-3mm}
\begin{align}
\textrm{where }\qquad f_1(x_1)=f_2(-x_1)=\left\{\begin{array}{ll}x_1-1 & x_1\geq 1 \\ 0 & \textrm{otherwise} \end{array}\right..\nonumber
\end{align}
With this, the primal Lagrangian is given by $L_p(\boldsymbol{x},\delta_{12})=f_1(x_1)+f_2(x_2)+\delta_{12}(x_2-x_1)$, so that the dual function is given by $-f_1^{\ast}(\delta_{12})-f_2^{\ast}(-\delta_{12})$, where
\begin{align}
f_1^{\ast}(\delta_{12})=f_2^{\ast}(-\delta_{12})=\left\{\begin{array}{ll}\delta_{12} & 0\leq \delta_{12}\leq 1 \\ +\infty & \textrm{otherwise}\end{array}\right..\nonumber
\end{align}
Hence, the optimal solution for the primal and dual problem is $x_1^{\star}=x_2^{\star}\in [-1,1]$ and $\delta_{12}^{\star}=0$, respectively. The primal-dual Lagrangian in this case is given by
\begin{align}
L_{pd}(\boldsymbol{x},\boldsymbol{\lambda})=&f_1(x_1)+f_2(x_2)-f_1^{\ast}(\lambda_{1|2})-f_2^{\ast}(-\lambda_{2|1})\nonumber\\
&-x_1\lambda_{2|1}+x_2\lambda_{1|2}\label{equ:Lpd_example1}.
\end{align}
One can show that every point $(x_1',x_2',\lambda_{1|2}',\lambda_{2|1}')\in \{(x_1,x_2,0,0)|-1\leq x_1,x_2\leq 1\}$ is a saddle point of $L_{pd}(\boldsymbol{x},\boldsymbol{\lambda})$, which does not necessarily lead to $x_1'=x_2'$.
\end{example}

%\begin{example}[$\boldsymbol{\lambda}^{\star}$ not optimal]
%Consider the following problem
%\begin{align}
%\min_{x_1,x_2}f_1(x_1)+f_2(x_2)\quad \textrm{s.t.}\quad x_1-x_2=0,\label{equ:examp2}
%\end{align}
%\vspace{-3mm}
%\begin{align}
%\textrm{where} \quad\qquad f_1(x_1)=f_2(-x_1)=\left\{\begin{array}{ll}x_1 & x_1\geq 0 \\ 0 & \textrm{otherwise} \end{array}\right..\nonumber
%\end{align}
%Similarly to Example~1, the dual function is given by $-f_1^{\ast}(\delta_{12})-f_2^{\ast}(-\delta_{12})$,
%where
%\begin{align}
%f_1^{\ast}(\delta_{12})=f_2^{\ast}(-\delta_{12})=\left\{\begin{array}{ll}0 & 0\leq\delta_{12}\leq 1 \\ +\infty & \textrm{otherwise} \end{array}\right..\nonumber
%\end{align}
%The optimal solution for the primal and dual problem is $x_{1}^{\star}=x_2^{\star}=0$ and $\lambda_{1|2}^{\star}=\lambda_{2|1}^{\star}\in [0,1]$, respectively.
%One can show that every point $(x_1',x_2',\lambda_{1|2}',\lambda_{2|1}')\in \{(0,0,\lambda_{1|2},\lambda_{2|1})|0\leq \lambda_{1|2},\lambda_{2|1}\leq 1\}$ is a saddle point of $L_{pd}(\boldsymbol{x},\boldsymbol{\lambda})$, which does not necessarily lead to $\lambda_{1|2}'=\lambda_{2|1}'$.
%\end{example}

It is clear from Example~1 that finding a saddle point of $L_{pd}$ does not necessarily solve the primal problem (\ref{equ:optProMulti_re}). Similarly, one can also build another example illustrating that a saddle point of $L_{pd}$ does not necessarily solve the dual problem (\ref{equ:optProMultiDual2}). %The above issue can be solved by introducing some penalty functions to $L_{pd}$, which will be explained in following subsection.

%\begin{lemma}
%Let $G=(\mathcal{V},\mathcal{E})$ be a ring-graph where $|\mathcal{V}|=3$ and $\mathcal{E}=\{(1,2),(2,3),(1,3)\}$.
%\end{lemma}
%
%\begin{proof}
%\begin{align}
%\boldsymbol{A}_{12}^T\boldsymbol{\lambda}_{2|1}^{\star}+\boldsymbol{A}_{13}^T\boldsymbol{\lambda}_{3|1}^{\star} &\in \partial f_1(\boldsymbol{x}_1^{\star})\nonumber \\
%\boldsymbol{A}_{23}^T\boldsymbol{\lambda}_{3|2}^{\star}+\boldsymbol{A}_{21}^T\boldsymbol{\lambda}_{1|2}^{\star} &\in \partial f_2(\boldsymbol{x}_2^{\star})\nonumber \\
%\boldsymbol{A}_{12}^T\boldsymbol{\lambda}_{2|1}^{\star}+\boldsymbol{A}_{13}^T\boldsymbol{\lambda}_{3|1}^{\star} &\in \partial f_3(\boldsymbol{x}_3^{\star})\nonumber \\
%\boldsymbol{A}_{12}\boldsymbol{x}_i^{\star} &\in \partial f_1^{\ast}(\boldsymbol{A}_1^T\boldsymbol{\lambda}_1^{\star})  \nonumber \\
%\boldsymbol{A}_{12}\boldsymbol{x}_i^{\star} &\in \partial f_2^{\ast}(\boldsymbol{A}_2^T\boldsymbol{\lambda}_2^{\star})  \nonumber \\
%\boldsymbol{A}_{12}\boldsymbol{x}_i^{\star} &\in \partial f_3^{\ast}(\boldsymbol{A}_3^T\boldsymbol{\lambda}_3^{\star})  \nonumber
%\end{align}
%\end{proof}
%
%\begin{lemma}
%Let $G=(\mathcal{V},\mathcal{E})$ be a ring-graph where $|\mathcal{V}|$ is odd and $\mathcal{E}=\{(i,i+1)|i=1,\ldots,m-1\}\cup \{(1,m)\}$.
%\end{lemma}

\vspace{-2mm}
\subsection{Augmented primal-dual Lagrangian}
\vspace{-1mm}

The problem that not every saddle point of $L_{pd}(\boldsymbol{x},\boldsymbol{\lambda})$ leads to an optimal point of the primal or dual problem can be solved by adding two quadratic penalty terms to $L_{pd}(\boldsymbol{x},\boldsymbol{\lambda})$ as
\begin{align}
\hspace{-1mm}L_{\mathcal{P}}(\boldsymbol{x},\boldsymbol{\lambda})=& L_{pd}(\boldsymbol{x},\boldsymbol{\lambda})+h_{\mathcal{P}_p}(\boldsymbol{x})-h_{\mathcal{P}_d}(\boldsymbol{\lambda}),
\label{equ:augPDLag}
\end{align}
%\begin{align}
%\hspace{-1mm}L_{\mathcal{P}}(\boldsymbol{x},\boldsymbol{\lambda})=& L_{pd}(\boldsymbol{x},\boldsymbol{\lambda})+h_{\mathcal{P}}(\boldsymbol{x},\boldsymbol{\lambda}) \nonumber \\
%\hspace{-1mm}=&\sum_{i\in \mathcal{V}} \hspace{-0.6mm}\Big[f_i(\boldsymbol{x}_i)-\hspace{-1.5mm}\sum_{j\in \mathcal{N}(i)}\hspace{-0.5mm}\boldsymbol{\lambda}_{j|i}^T(\boldsymbol{A}_{i j}\boldsymbol{x}_i-\boldsymbol{c}_{ij})\nonumber\\
%\hspace{-1mm}&\hspace{7mm}-f_i^{\ast}(\boldsymbol{A}_i^T\boldsymbol{\lambda}_{i})\hspace{-0.5mm}\Big]\hspace{-0.2mm}+\hspace{-0.2mm}h_{\mathcal{P}}(\boldsymbol{x},\boldsymbol{\lambda}),
%\label{equ:augPDLag}
%\end{align}
where $h_{\mathcal{P}_p}(\boldsymbol{x})$ and $h_{\mathcal{P}_d}(\boldsymbol{\lambda})$ are defined as
\begin{align}
\hspace{-1mm}h_{\mathcal{P}_p}(\boldsymbol{x})=&\hspace{-1mm}\sum_{(i,j)\in \mathcal{E}} \hspace{-0.5mm}\frac{1}{2}\left\|\boldsymbol{A}_{i j}\boldsymbol{x}_{i}+\boldsymbol{A}_{j i}\boldsymbol{x}_{j}-\boldsymbol{c}_{ij}\right\|_{\boldsymbol{P}_{p,ij}}^2\label{equ:quadFunP} \\
h_{\mathcal{P}_d}(\boldsymbol{\lambda})=&\sum_{(i,j)\in \mathcal{E}}\frac{1}{2}\left\|\boldsymbol{\lambda}_{i|j}-\boldsymbol{\lambda}_{j|i}\right\|_{\boldsymbol{P}_{d,ij}}^2\hspace{-1mm},
\label{equ:quadFunD}
\end{align}
where $\mathcal{P}=\mathcal{P}_p\cup\mathcal{P}_d$ and
\begin{align}
\mathcal{P}_p&=\{\boldsymbol{P}_{p,ij}^T=\boldsymbol{P}_{p,ij}\succ 0|(i,j)\in \mathcal{E}\}\nonumber \\
\mathcal{P}_d&=\{\boldsymbol{P}_{d,ij}^T=\boldsymbol{P}_{d,ij}\succ 0 | (i,j)\in \mathcal{E}\}.\nonumber 
\end{align}
The set $\mathcal{P}$ of $2|\mathcal{E}|$ positive definite matrices remains to be specified. %Since every edge $(i,j)\in \mathcal{E}$ is associated with two matrices $\boldsymbol{P}_{p,ij}$ and $\boldsymbol{P}_{d,ij}$, the total number of matrices is $|\mathcal{P}|=2|\mathcal{E}|$. % (see Fig.~\ref{fig:var_mat_demo} for illustration).

Let ${X}=\{\boldsymbol{x}| \boldsymbol{A}_{i j}\boldsymbol{x}_i+\boldsymbol{A}_{j i}\boldsymbol{x}_j=\boldsymbol{c}_{ij}, \forall (i,j)\in \mathcal{E}\}$ and ${\Lambda}=\{\boldsymbol{\lambda}|\boldsymbol{\lambda}_{i|j}=\boldsymbol{\lambda}_{j|i},\forall (i,j)\in \mathcal{E}\}$ denote the primal and dual feasible set, respectively. It is clear that $h_{\mathcal{P}_p}(\boldsymbol{x})\geq 0$ (or $-h_{\mathcal{P}_d}(\boldsymbol{\lambda})\leq 0$ ) with equality if and only if $\boldsymbol{x}\in X$ (or $\boldsymbol{\lambda}\in \Lambda$). The introduction of the two penalty functions essentially prevents non-feasible $\boldsymbol{x}$ and/or $\boldsymbol{\lambda}$ to correspond to saddle points of $L_{\mathcal{P}}(\boldsymbol{x},\boldsymbol{\lambda})$. As a consequence, we have a saddle point theorem for $L_{\mathcal{P}}$ which states that $\boldsymbol{x}^{\star}$ solves the primal problem (\ref{equ:optProMulti_re}) if and only if there exits $\boldsymbol{\lambda}^{\star}$ such that $(\boldsymbol{x}^{\star},\boldsymbol{\lambda}^{\star})$ is a saddle point of $L_{\mathcal{P}}(\boldsymbol{x},\boldsymbol{\lambda})$. To prove this result, we need the following lemma.

%\begin{figure}[tb]
%\centering
%\begin{footnotesize}
  %\includegraphics[width=70mm]{variable_matrix_demo.eps}
%\end{footnotesize}
%\caption{\small Variable and parameter distribution over the graphic model of Fig.~\ref{fig:prob_graph} for $L_{\mathcal{P}}(\boldsymbol{x},\boldsymbol{\lambda})$.  %The two matrices $\boldsymbol{P}_{p,ij}$ and $\boldsymbol{P}_{d,ij}$ for every edge $(i,j)\in \mathcal{E}$ are to be specified.
%} \label{fig:var_mat_demo}
%\end{figure}

%Finally, we note that each of the three functions on the right hand side of (\ref{equ:augPDLag}) is convex w.r.t. $\boldsymbol{x}$ and concave w.r.t. $(\boldsymbol{\lambda},\boldsymbol{\delta})$. As a result, the function $L_{\mathcal{P}}(\boldsymbol{x},\boldsymbol{\lambda})$ is convex in $\boldsymbol{x}$ for fixed $\boldsymbol{\lambda}$ and concave in $\boldsymbol{\lambda}$ for fixed $\boldsymbol{x}$.

\begin{lemma}
Let $(\boldsymbol{x}^{\star},\boldsymbol{\lambda}^{\star})$ and $(\boldsymbol{x}',\boldsymbol{\lambda}')$ be two saddle points of $L_{\mathcal{P}}(\boldsymbol{x},\boldsymbol{\lambda})$. Then
\begin{align}
L_{\mathcal{P}}(\boldsymbol{x}',\boldsymbol{\lambda}^{\star})
\hspace{-0.3mm}=\hspace{-0.3mm}L_{\mathcal{P}}(\boldsymbol{x}',\boldsymbol{\lambda}')
\hspace{-0.3mm}=\hspace{-0.3mm}L_{\mathcal{P}}(\boldsymbol{x}^{\star},\boldsymbol{\lambda}^{\star})
\hspace{-0.3mm}=\hspace{-0.3mm}L_{\mathcal{P}}(\boldsymbol{x}^{\star},\boldsymbol{\lambda}').
\label{equ:saddlePointSet}
\end{align}
Further, $(\boldsymbol{x}',\boldsymbol{\lambda}^{\star})$ and $(\boldsymbol{x}^{\star},\boldsymbol{\lambda}')$ are two saddle points of $L_{\mathcal{P}}(\boldsymbol{x},\boldsymbol{\lambda})$ as well.
\label{lemma:saddlePointSet}
\end{lemma}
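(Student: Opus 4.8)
The plan is to use only the defining two-sided inequality of a saddle point, together with the fact (already noted for $L_{pd}$ and inherited by $L_{\mathcal{P}}$, since $h_{\mathcal{P}_p}$ is convex in $\boldsymbol{x}$ and $-h_{\mathcal{P}_d}$ is concave in $\boldsymbol{\lambda}$) that $L_{\mathcal{P}}$ is convex in $\boldsymbol{x}$ for fixed $\boldsymbol{\lambda}$ and concave in $\boldsymbol{\lambda}$ for fixed $\boldsymbol{x}$. The argument is in fact purely order-theoretic and needs nothing about the specific form of $L_{\mathcal{P}}$ beyond the two saddle-point hypotheses. First I would record the inequalities obtained by testing each saddle point against the coordinates of the other. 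From $(\boldsymbol{x}^{\star},\boldsymbol{\lambda}^{\star})$ being a saddle point, evaluating its condition at $\boldsymbol{\lambda}=\boldsymbol{\lambda}'$ and at $\boldsymbol{x}=\boldsymbol{x}'$ gives
\[ L_{\mathcal{P}}(\boldsymbol{x}^{\star},\boldsymbol{\lambda}') \leq L_{\mathcal{P}}(\boldsymbol{x}^{\star},\boldsymbol{\lambda}^{\star}) \leq L_{\mathcal{P}}(\boldsymbol{x}',\boldsymbol{\lambda}^{\star}), \]
while from $(\boldsymbol{x}',\boldsymbol{\lambda}')$ being a saddle point, evaluating its condition at $\boldsymbol{\lambda}=\boldsymbol{\lambda}^{\star}$ and at $\boldsymbol{x}=\boldsymbol{x}^{\star}$ gives
\[ L_{\mathcal{P}}(\boldsymbol{x}',\boldsymbol{\lambda}^{\star}) \leq L_{\mathcal{P}}(\boldsymbol{x}',\boldsymbol{\lambda}') \leq L_{\mathcal{P}}(\boldsymbol{x}^{\star},\boldsymbol{\lambda}'). \]

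Concatenating the two chains produces the cyclic string $L_{\mathcal{P}}(\boldsymbol{x}^{\star},\boldsymbol{\lambda}') \leq L_{\mathcal{P}}(\boldsymbol{x}^{\star},\boldsymbol{\lambda}^{\star}) \leq L_{\mathcal{P}}(\boldsymbol{x}',\boldsymbol{\lambda}^{\star}) \leq L_{\mathcal{P}}(\boldsymbol{x}',\boldsymbol{\lambda}') \leq L_{\mathcal{P}}(\boldsymbol{x}^{\star},\boldsymbol{\lambda}')$, which begins and ends at the same value. Hence every inequality in the loop must hold with equality, and this is precisely the four-way equality~(\ref{equ:saddlePointSet}).

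Given these equalities, I would verify that the two cross points are saddle points directly from the definition. For $(\boldsymbol{x}',\boldsymbol{\lambda}^{\star})$ the $\boldsymbol{x}$-minimality $L_{\mathcal{P}}(\boldsymbol{x}',\boldsymbol{\lambda}^{\star}) \leq L_{\mathcal{P}}(\boldsymbol{x},\boldsymbol{\lambda}^{\star})$ for all $\boldsymbol{x}$ follows by combining the equality $L_{\mathcal{P}}(\boldsymbol{x}',\boldsymbol{\lambda}^{\star})=L_{\mathcal{P}}(\boldsymbol{x}^{\star},\boldsymbol{\lambda}^{\star})$ with the right half of the saddle condition at $(\boldsymbol{x}^{\star},\boldsymbol{\lambda}^{\star})$; the $\boldsymbol{\lambda}$-maximality $L_{\mathcal{P}}(\boldsymbol{x}',\boldsymbol{\lambda}) \leq L_{\mathcal{P}}(\boldsymbol{x}',\boldsymbol{\lambda}^{\star})$ for all $\boldsymbol{\lambda}$ follows by combining the equality $L_{\mathcal{P}}(\boldsymbol{x}',\boldsymbol{\lambda}^{\star})=L_{\mathcal{P}}(\boldsymbol{x}',\boldsymbol{\lambda}')$ with the left half of the saddle condition at $(\boldsymbol{x}',\boldsymbol{\lambda}')$. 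The point $(\boldsymbol{x}^{\star},\boldsymbol{\lambda}')$ is handled by the symmetric pairing, using the equalities $L_{\mathcal{P}}(\boldsymbol{x}^{\star},\boldsymbol{\lambda}')=L_{\mathcal{P}}(\boldsymbol{x}',\boldsymbol{\lambda}')$ and $L_{\mathcal{P}}(\boldsymbol{x}^{\star},\boldsymbol{\lambda}')=L_{\mathcal{P}}(\boldsymbol{x}^{\star},\boldsymbol{\lambda}^{\star})$ against the appropriate halves of the two original saddle conditions.

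I do not expect a genuine obstacle: the statement is the classical interchangeability property of saddle points and reduces to careful bookkeeping. The only point demanding attention is orienting each inequality correctly — keeping straight which argument is being minimized and which maximized — so that the four inequalities close into a single cycle rather than an inconsistent chain. Once the cycle is in place, the collapse to equality is automatic, and the verification of the two cross saddle points is immediate.
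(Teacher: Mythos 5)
Your proposal is correct and follows essentially the same route as the paper: the same two inequality chains from the two saddle-point hypotheses, the same closure into a cycle forcing equality throughout, and the same combination of the resulting equalities with the appropriate halves of the original saddle conditions to verify the cross points. No gaps.
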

\begin{proof}
Since $(\boldsymbol{x}^{\star},\boldsymbol{\lambda}^{\star})$ and $(\boldsymbol{x}',\boldsymbol{\lambda}')$ are two saddle points of $L_{\mathcal{P}}(\boldsymbol{x},\boldsymbol{\lambda})$, we have
\begin{align}
L_{\mathcal{P}}(\boldsymbol{x}',\boldsymbol{\lambda}^{\star})
\leq & L_{\mathcal{P}}(\boldsymbol{x}',\boldsymbol{\lambda}')
\leq L_{\mathcal{P}}(\boldsymbol{x}^{\star},\boldsymbol{\lambda}') \nonumber \\
L_{\mathcal{P}}(\boldsymbol{x}^{\star},\boldsymbol{\lambda}')
\leq & L_{\mathcal{P}}(\boldsymbol{x}^{\star},\boldsymbol{\lambda}^{\star})
\leq L_{\mathcal{P}}(\boldsymbol{x}',\boldsymbol{\lambda}^{\star}).\nonumber
\end{align}
Combining the above two inequality chains produces (\ref{equ:saddlePointSet}). In order to show that $(\boldsymbol{x}',\boldsymbol{\lambda}^{\star})$ is a saddle point, we have $L_{\mathcal{P}}(\boldsymbol{x}',\boldsymbol{\lambda})
\leq L_{\mathcal{P}}(\boldsymbol{x}',\boldsymbol{\lambda}')=L_{\mathcal{P}}(\boldsymbol{x}',\boldsymbol{\lambda}^{\star})
=L_{\mathcal{P}}(\boldsymbol{x}^{\star},\boldsymbol{\lambda}^{\star})\leq L_{\mathcal{P}}(\boldsymbol{x},\boldsymbol{\lambda}^{\star})$. The proof for $(\boldsymbol{x}^{\star},\boldsymbol{\lambda}')$ is similar.  % Similarly, we can show $(\boldsymbol{x}^{\star},\boldsymbol{\lambda}')$ is another saddle point.
\end{proof}

We are ready to prove the saddle point theorem for $L_{\mathcal{P}}(\boldsymbol{x},\boldsymbol{\lambda})$.

\begin{theorem} If $\boldsymbol{x}^{\star}$ solves the primal problem (\ref{equ:optProMulti_re}), there exists $\boldsymbol{\lambda}^{\star}$ such that $(\boldsymbol{x}^{\star},\boldsymbol{\lambda}^{\star})$ is a saddle point of $L_{\mathcal{P}}(\boldsymbol{x},\boldsymbol{\lambda})$. Conversely, if $(\boldsymbol{x}',\boldsymbol{\lambda}')$ is a saddle point of $L_{\mathcal{P}}(\boldsymbol{x},\boldsymbol{\lambda})$, then $\boldsymbol{x}'$ and $\boldsymbol{\lambda}'$ solves the primal and the dual problem, respectively. Or equivalently, the following optimality conditions hold
\begin{align}
\sum_{j\in \mathcal{N}_i}\hspace{-1.2mm}\boldsymbol{A}_{i j}^T\boldsymbol{\lambda}_{j|i}'\hspace{-0.4mm}\in \hspace{-0.4mm} \partial_{\boldsymbol{x}_i} f_i(\boldsymbol{x}_i') \hspace{2mm} & \forall i\hspace{-0.3mm}\in\hspace{-0.3mm} \mathcal{V} \label{equ:L_G_opt1} \\
\boldsymbol{A}_{i j}\boldsymbol{x}_i'+\boldsymbol{A}_{j i}\boldsymbol{x}_j'-\boldsymbol{c}_{ij}=\boldsymbol{0} \hspace{2mm}  & \forall (i,j)\hspace{-0.3mm}\in\hspace{-0.3mm} \mathcal{E}  \label{equ:L_G_opt3}\\
\boldsymbol{\lambda}_{i|j}'-\boldsymbol{\lambda}_{j|i}'=\boldsymbol{0} \hspace{2mm} & \forall (i,j)\hspace{-0.3mm}\in\hspace{-0.3mm} \mathcal{E}. \label{equ:L_G_opt4}
\end{align}
\label{thoerem:saddlePointGen}
\vspace{-5mm}
\end{theorem}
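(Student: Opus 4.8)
The plan is to prove the two implications separately, exploiting the feasibility-forcing role of the penalty terms $h_{\mathcal{P}_p}$ and $h_{\mathcal{P}_d}$ together with Lemma~\ref{lemma:saddlePointSet}. For the forward implication, suppose $\boldsymbol{x}^{\star}$ solves the primal problem. By Lemma~\ref{lemma:saddlePointForward} there is a $\boldsymbol{\lambda}^{\star}$ making $(\boldsymbol{x}^{\star},\boldsymbol{\lambda}^{\star})$ a saddle point of $L_{pd}$, and by the construction in Lemma~\ref{lemma:P2DSaddlePoint} we have $\boldsymbol{\lambda}_{i|j}^{\star}=\boldsymbol{\delta}_{ij}^{\star}=\boldsymbol{\lambda}_{j|i}^{\star}$, so $\boldsymbol{\lambda}^{\star}\in\Lambda$; moreover $\boldsymbol{x}^{\star}\in X$. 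Hence both penalty terms vanish at $(\boldsymbol{x}^{\star},\boldsymbol{\lambda}^{\star})$ and $L_{\mathcal{P}}(\boldsymbol{x}^{\star},\boldsymbol{\lambda}^{\star})=L_{pd}(\boldsymbol{x}^{\star},\boldsymbol{\lambda}^{\star})$. The two saddle-point inequalities for $L_{\mathcal{P}}$ then follow from those for $L_{pd}$: since $h_{\mathcal{P}_p}(\boldsymbol{x})\ge 0$ with $h_{\mathcal{P}_p}(\boldsymbol{x}^{\star})=0$, and $-h_{\mathcal{P}_d}(\boldsymbol{\lambda})\le 0$ with $h_{\mathcal{P}_d}(\boldsymbol{\lambda}^{\star})=0$, appending the penalty contributions preserves the direction of each inequality.

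For the converse, the key is to first establish feasibility of an arbitrary saddle point $(\boldsymbol{x}',\boldsymbol{\lambda}')$ of $L_{\mathcal{P}}$. By Assumption~\ref{assumption:KKT} a primal optimum $\boldsymbol{x}^{\star}$ exists, so the forward part already supplies a feasible saddle point $(\boldsymbol{x}^{\star},\boldsymbol{\lambda}^{\star})$ of $L_{\mathcal{P}}$ (with $\boldsymbol{x}^{\star}\in X$, $\boldsymbol{\lambda}^{\star}\in\Lambda$) that is simultaneously a saddle point of $L_{pd}$. Applying Lemma~\ref{lemma:saddlePointSet} to the pair $(\boldsymbol{x}',\boldsymbol{\lambda}')$ and $(\boldsymbol{x}^{\star},\boldsymbol{\lambda}^{\star})$ shows that $(\boldsymbol{x}',\boldsymbol{\lambda}^{\star})$ and $(\boldsymbol{x}^{\star},\boldsymbol{\lambda}')$ are also saddle points of $L_{\mathcal{P}}$ with all four function values equal. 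Evaluating $L_{\mathcal{P}}(\boldsymbol{x}',\boldsymbol{\lambda}^{\star})=L_{pd}(\boldsymbol{x}',\boldsymbol{\lambda}^{\star})+h_{\mathcal{P}_p}(\boldsymbol{x}')$, equating it to $L_{\mathcal{P}}(\boldsymbol{x}^{\star},\boldsymbol{\lambda}^{\star})=L_{pd}(\boldsymbol{x}^{\star},\boldsymbol{\lambda}^{\star})$, and using $L_{pd}(\boldsymbol{x}^{\star},\boldsymbol{\lambda}^{\star})\le L_{pd}(\boldsymbol{x}',\boldsymbol{\lambda}^{\star})$ (saddle point of $L_{pd}$ in $\boldsymbol{x}$), forces $h_{\mathcal{P}_p}(\boldsymbol{x}')\le 0$, hence $h_{\mathcal{P}_p}(\boldsymbol{x}')=0$ and $\boldsymbol{x}'\in X$, which is (\ref{equ:L_G_opt3}). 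Symmetrically, comparing $L_{\mathcal{P}}(\boldsymbol{x}^{\star},\boldsymbol{\lambda}')=L_{pd}(\boldsymbol{x}^{\star},\boldsymbol{\lambda}')-h_{\mathcal{P}_d}(\boldsymbol{\lambda}')$ with $L_{pd}(\boldsymbol{x}^{\star},\boldsymbol{\lambda}')\le L_{pd}(\boldsymbol{x}^{\star},\boldsymbol{\lambda}^{\star})$ forces $h_{\mathcal{P}_d}(\boldsymbol{\lambda}')=0$ and $\boldsymbol{\lambda}'\in\Lambda$, which is (\ref{equ:L_G_opt4}).

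With feasibility in hand, I would finish by reading off the remaining optimality condition and matching it to the primal KKT system. Writing the stationarity of the convex map $L_{\mathcal{P}}(\cdot,\boldsymbol{\lambda}')$ at its minimizer $\boldsymbol{x}'$ yields, componentwise, $\boldsymbol{0}\in\partial f_i(\boldsymbol{x}_i')-\sum_{j\in\mathcal{N}_i}\boldsymbol{A}_{ij}^T\boldsymbol{\lambda}_{j|i}'+\sum_{j\in\mathcal{N}_i}\boldsymbol{A}_{ij}^T\boldsymbol{P}_{p,ij}(\boldsymbol{A}_{ij}\boldsymbol{x}_i'+\boldsymbol{A}_{ji}\boldsymbol{x}_j'-\boldsymbol{c}_{ij})$; since $\boldsymbol{x}'\in X$ the penalty gradient vanishes, leaving exactly (\ref{equ:L_G_opt1}). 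Finally, setting $\boldsymbol{\delta}_{ij}':=\boldsymbol{\lambda}_{i|j}'=\boldsymbol{\lambda}_{j|i}'$ (equal by (\ref{equ:L_G_opt4})), conditions (\ref{equ:L_G_opt1}) and (\ref{equ:L_G_opt3}) become precisely the primal KKT conditions (\ref{equ:KKT_prim_1})--(\ref{equ:KKT_prim_2}) for $(\boldsymbol{x}',\boldsymbol{\delta}')$; by Assumption~\ref{assumption:KKT} this makes $(\boldsymbol{x}',\boldsymbol{\delta}')$ a saddle point of $L_p$, so $\boldsymbol{x}'$ solves the primal and $\boldsymbol{\delta}'$ — equivalently $\boldsymbol{\lambda}'$ — solves the dual. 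I expect the main obstacle to be the feasibility step: without the penalties a saddle point of $L_{pd}$ need not be feasible (as Example~1 shows), so the crux is the mixing argument via Lemma~\ref{lemma:saddlePointSet}, which transfers feasibility from the known optimal saddle point $(\boldsymbol{x}^{\star},\boldsymbol{\lambda}^{\star})$ to the arbitrary one $(\boldsymbol{x}',\boldsymbol{\lambda}')$; the subsequent subgradient bookkeeping is routine.
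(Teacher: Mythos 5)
Your proposal is correct and follows essentially the same route as the paper: the forward direction via Lemma~\ref{lemma:saddlePointForward} and the vanishing of the penalties on the feasible sets, and the converse via the mixing identity of Lemma~\ref{lemma:saddlePointSet} to force $h_{\mathcal{P}_p}(\boldsymbol{x}')=0$ and $h_{\mathcal{P}_d}(\boldsymbol{\lambda}')=0$, followed by reading off the stationarity condition. The only cosmetic difference is that you verify the primal KKT system for the pair $(\boldsymbol{x}',\boldsymbol{\lambda}')$ directly, whereas the paper extracts the subgradient condition from the mixed saddle point $(\boldsymbol{x}',\boldsymbol{\lambda}^{\star})$; both close the argument the same way.
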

\begin{proof}
If $\boldsymbol{x}^{\star}$ solves the primal problem, then there exists a $\boldsymbol{\lambda}^{\star}$ such that $(\boldsymbol{x}^{\star},\boldsymbol{\lambda}^{\star})$ is a saddle point of $L_{pd}$ by Lemma~\ref{lemma:saddlePointForward}. Since $\boldsymbol{x}^{\star}\in X$ and $\boldsymbol{\lambda}^{\star}\in \Lambda$, we have $h_{\mathcal{P}_p}(\boldsymbol{x}^{\star})-h_{\mathcal{P}_d}(\boldsymbol{\lambda}^{\star})\hspace{-0.2mm}=0$, $\partial_{\boldsymbol{x}}h_{\mathcal{P}_p}(\boldsymbol{x}^{\star})\hspace{-0.2mm}=\hspace{-0.2mm}\boldsymbol{0}$ and $\partial_{\boldsymbol{\lambda}}h_{\mathcal{P}_d}(\boldsymbol{\lambda}^{\star})\hspace{-0.2mm}=\hspace{-0.2mm}\boldsymbol{0}$, from which we conclude that $(\boldsymbol{x}^{\star},\boldsymbol{\lambda}^{\star})$ is a saddle point of $L_{\mathcal{P}}(\boldsymbol{x},\boldsymbol{\lambda})$ as well.

Conversely, let $(\boldsymbol{x}',\boldsymbol{\lambda}')$ be a saddle point of $L_{\mathcal{P}}(\boldsymbol{x},\boldsymbol{\lambda})$. We first show that $\boldsymbol{x}'$ solves the primal problem. We have from Lemma~\ref{lemma:saddlePointSet} that $L_{\mathcal{P}}(\boldsymbol{x}',\boldsymbol{\lambda}^{\star})=L_{\mathcal{P}}(\boldsymbol{x}^{\star},\boldsymbol{\lambda}^{\star})$, which can be simplified as %By using %$(\boldsymbol{x}^{\star},\boldsymbol{\lambda}^{\star})\in (X,\Lambda)$ 
%$\boldsymbol{x}^{\star}\in X$ and $\boldsymbol{\lambda}^{\star}\in \Lambda$, we have
\begin{align}
&L_p(\boldsymbol{x}',\boldsymbol{\delta}^{\star})+L_d(\boldsymbol{\delta}^{\star},\boldsymbol{\lambda}^{\star},\boldsymbol{x}')+h_{\mathcal{P}_p}(\boldsymbol{x}')\nonumber \\
&=L_p(\boldsymbol{x}^{\star},\boldsymbol{\delta}^{\star})+L_d(\boldsymbol{\delta}^{\star},\boldsymbol{\lambda}^{\star},\boldsymbol{x}^{\star}),\nonumber \end{align}
from which we conclude that $h_{\mathcal{P}_p}(\boldsymbol{x}')=L_p(\boldsymbol{x}^{\star},\boldsymbol{\delta}^{\star})-L_p(\boldsymbol{x}',\boldsymbol{\delta}^{\star})\leq 0$ and thus $h_{\mathcal{P}_p}(\boldsymbol{x}')=0$ so that $\boldsymbol{x}'\in X$. In addition, since $(\boldsymbol{x}',\boldsymbol{\lambda}^{\star})$ is a saddle point of $L_{\mathcal{P}}(\boldsymbol{x},\boldsymbol{\lambda})$ by Lemma~\ref{lemma:saddlePointSet}, we have
\begin{align}
\sum_{j\in \mathcal{N}_i}\boldsymbol{A}_{i j}^T\boldsymbol{\delta}_{ij}^{\star}=\sum_{j\in \mathcal{N}_i}\boldsymbol{A}_{i j}^T\boldsymbol{\lambda}_{j|i}^{\star}\in \partial_{\boldsymbol{x}_i}f_i(\boldsymbol{x}_i'), \forall i\in \mathcal{V},\nonumber
\end{align}
and we conclude that $\boldsymbol{x}'$ solves the primal problem as required. Similarly, one can show that $\boldsymbol{\lambda}'$ solves the dual problem.

Based on the above analysis, we conclude that the optimality conditions for $(\boldsymbol{x}',\boldsymbol{\lambda}')$ being a saddle point of $L_{\mathcal{P}}$ are given by (\ref{equ:L_G_opt1})-(\ref{equ:L_G_opt4}). 
%where (\ref{equ:L_G_opt1}) can be easily derived by using the fact that  $$ the following inequality
%$L_{\mathcal{P}}(\boldsymbol{x}',\boldsymbol{\lambda})\leq L_{\mathcal{P}}(\boldsymbol{x}',\boldsymbol{\lambda}')$.  
%$L_{\mathcal{P}}(\boldsymbol{x}',\boldsymbol{\lambda}')\leq  L_{\mathcal{P}}(\boldsymbol{x},\boldsymbol{\lambda}'). $
The set of optimality conditions 
$\{\boldsymbol{c}_{ij}\hspace{-0.5mm}-\hspace{-0.5mm}\boldsymbol{A}_{j i}\boldsymbol{x}_j' \hspace{-0.4mm}\in \hspace{-0.4mm} \partial_{\boldsymbol{\lambda}_{i|j}}\left[ f_i^{\ast}(\boldsymbol{A}_{i}^T \boldsymbol{\lambda}_{i}')\right] |[i,j] \in\hspace{-0.3mm} \vec{\mathcal{E}} \} $ is redundant and can be derived from (\ref{equ:L_G_opt1})-(\ref{equ:L_G_opt4}) (see (\ref{equ:conj_def})-(\ref{equ:conj_def3}) for the argument). 
%It is straightforward that (\ref{equ:L_G_opt1}) and (\ref{equ:L_G_opt3})-(\ref{equ:L_G_opt4}) are equivalent to (\ref{equ:KKT_prim_1})-(\ref{equ:KKT_prim_2}) by letting $\boldsymbol{\delta}_{ij}^{\star}=\boldsymbol{\lambda}_{i|j}^{\star}$ for every $(i,j)\in \mathcal{E}$, from which we conclude that $(\boldsymbol{x}^{\star},\boldsymbol{\delta}^{\star})$ is a saddle point of $L_p(\boldsymbol{x},\boldsymbol{\delta})$, and thus $\boldsymbol{x}^{\star}$ solves the primal problem (\ref{equ:optProMulti}).
\end{proof}

Theorem~\ref{thoerem:saddlePointGen} states that instead of solving the primal problem (\ref{equ:optProMulti_re}) or the dual problem (\ref{equ:optProMultiDual2}), one can alternatively search for a saddle point of $L_{\mathcal{P}}(\boldsymbol{x},\boldsymbol{\lambda})$. To briefly summarize, we consider solving the following min-max problem in the rest of the paper
\begin{align}
(\boldsymbol{x}^{\star},\boldsymbol{\lambda}^{\star})=\arg\min_{\boldsymbol{x}}\max_{\boldsymbol{\lambda}}L_{\mathcal{P}}(\boldsymbol{x},\boldsymbol{\lambda}). \label{equ:minMaxGen}
\end{align}
We will explain in next section how to iteratively approach the saddle point $(\boldsymbol{x}^{\star},\boldsymbol{\lambda}^{\star})$ in a distributed manner.

\section{Primal-Dual Method of Multipliers}
\label{sec:BiADMM}
In this section, we present a new algorithm named \emph{primal-dual method of multipliers} (PDMM) to iteratively approach a saddle point of $L_{\mathcal{P}}(\boldsymbol{x},\boldsymbol{\lambda})$. We propose both the synchronous and asynchronous PDMM for solving the problem. 

\vspace{-1.5mm}
\subsection{Synchronous updating scheme}
\label{subsec:syn_updating}
The synchronous updating scheme refers to the operation that at each iteration, all the variables over the graph update their estimates by using the most recent estimates from their neighbors from last iteration. Suppose $(\hat{\boldsymbol{x}}^{k},\hat{\boldsymbol{\lambda}}^{k})$ is the estimate obtained from the $k-1$th iteration, where $k\geq 1$. We compute the new estimate  $(\hat{\boldsymbol{x}}^{k+1},\hat{\boldsymbol{\lambda}}^{k+1})$ at iteration $k$ as
\begin{align}
\hspace{-2.5mm}\left(\hat{\boldsymbol{x}}_i^{k+1},\hat{\boldsymbol{\lambda}}_i^{k+1}\right)&\hspace{-1mm}=\hspace{-0.5mm}\arg\min_{\boldsymbol{x}_i}\max_{\boldsymbol{\lambda}_{ i}}\hspace{-0.5mm}L_{\mathcal{P}}\hspace{-0.5mm}\Big(\hspace{-0.6mm}\left[\ldots\hspace{-0.2mm},\hspace{-0.2mm}\hat{\boldsymbol{x}}_{i-1}^{k,T}\hspace{-0.2mm},\hspace{-0.2mm}\boldsymbol{x}_i^T\hspace{-0.2mm},\hspace{-0.2mm}\hat{\boldsymbol{x}}_{i+1}^{k,T}\hspace{-0.2mm},\hspace{-0.2mm}\ldots \right]^T\hspace{-.5mm},\nonumber\\
&\hspace{6mm}\left[\ldots\hspace{-0.2mm},\hspace{-0.2mm}\hat{\boldsymbol{\lambda}}_{i-1}^{k,T}\hspace{-0.2mm},\hspace{-0.2mm}\boldsymbol{\lambda}_i^T\hspace{-0.2mm},\hspace{-0.2mm}\hat{\boldsymbol{\lambda}}_{i+1}^{k,T}\hspace{-0.2mm},\hspace{-0.2mm}\ldots \right]^T\hspace{-1mm}\Big)\textrm{ }\textrm{ } i\in \mathcal{V}.\label{equ:x_lambda_updateSyn}
\end{align}
By inserting the expression (\ref{equ:augPDLag}) for $L_{\mathcal{P}}(\boldsymbol{x},\boldsymbol{\lambda})$ into (\ref{equ:x_lambda_updateSyn}), the updating expression can be further simplified as
\begin{align}
\hspace{-2mm}\hat{\boldsymbol{x}}_i^{k+1}\hspace{-1.5mm}=&\arg\min_{\boldsymbol{x}_i}\hspace{-1mm}\Bigg[
\hspace{-0.8mm}\sum_{j\in \mathcal{N}_i}\hspace{-0.8mm}\frac{1}{2}\left\|\boldsymbol{A}_{i j}\boldsymbol{x}_i+\boldsymbol{A}_{j i}\hat{\boldsymbol{x}}_j^{k}-\boldsymbol{c}_{ij}\right\|_{\boldsymbol{P}_{p,ij}}^2 \nonumber\\
&\hspace{10mm}-\hspace{-0.4mm}\boldsymbol{x}_i^{T}\Bigg(\hspace{-0.5mm}\sum_{j\in \mathcal{N}_i}\hspace{-1.5mm}\boldsymbol{A}_{i j}^T\hat{\boldsymbol{\lambda}}_{j|i}^{k}\hspace{-0.5mm}\Bigg)\hspace{-0.5mm}+\hspace{-0.5mm}f_i(\boldsymbol{x}_i)\hspace{-0.5mm}\Bigg]  \;\;\; i\in \mathcal{V}\label{equ:x_updateSyn} \\
\hspace{-2mm}\hat{\boldsymbol{\lambda}}_i^{k+1}\hspace{-1.5mm}=&\arg\min_{\boldsymbol{\lambda}_i}\hspace{-1mm} \Bigg[\hspace{-0.5mm}\sum_{j\in \mathcal{N}_i}\hspace{-1.5mm}\Bigg(\frac{1}{2}\left\|\boldsymbol{\lambda}_{i|j}-\hat{\boldsymbol{\lambda}}_{j|i}^{k}\right\|_{\boldsymbol{P}_{d,ij}}^2\hspace{-0.8mm}+\hspace{-0.4mm}\boldsymbol{\lambda}_{i|j}^{T}\boldsymbol{A}_{j i}\hat{\boldsymbol{x}}_{j}^{k} \nonumber \\
&\hspace{23mm}-\hspace{-0.5mm}\boldsymbol{\lambda}_{i|j}^T\boldsymbol{c}_{ij}\hspace{-0.3mm}\Bigg)\hspace{-0.5mm}
+\hspace{-0.5mm}f_i^{\ast}(\boldsymbol{A}_{i}^T\boldsymbol{\lambda}_{i})\hspace{-0.5mm}\Bigg]\;\;\; i\in \mathcal{V}.\label{equ:lambda_updateSyn}
\end{align}
%With (\ref{equ:x_updateSyn})-(\ref{equ:lambda_updateSyn}), the optimality conditions for $(\hat{\boldsymbol{x}}^{k+1},\hat{\boldsymbol{\lambda}}^{k+1})$ can be derived as
%\begin{align}
%&\hspace{-2mm}\sum_{j\in \mathcal{N}(i)}\hspace{-1.2mm}\boldsymbol{A}_{i j}^T\left[\boldsymbol{P}_{p,ij}(\boldsymbol{c}_{ij}\hspace{-0.4mm}-\hspace{-0.4mm}\boldsymbol{A}_{i j}\hat{\boldsymbol{x}}_i^{k+1}\hspace{-0.4mm}-\hspace{-0.4mm}\boldsymbol{A}_{j i}\hat{\boldsymbol{x}}_j^{k})\hspace{-0.4mm}+\hspace{-0.4mm}\hat{\boldsymbol{\lambda}}_{j|i}^{k}\right] \nonumber\\
%&\hspace{-2mm}\in \partial_{\boldsymbol{x}_i} f_i(\hat{\boldsymbol{x}}_i^{k+1}) \hspace{45mm} i\in \mathcal{V} \label{equ:x_optCond_Bi_syn}\\
%&\hspace{-2mm} %\left[\boldsymbol{P}_{d,ij}(\hat{\boldsymbol{\lambda}}_{j|i}^{k}-\hat{\boldsymbol{\lambda}}_{i|j}^{k+1})+\boldsymbol%{c}_{ij}-\boldsymbol{A}_{j i}\hat{\boldsymbol{x}}_j^{k}\right] \nonumber \\
%&\hspace{-2mm} \in \partial_{\boldsymbol{\lambda}_{i|j}} \hspace{-0.8mm} \left[f_i^{\ast}(\boldsymbol{A}_{i}^T\hat{\boldsymbol{\lambda}}_{i}^{k+1})\right]\hspace{30mm} [i,j]\in \vec{\mathcal{E}}.\label{equ:lambda_optCond_Bi_syn}
%\end{align}
Eq.~(\ref{equ:x_updateSyn})-(\ref{equ:lambda_updateSyn}) suggest that at iteration $k$, every node $i$ performs parameter-updating independently once the estimates $\{\hat{\boldsymbol{x}}_j^k,\hat{\boldsymbol{\lambda}}_{j|i}^k|j\in \mathcal{N}_i\}$ of its neighboring variables are available. In addition, the computation of $\hat{\boldsymbol{x}}_i^{k+1}$ and $\hat{\boldsymbol{\lambda}}_{i}^{k+1}$ can be carried out in parallel since $\boldsymbol{x}_i$ and $\boldsymbol{\lambda}_i$ are not directly related in $L_{\mathcal{P}}(\boldsymbol{x},\boldsymbol{\lambda})$. We refer to (\ref{equ:x_updateSyn})-(\ref{equ:lambda_updateSyn}) as \emph{node-oriented} computation. %This in contrast to ADMM, which essentially uses edge-oriented updates (see Subsection \ref{subsec:ADMM} for more information).

In order to run PDMM over the graph, each iteration should consist of two steps. Firstly, every node $i$ computes $(\hat{\boldsymbol{x}}_{i},\hat{\boldsymbol{\lambda}}_i)$ by following (\ref{equ:x_updateSyn})-(\ref{equ:lambda_updateSyn}), accounting for \emph{information-fusion}. Secondly, every node $i$ sends $(\hat{\boldsymbol{x}}_i,\hat{\boldsymbol{\lambda}}_{i|j})$ to its neighboring node $j$ for all neighbors, accounting for \emph{information-spread}. %By following the two-step procedure, each estimate pair $(\hat{\boldsymbol{x}}_i,\hat{\boldsymbol{\lambda}}_i)$ gradually captures the global information of the functional summation in (\ref{equ:optProMulti}). 
We take $\hat{\boldsymbol{x}}_i$ as the common message to all neighbors of node $i$ and $\hat{\boldsymbol{\lambda}}_{i|j}$ as a node-specific message only to neighbor $j$. In some applications, it may be preferable to exploit broadcast transmission rather than point-to-point transmission in order to save energy.  We will explain in Subsection~\ref{subsec:node_comp_trans} that the transmission of $\hat{\boldsymbol{\lambda}}_{i|j}$, $j\in \mathcal{N}_i$, can be replaced by broadcast transmission of an intermediate quantity.

Finally, we consider terminating the iterates (\ref{equ:x_updateSyn})-(\ref{equ:lambda_updateSyn}). One can check if the estimate $(\hat{\boldsymbol{x}},\hat{\boldsymbol{\lambda}})$ becomes stable over consecutive iterates (see Corollary~\ref{coro:syn} for theoretical support). %Alternatively, one can check if $h_{\mathcal{P}_p}(\hat{\boldsymbol{x}})+h_{\mathcal{P}_d}(\hat{\boldsymbol{\lambda}})$ is close to zero. In principle, a criterion should be chosen for convenience for the particular application. 

%Given an estimate $(\hat{\boldsymbol{x}},\hat{\boldsymbol{\lambda}})$, one can check if the functional value $p(\hat{\boldsymbol{x}},\hat{\boldsymbol{\lambda}})$ is close to zero or not before terminating the iterates. Alternatively, one can check if $h_{\mathcal{P}_p}(\hat{\boldsymbol{x}})+h_{\mathcal{P}_d}(\hat{\boldsymbol{\lambda}})$ is close to zero. In principle, a criterion should be chosen for convenience for the particular application. 

\vspace{-1.5mm}
\subsection{Asynchronous updating scheme}
\label{subsec:asyn_updating}
The asynchronous updating scheme refers to the operation that at each iteration, only the variables associated with one node in the graph update their estimates while all other variables keep their estimates fixed. Suppose node $i$ is selected at iteration $k$. We then compute $(\hat{\boldsymbol{x}}_i^{k+1},\hat{\boldsymbol{\lambda}}_i^{k+1})$ by optimizing $L_{\mathcal{P}}$ based on the most recent estimates $\{\hat{\boldsymbol{x}}_j^{k},\hat{\boldsymbol{\lambda}}_{j|i}^{k}|j\in \mathcal{N}_i\}$ from its neighboring nodes. At the same time, the estimates $(\hat{\boldsymbol{x}}_j^k,\hat{\boldsymbol{\lambda}}_j^{k})$, $j\neq i$, remain the same. By following the above computational instruction, $(\hat{\boldsymbol{x}}^{k+1},\hat{\boldsymbol{\lambda}}^{k+1})$ can be obtained as
\begin{align}
&\hspace{-2mm}\left(\hspace{-0.3mm}\hat{\boldsymbol{x}}_i^{k+1},\hspace{-0.2mm}\hat{\boldsymbol{\lambda}}_i^{k+1}\hspace{-0.3mm}\right)\hspace{-0.7mm}=\hspace{-0.7mm}\arg\min_{\boldsymbol{x}_i}\max_{\boldsymbol{\lambda}_{ i}}\hspace{-0.5mm}L_{\mathcal{P}}\hspace{-0.2mm}\Big(\hspace{-0.6mm}\left[\ldots\hspace{-0.2mm},\hspace{-0.2mm}\hat{\boldsymbol{x}}_{i-1}^{k,T}\hspace{-0.2mm},\hspace{-0.2mm}\boldsymbol{x}_i^T\hspace{-0.2mm},\hspace{-0.2mm}\hat{\boldsymbol{x}}_{i+1}^{k,T}\hspace{-0.2mm},\hspace{-0.2mm}\ldots \right]^T\hspace{-.6mm},\nonumber\\
&\hspace{38mm}\left[\ldots\hspace{-0.2mm},\hspace{-0.2mm}\hat{\boldsymbol{\lambda}}_{i-1}^{k,T}\hspace{-0.2mm},\hspace{-0.2mm}\boldsymbol{\lambda}_i^T\hspace{-0.2mm},\hspace{-0.2mm}\hat{\boldsymbol{\lambda}}_{i+1}^{k,T}\hspace{-0.2mm},\hspace{-0.2mm}\ldots \right]^T\hspace{-1mm}\Big)\label{equ:x_lambda_updateAsyn1} \\
&(\hat{\boldsymbol{x}}_{j}^{k+1},\hat{\boldsymbol{\lambda}}_{j}^{k+1})=(\hat{\boldsymbol{x}}_{j}^{k},\hat{\boldsymbol{\lambda}}_{j}^{k}) \hspace{20mm} j\in \mathcal{V}, j\neq i.  \label{equ:x_lambda_updateAsyn2}
\end{align}
%\begin{align}
%&\hspace{-2mm}\left(\hat{\boldsymbol{x}}_i^{k+1},\hat{\boldsymbol{\lambda}}_i^{k+1}\right)\nonumber\\
%&\hspace{-2mm}=\hspace{-0.5mm}\arg\min_{\boldsymbol{x}_i}\max_{\boldsymbol{\lambda}_{ i}}\hspace{-0.5mm}L_{\mathcal{P}}\hspace{-0.2mm}\Big(\hspace{-0.6mm}\left[\ldots\hspace{-0.2mm},\hspace{-0.2mm}\hat{\boldsymbol{x}}_{i-1}^{k,T}\hspace{-0.2mm},\hspace{-0.2mm}\boldsymbol{x}_i^T\hspace{-0.2mm},\hspace{-0.2mm}\hat{\boldsymbol{x}}_{i+1}^{k,T}\hspace{-0.2mm},\hspace{-0.2mm}\ldots \right]^T\hspace{-.5mm},\nonumber\\
%&\hspace{27mm}\left[\ldots\hspace{-0.2mm},\hspace{-0.2mm}\hat{\boldsymbol{\lambda}}_{i-1}^{k,T}\hspace{-0.2mm},\hspace{-0.2mm}\boldsymbol{\lambda}_i^T\hspace{-0.2mm},\hspace{-0.2mm}\hat{\boldsymbol{\lambda}}_{i+1}^{k,T}\hspace{-0.2mm},\hspace{-0.2mm}\ldots \right]^T\hspace{-1mm}\Big)\label{equ:x_lambda_updateAsyn1} 
%\end{align}

Similarly to (\ref{equ:x_updateSyn})-(\ref{equ:lambda_updateSyn}), $\hat{\boldsymbol{x}}_i^{k+1}$ and $\hat{\boldsymbol{\lambda}}_i^{k+1}$ can also be computed separately in (\ref{equ:x_lambda_updateAsyn1}). Once the update at node $i$ is complete, the node sends the common message $\hat{\boldsymbol{x}}_i^{k+1}$ and node-specific messages $\{\hat{\boldsymbol{\lambda}}_{i|j}^{k+1},j\in \mathcal{N}_i\}$ to its neighbors %pushing information-spread over the graph. 
We will explain in next subsection how to exploit broadcast transmission to replace point-to-point transmission.

In practice, the nodes in the graph can either be randomly activated or follow a predefined order for asynchronous parameter-updating. One scheme for realizing random node-activation is that after a node finishes parameter-updating, it randomly activates one of its neighbors for next iteration. Another scheme is to introduce a clock at each node which ticks at the times of a (random) Poisson process (see \cite{Boyd06gossip} for detailed information). Each node is activated only when its clock ticks. As for node-activation in a predefined order, cyclic updating scheme is probably most straightforward. Once node $i$ finishes parameter-updating, it informs node $i+1$ for next iteration. For the case that node $i$ and $i+1$ are not neighbors, the path from node $i$ to $i+1$ can be pre-stored at node $i$ to facilitate the process. In Subsection~\ref{subsec:rate_asyn}, we provide convergence analysis only for the cyclic updating scheme. We leave the analysis for other asynchronous schemes for future investigation.   

%In practice, at each iteration, one can either randomly choose a node from the graph or follow a predefined order of the nodes for parameter-updating. The basic principle is that every node should be activated from time to time. 
%To speed up convergence, the frequencies of every node being activated should be roughly balanced. 
%Finally, in order to terminate the iterates (\ref{equ:x_lambda_updateAsyn1})-(\ref{equ:x_lambda_updateAsyn2}), one can 
%similarly check $p(\hat{\boldsymbol{x}},\hat{\boldsymbol{\lambda}})$ or $h_{\mathcal{P}_p}(\hat{\boldsymbol{x}})+h_{\mathcal{P}_d}(\hat{\boldsymbol{\lambda}})$  as discussed in the end of Subsection~\ref{subsec:syn_updating}.

\begin{remark}
To briefly summarize, synchronous PDMM scheme allows faster information-spread over the graph through parallel parameter-updating while asynchronous PDMM scheme requires less effort from node-coordination in the graph. In practice, the scheme-selection should depend on the graph (or network) properties such as the feasibility of parallel computation, the complexity of node-coordination and the life time of nodes.      
\end{remark}

\vspace{-1.5mm}
\subsection{Simplifying node-based computations and transmissions}
\label{subsec:node_comp_trans}

It is clear that for both the synchronous and asynchronous schemes, each activated node $i$ has to perform two minimizations: one for $\hat{\boldsymbol{x}}_i$ and the other one for $\hat{\boldsymbol{\lambda}}_i$. In this subsection, we show that the computations for the two minimizations can be simplified. We will also study how the point-to-point transmission can be replaced with broadcast transmission. To do so, we will consider two scenarios: 

%We consider two scenarios. In the first scenario, we study how to avoid the derivation of the conjugate functions in computing $\hat{\boldsymbol{\lambda}}_i^{k+1}$. %(see (\ref{equ:lambda_updateSyn})) 
%In the second scenario, we consider reducing the two minimizations problems (\ref{equ:x_updateSyn})-(\ref{equ:lambda_updateSyn}) (or (\ref{equ:x_lambda_updateAsyn1})) into one minimization problem.

\subsubsection{Avoiding conjugate functions} \label{subsub:avoid_conj} In the first scenario, we consider using $f_i(\cdot)$ instead of $f_i^{\ast}(\cdot)$ to update ${\hat{\boldsymbol{\lambda}}}_i$. Our goal is to simplify computations by avoiding the derivation of $f_i^{\ast}(\cdot)$.

By using the definition of $f_i^{\ast}$ in (\ref{equ:conj_def}), the computation (\ref{equ:lambda_updateSyn}) for $\hat{\boldsymbol{\lambda}}_i^{k+1}$ (which also holds for asynchronous PDMM) can be rewritten as
\begin{align}
\hspace{-2mm}\hat{\boldsymbol{\lambda}}_i^{k+1}\hspace{-1.5mm}=&\arg\min_{\boldsymbol{\lambda}_i}\hspace{-1mm} \Bigg[\hspace{-0.5mm}\sum_{j\in \mathcal{N}_i}\hspace{-1.5mm}\Bigg(\frac{1}{2}\left\|\boldsymbol{\lambda}_{i|j}-\hat{\boldsymbol{\lambda}}_{j|i}^{k}\right\|_{\boldsymbol{P}_{d,ij}}^2\hspace{-0.8mm}+\hspace{-0.4mm}\boldsymbol{\lambda}_{i|j}^{T}\boldsymbol{A}_{j i}\hat{\boldsymbol{x}}_{j}^{k} \nonumber \\
&\hspace{11mm}-\hspace{-0.5mm}\boldsymbol{\lambda}_{i|j}^T\boldsymbol{c}_{ij}\hspace{-0.3mm}\Bigg)\hspace{-0.5mm}
+\hspace{-0.5mm}\max_{\boldsymbol{w}_i}\hspace{-0.5mm}\Big(\boldsymbol{w}_i^T\boldsymbol{A}_{i}^T\boldsymbol{\lambda}_{i}\hspace{-0.5mm}-\hspace{-0.5mm}f_i(\boldsymbol{w}_i)\Big)\hspace{-0.5mm}\Bigg].\label{equ:lambda_update}
\end{align}
We denote the optimal solution for $\boldsymbol{w}_i$ in (\ref{equ:lambda_update}) as $\boldsymbol{w}_i^{k+1}$. The optimality conditions for $\hat{\boldsymbol{\lambda}}_{i|j}^{k+1}$, $j\in \mathcal{N}_i$, and $\boldsymbol{w}_i^{k+1}$ can then be derived from (\ref{equ:lambda_update}) as
\begin{align}
&\hspace{-2mm}\boldsymbol{0}\in\boldsymbol{A}_i^T\hat{\boldsymbol{\lambda}}_i^{k+1}- \partial_{\boldsymbol{w}_i} f_i(\boldsymbol{w}_i^{k+1}) \label{equ:w_lambda_optCond1} \\
&\hspace{-2mm}\boldsymbol{c}_{ij}\hspace{-0.5mm}=\hspace{-0.5mm}\boldsymbol{P}_{d,ij}(\hat{\boldsymbol{\lambda}}_{i|j}^{k+1}\hspace{-0.5mm}-\hspace{-0.5mm}\hat{\boldsymbol{\lambda}}_{j|i}^k)\hspace{-0.5mm}+\hspace{-0.5mm}\boldsymbol{A}_{ji}\hat{\boldsymbol{x}}_j^k\hspace{-0.5mm}+\hspace{-0.5mm}\boldsymbol{A}_{ij}\boldsymbol{w}_i^{k+1} \hspace{1mm}j\in \mathcal{N}_i, \label{equ:w_lambda_optCond2}
\end{align}
where (\ref{equ:A_lambda_relation}) is used in deriving (\ref{equ:w_lambda_optCond2}). Since $\boldsymbol{P}_{d,ij}$ is a nonsingular matrix, (\ref{equ:w_lambda_optCond2}) defines a mapping from $\boldsymbol{w}_i^{k+1}$ to $\hat{\boldsymbol{\lambda}}_{i|j}^{k+1}$: %for every $j\in\mathcal{N}_i$:
\begin{align}
\hspace{-1mm}\hat{\boldsymbol{\lambda}}_{i|j}^{k+1}&\hspace{-0.7mm}=\hspace{-0.5mm}\hat{\boldsymbol{\lambda}}_{j|i}^k\hspace{-0.5mm}+\hspace{-0.5mm}\boldsymbol{P}_{d,ij}^{-1}(\hspace{-0.3mm}\boldsymbol{c}_{ij}\hspace{-0.5mm}-\hspace{-0.5mm}\boldsymbol{A}_{ji}\hat{\boldsymbol{x}}_j^k\hspace{-0.5mm}-\hspace{-0.5mm}\boldsymbol{A}_{ij}\boldsymbol{w}_i^{k+1}\hspace{-0.5mm}), j\in \hspace{-0.4mm} \mathcal{N}_i, \label{equ:lambda_x_relation1}
\end{align}
With this mapping, (\ref{equ:w_lambda_optCond1}) can then be reformulated as %only in terms $\boldsymbol{w}_i^{k+1}$ as
\begin{align}
&\sum_{j\in\mathcal{N}_i}\hspace{-1.5mm}\boldsymbol{A}_{ij}^T\left(\hat{\boldsymbol{\lambda}}_{j|i}^k\hspace{-0.5mm}+\hspace{-0.5mm}\boldsymbol{P}_{d,ij}^{-1}(\hspace{-0.3mm}\boldsymbol{c}_{ij}\hspace{-0.5mm}-\hspace{-0.5mm}\boldsymbol{A}_{ji}\hat{\boldsymbol{x}}_j^k\hspace{-0.5mm}-\hspace{-0.5mm}\boldsymbol{A}_{ij}\boldsymbol{w}_i^{k+1}\hspace{-0.5mm})\right)
\nonumber\\
&\in \partial_{\boldsymbol{w}_i} f_i(\boldsymbol{w}_i^{k+1}).  \label{equ:w_lambda_optCond4}
\end{align}
By inspection of (\ref{equ:w_lambda_optCond4}), it can be shown that (\ref{equ:w_lambda_optCond4}) is in fact an optimality condition for the following optimization problem
\begin{align}
\boldsymbol{w}_i^{k+1}\hspace{-0.2mm}=\hspace{-0.2mm}\arg\min_{\boldsymbol{w}_i} & \Big[f_i(\boldsymbol{w}_i)+\frac{1}{2}\|\boldsymbol{c}_{ij}\hspace{-0.5mm}-\hspace{-0.5mm}\boldsymbol{A}_{ji}\hat{\boldsymbol{x}}_j^k-\boldsymbol{A}_{ij}\boldsymbol{w}_i \|_{\boldsymbol{P}_{d,ij}^{-1}}^2\nonumber\\
&\hspace{1.5mm}-\boldsymbol{w}_i^T\hspace{-1.mm}\sum_{j\in\mathcal{N}_i}\hspace{-1.5mm}\boldsymbol{A}_{ij}^T\hat{\boldsymbol{\lambda}}_{j|i}^k\hspace{-0.5mm}\Big].
\label{equ:w_lambda_optCond5}
\end{align}

The above analysis suggests that $\hat{\boldsymbol{\lambda}}_i^{k+1}$ can be alternatively computed through an intermediate quantity $\boldsymbol{w}_i^{k+1}$. We summarize the result in a proposition below.

\begin{proposition}
Considering a node $i\in \mathcal{V}$ at iteration $k$, the new estimate $\hat{\boldsymbol{\lambda}}_{i|j}^{k+1}$ for each $j\in \mathcal{N}_i$ can be obtained by following (\ref{equ:lambda_x_relation1}),
where ${\boldsymbol{w}}_i^{k+1}$ is computed by (\ref{equ:w_lambda_optCond5}).
\label{prop:no_conjugate}
\end{proposition}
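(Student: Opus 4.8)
The plan is to read the derivation preceding the statement as a self-contained argument and verify that the two-step computation---first solving (\ref{equ:w_lambda_optCond5}) for $\boldsymbol{w}_i^{k+1}$, then mapping through (\ref{equ:lambda_x_relation1}) to obtain each $\hat{\boldsymbol{\lambda}}_{i|j}^{k+1}$---reproduces exactly the $\hat{\boldsymbol{\lambda}}_i^{k+1}$ defined by the original update (\ref{equ:lambda_updateSyn}). The strategy is to establish equivalence at the level of optimality conditions, so that no reference to $f_i^{\ast}$ is ever needed.

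First I would rewrite the $\hat{\boldsymbol{\lambda}}_i^{k+1}$-update. Replacing $f_i^{\ast}(\boldsymbol{A}_i^T\boldsymbol{\lambda}_i)$ by its variational form $\max_{\boldsymbol{w}_i}(\boldsymbol{w}_i^T\boldsymbol{A}_i^T\boldsymbol{\lambda}_i-f_i(\boldsymbol{w}_i))$ from the conjugate definition (\ref{equ:conj_def}) turns (\ref{equ:lambda_updateSyn}) into the convex--concave saddle problem (\ref{equ:lambda_update}) in the pair $(\boldsymbol{\lambda}_i,\boldsymbol{w}_i)$: the objective is convex in $\boldsymbol{\lambda}_i$ (quadratics plus an affine term plus the convex map $f_i^{\ast}(\boldsymbol{A}_i^T\boldsymbol{\lambda}_i)$) and concave in $\boldsymbol{w}_i$, so a saddle point is characterized by two stationarity inclusions. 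Writing stationarity in $\boldsymbol{w}_i$ yields (\ref{equ:w_lambda_optCond1}), and writing stationarity in each block $\boldsymbol{\lambda}_{i|j}$, using $\boldsymbol{A}_i^T\boldsymbol{\lambda}_i=\sum_{j\in\mathcal{N}_i}\boldsymbol{A}_{ij}^T\boldsymbol{\lambda}_{i|j}$ from (\ref{equ:A_lambda_relation}) so that the coupling term contributes $\boldsymbol{A}_{ij}\boldsymbol{w}_i$, yields (\ref{equ:w_lambda_optCond2}).

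Next I would eliminate $\hat{\boldsymbol{\lambda}}_{i|j}^{k+1}$. Because each $\boldsymbol{P}_{d,ij}\succ 0$ is invertible, (\ref{equ:w_lambda_optCond2}) solves uniquely for $\hat{\boldsymbol{\lambda}}_{i|j}^{k+1}$ as the affine function of $\boldsymbol{w}_i^{k+1}$ displayed in (\ref{equ:lambda_x_relation1}). Substituting this back into (\ref{equ:w_lambda_optCond1}) via $\boldsymbol{A}_i^T\hat{\boldsymbol{\lambda}}_i^{k+1}=\sum_{j\in\mathcal{N}_i}\boldsymbol{A}_{ij}^T\hat{\boldsymbol{\lambda}}_{i|j}^{k+1}$ removes every $\hat{\boldsymbol{\lambda}}_{i|j}^{k+1}$ and leaves the single inclusion (\ref{equ:w_lambda_optCond4}) in $\boldsymbol{w}_i^{k+1}$ alone. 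The closing step is to recognize (\ref{equ:w_lambda_optCond4}) as exactly the subgradient optimality condition $\boldsymbol{0}\in\partial_{\boldsymbol{w}_i}[\cdots]$ of the convex program (\ref{equ:w_lambda_optCond5}): differentiating the quadratic penalty $\tfrac12\sum_{j\in\mathcal{N}_i}\|\boldsymbol{c}_{ij}-\boldsymbol{A}_{ji}\hat{\boldsymbol{x}}_j^k-\boldsymbol{A}_{ij}\boldsymbol{w}_i\|_{\boldsymbol{P}_{d,ij}^{-1}}^2$ produces the matched terms $-\sum_{j\in\mathcal{N}_i}\boldsymbol{A}_{ij}^T\boldsymbol{P}_{d,ij}^{-1}(\boldsymbol{c}_{ij}-\boldsymbol{A}_{ji}\hat{\boldsymbol{x}}_j^k-\boldsymbol{A}_{ij}\boldsymbol{w}_i)$, while the linear term contributes $-\sum_{j\in\mathcal{N}_i}\boldsymbol{A}_{ij}^T\hat{\boldsymbol{\lambda}}_{j|i}^k$. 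Since $f_i$ is closed, proper and convex and the penalty is convex (each $\boldsymbol{P}_{d,ij}^{-1}\succ 0$), stationarity is both necessary and sufficient, so the $\boldsymbol{w}_i^{k+1}$ satisfying (\ref{equ:w_lambda_optCond4}) is precisely the minimizer of (\ref{equ:w_lambda_optCond5}); feeding it into (\ref{equ:lambda_x_relation1}) recovers the desired $\hat{\boldsymbol{\lambda}}_{i|j}^{k+1}$.

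The part requiring most care is the nonsmoothness of $f_i$: the two ``stationarity'' statements must be read as subgradient inclusions, not gradient equations, and the passage between $\boldsymbol{A}_i^T\hat{\boldsymbol{\lambda}}_i^{k+1}\in\partial f_i(\boldsymbol{w}_i^{k+1})$ and $\boldsymbol{w}_i^{k+1}\in\partial f_i^{\ast}(\boldsymbol{A}_i^T\hat{\boldsymbol{\lambda}}_i^{k+1})$ relies on the conjugate--subgradient equivalence recorded in (\ref{equ:conj_def2})--(\ref{equ:conj_def3}). One should also confirm that the min over $\boldsymbol{\lambda}_i$ and the max over $\boldsymbol{w}_i$ may legitimately be collapsed into a single stationary-point system---this is exactly the convex--concave saddle structure noted above---so that solving for $\boldsymbol{w}_i^{k+1}$ first and recovering $\hat{\boldsymbol{\lambda}}_{i|j}^{k+1}$ second introduces no spurious solutions and indeed returns the unique optimizer of the original $\hat{\boldsymbol{\lambda}}_i^{k+1}$-update.
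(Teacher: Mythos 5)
Your proposal is correct and follows essentially the same route the paper takes: the paper's ``proof'' is precisely the derivation preceding the proposition, namely rewriting (\ref{equ:lambda_updateSyn}) via the variational form of $f_i^{\ast}$ to get (\ref{equ:lambda_update}), extracting the two optimality conditions (\ref{equ:w_lambda_optCond1})--(\ref{equ:w_lambda_optCond2}), inverting $\boldsymbol{P}_{d,ij}$ to obtain (\ref{equ:lambda_x_relation1}), and identifying the resulting inclusion (\ref{equ:w_lambda_optCond4}) as the optimality condition of (\ref{equ:w_lambda_optCond5}). Your added care about subgradient inclusions and the convex--concave saddle structure is a sound elaboration of steps the paper leaves implicit.
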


Proposition~\ref{prop:no_conjugate} suggests that the estimate $\hat{\boldsymbol{\lambda}}_i^{k+1}$ can be easily computed from ${\boldsymbol{w}}_i^{k+1}$. We argue in the following that the point-to-point transmission of $\left\{\hat{\boldsymbol{\lambda}}_{i|j}^{k+1},j\in \mathcal{N}_i\right\}$ can be replaced with broadcast transmission of ${\boldsymbol{w}}_i^{k+1}$.

%\begin{figure}[tb]
%\centering
%\begin{footnotesize}
%  \includegraphics[width=70mm]{broadcast_demo.eps}
%\end{footnotesize}
%\caption{\small Demonstration of broadcast transmission over the graphic model of Fig.~\ref{fig:prob_graph}. The activated node $3$ sends two common messages $\hat{\boldsymbol{x}}_3$ and $\boldsymbol{w}_3$ to all its neighbors. For the special case that $\boldsymbol{P}_{d,23}=\boldsymbol{P}_{p,23}^{-1}$ and $\boldsymbol{P}_{d,36}=\boldsymbol{P}_{p,36}^{-1}$, $\hat{\boldsymbol{x}}_3=\boldsymbol{w}_3$. } \label{fig:broadcast_demo}
%\end{figure}

We see from (\ref{equ:lambda_x_relation1}) that the computation of the node-specific message $\hat{\boldsymbol{\lambda}}_{i|j}^{k+1}$ (from node $i$ to node $j$) only consists of the quantities ${\boldsymbol{w}}_i^{k+1}$, $\hat{\boldsymbol{\lambda}}_{j|i}^k$ and $\hat{\boldsymbol{x}}_{j}^k$. Since $\hat{\boldsymbol{\lambda}}_{j|i}^k$ and $\hat{\boldsymbol{x}}_{j}^k$ are available at node $j$, the message $\hat{\boldsymbol{\lambda}}_{i|j}^{k+1}$ can therefore be computed at node $j$ once the common message ${\boldsymbol{w}}_i^{k+1}$ is received. In other words, it is sufficient for node $i$ to broadcast both $\hat{\boldsymbol{x}}_i^{k+1}$ and ${\boldsymbol{w}}_i^{k+1}$ to all its neighbors. Every node-specific message $\hat{\boldsymbol{\lambda}}_{i|j}^{k+1}$, $j\in \mathcal{N}_i$, can then be computed at node $j$ alone. %Fig.~\ref{fig:broadcast_demo} demonstrates the broadcast transmission for a particular activated node. If all nodes are activated simultaneously (accounting for synchronous PDMM), then every node broadcasts its two common messages accordingly.

Finally, in order for the broadcast transmission to work, we assume there is no transmission failure between neighboring nodes. The assumption ensures that there is no estimate inconsistency between neighboring nodes, making the broadcast transmission reliable.

\subsubsection{Reducing two minimizations to one} In the second scenario, we study under what conditions the two minimizations (\ref{equ:x_updateSyn})-(\ref{equ:lambda_updateSyn}) (which also hold for asynchronous PDMM) reduce to one minimization. %Specifically, we will establish conditions under which the estimate $\hat{\boldsymbol{\lambda}}_i^{k+1}$ can be computed directly from the estimate $\hat{\boldsymbol{x}}_i^{k+1}$ without computing ${\boldsymbol{w}}_{i}^{k+1}$ (see Proposition~\ref{prop:no_conjugate}), thus reducing the computational complexities.

\begin{proposition}
Considering a node $i\in \mathcal{V}$ at iteration $k$, if the matrix $\boldsymbol{P}_{d,ij}$ for every neighbor $j\in \mathcal{N}_i$ is chosen to be $\boldsymbol{P}_{d,ij}=\boldsymbol{P}_{p,ij}^{-1}$, then there is $\hat{\boldsymbol{x}}_i^{k+1}={\boldsymbol{w}}_i^{k+1}$. As a result,
\begin{align}
\hspace{-1mm}\hat{\boldsymbol{\lambda}}_{i|j}^{k+1}\hspace{-0.5mm}=\hspace{-0.5mm}\hat{\boldsymbol{\lambda}}_{j|i}^k\hspace{-0.5mm}+\hspace{-0.5mm}\boldsymbol{P}_{p,ij}(\boldsymbol{c}_{ij}\hspace{-0.5mm}-\hspace{-0.5mm}\boldsymbol{A}_{j i}\hat{\boldsymbol{x}}_j^k\hspace{-0.5mm}-\hspace{-0.5mm}\boldsymbol{A}_{i j}\hat{\boldsymbol{x}}_i^{k+1})\textrm{ } j\in \mathcal{N}_i. \label{equ:lambda_x_relation2}
\end{align}
\label{prop:lambda_x_relation}
\end{proposition}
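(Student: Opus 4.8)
The plan is to show that the two minimization problems defining $\hat{\boldsymbol{x}}_i^{k+1}$ in (\ref{equ:x_updateSyn}) and $\boldsymbol{w}_i^{k+1}$ in (\ref{equ:w_lambda_optCond5}) become literally the same optimization problem once $\boldsymbol{P}_{d,ij}=\boldsymbol{P}_{p,ij}^{-1}$, so that their minimizers must coincide; the relation (\ref{equ:lambda_x_relation2}) then follows by direct substitution into the mapping (\ref{equ:lambda_x_relation1}).

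First I would align the two objectives term by term. In (\ref{equ:x_updateSyn}) the penalty contribution of neighbor $j$ is $\tfrac12\|\boldsymbol{A}_{ij}\boldsymbol{x}_i+\boldsymbol{A}_{ji}\hat{\boldsymbol{x}}_j^{k}-\boldsymbol{c}_{ij}\|_{\boldsymbol{P}_{p,ij}}^2$. Since a weighted $\ell_2$ norm is invariant under negation of its argument (as $\boldsymbol{P}_{p,ij}$ is symmetric), this equals $\tfrac12\|\boldsymbol{c}_{ij}-\boldsymbol{A}_{ji}\hat{\boldsymbol{x}}_j^{k}-\boldsymbol{A}_{ij}\boldsymbol{x}_i\|_{\boldsymbol{P}_{p,ij}}^2$, which is exactly the form of the quadratic term appearing in (\ref{equ:w_lambda_optCond5}) but with weight $\boldsymbol{P}_{d,ij}^{-1}$ in place of $\boldsymbol{P}_{p,ij}$. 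Imposing $\boldsymbol{P}_{d,ij}=\boldsymbol{P}_{p,ij}^{-1}$ makes these weights coincide, $\boldsymbol{P}_{d,ij}^{-1}=\boldsymbol{P}_{p,ij}$, so the quadratic terms agree. The remaining $f_i(\cdot)$ term and the linear term $-(\cdot)^T\sum_{j\in\mathcal{N}_i}\boldsymbol{A}_{ij}^T\hat{\boldsymbol{\lambda}}_{j|i}^{k}$ are already identical in the two problems. Hence, as functions of a single variable $\boldsymbol{v}$, the objectives in (\ref{equ:x_updateSyn}) and (\ref{equ:w_lambda_optCond5}) are the same function.

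Two minimization problems with identical objectives have identical argmin sets, so I would conclude $\hat{\boldsymbol{x}}_i^{k+1}=\boldsymbol{w}_i^{k+1}$. The point to handle carefully is that this identification presumes the minimizer is well defined, which is guaranteed because each $\boldsymbol{P}_{p,ij}\succ 0$ together with the convexity of $f_i$ renders the common objective strictly convex on the range of the $\boldsymbol{A}_{ij}$'s; absent uniqueness one simply interprets both updates as selecting the same point of the shared argmin set. With $\hat{\boldsymbol{x}}_i^{k+1}=\boldsymbol{w}_i^{k+1}$ established, substituting $\boldsymbol{w}_i^{k+1}=\hat{\boldsymbol{x}}_i^{k+1}$ and $\boldsymbol{P}_{d,ij}^{-1}=\boldsymbol{P}_{p,ij}$ into the mapping (\ref{equ:lambda_x_relation1}) immediately yields (\ref{equ:lambda_x_relation2}).

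There is no genuinely hard step here; the proposition is essentially an algebraic observation. The only subtlety worth flagging is the weighting-matrix bookkeeping: one must verify that the sign flip inside the norm leaves the quadratic form unchanged and that inverting $\boldsymbol{P}_{d,ij}$ correctly reproduces $\boldsymbol{P}_{p,ij}$, so that the penalty weights in the two problems genuinely match. Once that alignment is confirmed, the equality of minimizers and the resulting simplified update (\ref{equ:lambda_x_relation2}) follow directly.
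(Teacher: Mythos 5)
Your proof is correct and follows the same route as the paper's: the paper also establishes $\hat{\boldsymbol{x}}_i^{k+1}={\boldsymbol{w}}_i^{k+1}$ by direct inspection of (\ref{equ:x_updateSyn}) and (\ref{equ:w_lambda_optCond5}) under $\boldsymbol{P}_{d,ij}=\boldsymbol{P}_{p,ij}^{-1}$ and then substitutes into (\ref{equ:lambda_x_relation1}). You simply spell out the bookkeeping (sign invariance of the weighted norm, matching of the weight matrices) that the paper dismisses as trivial.
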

\vspace{-1mm}
\begin{proof}
The proof is trivial. By inspection of (\ref{equ:x_updateSyn}) and (\ref{equ:w_lambda_optCond5}) under $\boldsymbol{P}_{d,ij}\hspace{-0.3mm}=\hspace{-0.3mm}\boldsymbol{P}_{p,ij}^{-1}$, $j\in\hspace{-0.3mm} \mathcal{N}_i$, we obtain $\hat{\boldsymbol{x}}_i^{k+1}={\boldsymbol{w}}_i^{k+1}$.
\end{proof}

% Subsubsection~\ref{subsub:avoid_conj}

\begin{table}[t!]
\centering
\begin{tabular}{l}
  %\hline
  %\textbf{The PDMM}{}  \\
  \hline
   \textrm{Initialization}: Properly initialize $\{\boldsymbol{x}_i\}$ and $\{\boldsymbol{\lambda}_{i|j}\}$\\
   Repeat\\
   \hspace{3mm} for all $i\in \mathcal{V}$ do \\
   \hspace{5mm}  $\hat{\boldsymbol{x}}_i^{k+1}\hspace{-0.5mm}=\hspace{-0.5mm}\arg\min_{\boldsymbol{x}_i}\hspace{-1mm}\Big[f_i(\boldsymbol{x}_i)\hspace{-0.5mm}-\hspace{-0.5mm} \boldsymbol{x}_i^T(\sum_{j\in \mathcal{N}_i}\hspace{-0.5mm}\boldsymbol{A}_{ij}^T\hat{\boldsymbol{\lambda}}_{j|i}^k)$ \\
	\hspace{18mm}$+\sum_{j\in \mathcal{N}_i}\frac{1}{2}\| \boldsymbol{A}_{ij}\boldsymbol{x}_i\hspace{-0.5mm} +\hspace{-0.5mm}\boldsymbol{A}_{ji}\hat{\boldsymbol{x}}_j^{k}-\boldsymbol{c}_{ij}\|_{\boldsymbol{P}_{p,ij}}^2\Big]$ \\
	 \hspace{3mm} end for \\
   \hspace{3mm} for all $i\in \mathcal{V}$ and $j\in \mathcal{N}_i$ do \\
   \hspace{5mm} $\hat{\boldsymbol{\lambda}}_{i|j}^{k+1}=\hat{\boldsymbol{\lambda}}_{j|i}^{k}+\boldsymbol{P}_{p,ij}(\boldsymbol{c}_{ij}-\boldsymbol{A}_{ji}\hat{\boldsymbol{x}}_j^{k}-\boldsymbol{A}_{ij}\hat{\boldsymbol{x}}_i^{k+1})$ \\
	 \hspace{3mm} end for \\
	 \hspace{3mm}$k\leftarrow k+1$ \\
	Until some stopping criterion is met \\
  \hline
\end{tabular}
\caption{Synchronous PDMM where for each $i\in \mathcal{V}$, $\boldsymbol{P}_{d,ij}=\boldsymbol{P}_{p,ij}^{-1}$ . }
\label{table:PDMM}
\vspace{-8mm}
\end{table}

Similarly to the first scenario, broadcast transmission is also applicable for the second scenario. Since $\hat{\boldsymbol{x}}_i^{k+1}={\boldsymbol{w}}_i^{k+1}$, node $i$ only has to broadcast the estimate $\hat{\boldsymbol{x}}_i^{k+1}$ to all its neighbors. Each message $\hat{\boldsymbol{\lambda}}_{i|j}^{k+1}$ from node $i$ to node $j$ can then be computed at node $j$ directly by applying (\ref{equ:lambda_x_relation2}).  See Table~\ref{table:PDMM} for the procedure of synchronous PDMM.    %Compared to the first scenario, the number of messages to be broadcasted for node $i$ is reduced from two to one, thus saving half of the transmission bandwidth.

\vspace{-1mm}
\section{Convergence Analysis}
\label{sec:convAnalysis}
In this section, we analyze the convergence rates of PDMM for both the synchronous and asynchronous  schemes. %Besides the analysis for general convex functions in (\ref{equ:optProMulti}), we also identify additional functional properties (e.g., functions being strongly convex) that leads to linear convergence of PDMM (See table~\ref{table:scenario_converge}).
Inspired by the convergence analysis of ADMM \cite{Wang12OADM,Deng16ADMM}, we construct a special inequality (presented in \ref{subsec:inequality}) for $L_{\mathcal{P}}(\boldsymbol{x},\boldsymbol{\lambda})$ and then exploit it to analyze both synchronous PDMM (presented in \ref{subsec:rate_syn}) and asynchronous PDMM  (presented in \ref{subsec:rate_asyn}).
%The main mathematical tool that we will use in our proof is the variational inequality (VI), which is widely applied in the convergence analysis of ADMM \cite{Wang12OADM,Deng16ADMM}.

Before constructing the inequality, we first study how to properly choose the matrices in the set $\mathcal{P}$ (presented in \ref{subsec:param_selection}) in order to enable convergence analysis. 

\subsection{Parameter setting}
\label{subsec:param_selection}

In order to analyze the algorithm convergence later on, we first have to select the matrix set $\mathcal{P}$ properly. We impose a condition on each pair of matrices $(\boldsymbol{P}_{p,ij}\succ \boldsymbol{0},\boldsymbol{P}_{d,ij}\succ\boldsymbol{0})$, $(i,j)\in \mathcal{E}$, in $L_{\mathcal{P}}$:

\begin{condition}
In the function $L_{\mathcal{P}}$, each matrix $\boldsymbol{P}_{d,ij}$ can be represented in terms of $\boldsymbol{P}_{p,ij}$ as
\begin{align}
\boldsymbol{P}_{d,ij}=\boldsymbol{P}_{p,ij}^{-1}+\Delta \boldsymbol{P}_{d,ij}\quad \forall (i,j)\in \mathcal{E}, \label{equ:G_dp}
\end{align}
where $\Delta \boldsymbol{P}_{d,ij}\succeq \boldsymbol{0}$.
\label{con:G}
\end{condition}

Eq.~(\ref{equ:G_dp}) implies that $\boldsymbol{P}_{p,ij}$ and $\boldsymbol{P}_{d,ij}$ can not be chosen arbitrarily for our convergence analysis. If $\boldsymbol{P}_{p,ij}$ is small, then $\boldsymbol{P}_{d,ij}$ has to be chosen big enough to make (\ref{equ:G_dp}) hold, and vice versa. One special setup for $(\boldsymbol{P}_{p,ij},\boldsymbol{P}_{d,ij})$ is to let $\boldsymbol{P}_{d,ij}=\boldsymbol{P}_{p,ij}^{-1}$, or equivalently, $\Delta \boldsymbol{P}_{d,ij}=\boldsymbol{0}$. This leads to the application of Proposition~\ref{prop:lambda_x_relation}, which reduces two minimizations to one minimization for each activated node.

One simple setup in Condition~\ref{con:G} is to let all the matrices in $\mathcal{P}$ take scalar form. That is setting $(\boldsymbol{P}_{p,ij},\boldsymbol{P}_{d,ij})$, $(i,j)\in \mathcal{E}$, to be identity matrices multiplied by positive parameters:
\begin{align}
(\boldsymbol{P}_{p,ij},\boldsymbol{P}_{d,ij})=(\gamma_{p,ij}\boldsymbol{I}_{n_{ij}}, \gamma_{d,ij} \boldsymbol{I}_{n_{ij}}) \label{equ:gamma_setup}
\end{align}
where $\gamma_{p,ij}>0$, $ \gamma_{d,ij}> 0$ and $\gamma_{d,ij}\gamma_{p,ij}\geq 1$. It is worth noting that matrix form of $(\boldsymbol{P}_{p,ij},\boldsymbol{P}_{d,ij})$ might lead to faster convergence for some optimization problems.

%\begin{remark}
%We point out that (\ref{equ:G_dp}) is not a necessary condition for the algorithmic convergence. We will verify the statement by experiment in Subsection~\ref{subsec:experiment}.
%\end{remark}

%\begin{remark}
%In the function $L_{\mathcal{P}}$, each pair of matrices $(\boldsymbol{P}_{p,ij},\boldsymbol{P}_{d,ij})$ can be represented in terms of a triplet $(\boldsymbol{P}_{b,ij},\Delta \boldsymbol{P}_{p,ij},\Delta \boldsymbol{P}_{d,ij})$ as follows:
%\begin{align}
%\boldsymbol{P}_{p,ij}&=\boldsymbol{P}_{b,ij}+\Delta \boldsymbol{P}_{p,ij} \quad (i,j)\in \mathcal{E}  \label{equ:G_p_ij}\\
%\boldsymbol{P}_{d,ij}&=\boldsymbol{P}_{b,ij}^{-1}+\Delta \boldsymbol{P}_{d,ij} \quad (i,j)\in \mathcal{E}, \label{equ:G_d_ij}
%\end{align}
%where $\boldsymbol{P}_{b,ij}\succ \boldsymbol{0}$, $\Delta \boldsymbol{P}_{p,ij}\succeq \boldsymbol{0}$ and $\Delta %\boldsymbol{P}_{d,ij}\succeq \boldsymbol{0}$.
%\label{con:G1}
%\end{remark}

\subsection{Constructing an inequality}
\label{subsec:inequality}

Before introducing the inequality, we first define a new function which involves $\{f_i, i\in\mathcal{V}\}$ and their conjugates:  
%\begin{align}
%&\hspace{-2mm}p(\boldsymbol{x},\boldsymbol{\lambda}) \nonumber \\
%&\hspace{-2mm}=\sum_{i\in \mathcal{V}} \left[f_i(\boldsymbol{x}_i)\hspace{-0.5mm}+\hspace{-0.5mm}f_i^{\ast}(\boldsymbol{A}_{i}^{T}\boldsymbol{\lambda}_{i})\right]\hspace{-0.5mm}-\hspace{-0.5mm}\sum_{(i,j)\in \mathcal{E}} \boldsymbol{c}_{ij}^T\frac{(\boldsymbol{\lambda}_{i|j}+\boldsymbol{\lambda}_{j|i})}{2}.
%\label{equ:p_x_lambda}
%\end{align}
\begin{align}
p(\boldsymbol{x},\boldsymbol{\lambda})\hspace{-0mm}=\sum_{i\in \mathcal{V}} \Big[f_i(\boldsymbol{x}_i)\hspace{-0.5mm}+\hspace{-0.5mm}f_i^{\ast}(\boldsymbol{A}_{i}^{T}\boldsymbol{\lambda}_{i})-\frac{1}{2}\hspace{-0.6mm}\sum_{j\in \mathcal{N}_i}\hspace{-0.7mm}\boldsymbol{c}_{ij}^T\boldsymbol{\lambda}_{i|j}\Big].
\label{equ:p_x_lambda}
\end{align}
By studying (\ref{equ:optProMulti_re}) and (\ref{equ:optProMultiDual2}) at a saddle point $(\boldsymbol{x}^{\star},\boldsymbol{\lambda}^{\star})$ of $L_{\mathcal{P}}$, one can show that $p(\boldsymbol{x}^{\star},\boldsymbol{\lambda}^{\star})=0$. 

With $p(\boldsymbol{x},\boldsymbol{\lambda})$, the inequality for $L_{\mathcal{P}}$ can be described as:

\begin{lemma}
Let $(\boldsymbol{x}^{\star},\boldsymbol{\lambda}^{\star})$ be a saddle point of $L_{\mathcal{P}}$. Then for any $(\boldsymbol{x},\boldsymbol{\lambda})$, there is
\begin{align}
0\leq& \sum_{i\in \mathcal{V}}\sum_{j\in \mathcal{N}_i}\Big[(\boldsymbol{\lambda}_{i|j}-\boldsymbol{\lambda}_{i|j}^{\star})^T\Big(\boldsymbol{A}_{j i}\boldsymbol{x}_j-\frac{\boldsymbol{c}_{ij}}{2}\Big)\nonumber\\
&\hspace{16mm}-(\boldsymbol{x}_i-\boldsymbol{x}_i^{\star})^T\boldsymbol{A}_{i j}^T\boldsymbol{\lambda}_{j|i}\Big] +p(\boldsymbol{x},\boldsymbol{\lambda}), \label{equ:VI_gen1}
\end{align}
where equality holds if and only if $(\boldsymbol{x},\boldsymbol{\lambda})$ satisfies
\begin{align}
&\hspace{-2mm}\boldsymbol{0}\in \partial_{\boldsymbol{x}_i}f_i(\boldsymbol{x}_i^{\star})\hspace{-0.5mm}-\hspace{-01mm}\sum_{j\in \mathcal{N}_i}\hspace{-1.2mm}\boldsymbol{A}_{i j}^T\boldsymbol{\lambda}_{j|i} & \forall i\in \mathcal{V}
\label{equ:p_opt_cond2} \\
&\hspace{-2mm}\boldsymbol{0}\in \partial_{\boldsymbol{x}_i}f_i(\boldsymbol{x}_i)\hspace{-0.5mm}-\hspace{-01mm}\sum_{j\in \mathcal{N}_i}\hspace{-1.2mm}\boldsymbol{A}_{i j}^T\boldsymbol{\lambda}_{j|i}^{\star} & \forall i\in \mathcal{V}.
\label{equ:p_opt_cond4}
\end{align}
%\begin{align}
%&\hspace{-2mm}\boldsymbol{0}\in \partial_{\boldsymbol{\lambda}_{i|j}}\left[f_i^{\ast}(\boldsymbol{A}_{i}^T\boldsymbol{\lambda}_{i})\right]\hspace{-0.5mm}-\hspace{-0.5mm}\boldsymbol{A}_{i j}\boldsymbol{x}_i^{\star} & \forall [i,j]\in \vec{\mathcal{E}} \label{equ:p_opt_cond1} \\
%&\hspace{-2mm}\boldsymbol{0}\in \partial_{\boldsymbol{x}_i}f_i(\boldsymbol{x}_i^{\star})\hspace{-0.5mm}-\hspace{-01mm}\sum_{j\in \mathcal{N}(i)}\hspace{-1.2mm}\boldsymbol{A}_{i j}^T\boldsymbol{\lambda}_{j|i} & \forall i\in \mathcal{V}
%\label{equ:p_opt_cond2} \\
%&\hspace{-2mm}\boldsymbol{0}\in \partial_{\boldsymbol{\lambda}_{i|j}}\left[f_i^{\ast}(\boldsymbol{A}_{i}^T\boldsymbol{\lambda}_{i}^{\star})\right]\hspace{-0.5mm}-\hspace{-0.5mm}\boldsymbol{A}_{i j}\boldsymbol{x}_i & \forall [i,j]\in \vec{\mathcal{E}} \label{equ:p_opt_cond3} \\
%&\hspace{-2mm}\boldsymbol{0}\in \partial_{\boldsymbol{x}_i}f_i(\boldsymbol{x}_i)\hspace{-0.5mm}-\hspace{-01mm}\sum_{j\in \mathcal{N}(i)}\hspace{-1.2mm}\boldsymbol{A}_{i j}^T\boldsymbol{\lambda}_{j|i}^{\star} & \forall i\in \mathcal{V}.
%\label{equ:p_opt_cond4}
%\end{align}
\label{lemma:inequality}
\end{lemma}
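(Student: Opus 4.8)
The plan is to show that the right-hand side of (\ref{equ:VI_gen1}) can be rearranged, after substituting the definition (\ref{equ:p_x_lambda}) of $p$, into a sum over the nodes $i\in\mathcal{V}$ of two Fenchel--Young gaps for the conjugate pair $(f_i,f_i^{\ast})$, plus a scalar remainder that turns out to be exactly zero at the saddle point. Since each Fenchel--Young gap is nonnegative by the conjugate inequality (\ref{equ:Fenchel_ineq}) (equivalently (\ref{equ:conj_def})--(\ref{equ:conj_def3})), nonnegativity of the whole expression follows immediately; equality then holds if and only if every gap is tight, and I expect this tightness to reproduce precisely the subgradient conditions (\ref{equ:p_opt_cond2})--(\ref{equ:p_opt_cond4}).

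First I would expand the double sum in (\ref{equ:VI_gen1}) and sort every term according to whether it carries a starred or an unstarred variable. The decisive simplification is that the purely bilinear contributions coupling unstarred $\boldsymbol{x}$ with unstarred $\boldsymbol{\lambda}$ cancel identically: relabelling the directed edges $[i,j]\mapsto[j,i]$ in $\sum_{i}\sum_{j\in\mathcal{N}_i}(\boldsymbol{x}_i-\boldsymbol{x}_i^{\star})^T\boldsymbol{A}_{ij}^T\boldsymbol{\lambda}_{j|i}$ turns its unstarred part into the exact negative of the corresponding part of $\sum_{i}\sum_{j\in\mathcal{N}_i}\boldsymbol{\lambda}_{i|j}^T\boldsymbol{A}_{ji}\boldsymbol{x}_j$, so these cancel and only the cross terms (one starred factor paired with one unstarred factor) survive. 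I would then localise the surviving cross terms at each node using the conjugate argument $\boldsymbol{A}_i^T\boldsymbol{\lambda}_i=\sum_{j\in\mathcal{N}_i}\boldsymbol{A}_{ij}^T\boldsymbol{\lambda}_{i|j}$ from (\ref{equ:A_lambda_relation}) together with the saddle-point structure of Theorem~\ref{thoerem:saddlePointGen}. Concretely, the consensus property $\boldsymbol{\lambda}^{\star}\in\Lambda$ (condition (\ref{equ:L_G_opt4})) rewrites the term carrying $\boldsymbol{\lambda}^{\star}$ as $-\sum_i \boldsymbol{x}_i^T\boldsymbol{A}_i^T\boldsymbol{\lambda}_i^{\star}$, which pairs with $\sum_i f_i(\boldsymbol{x}_i)$ to form the gap $f_i(\boldsymbol{x}_i)+f_i^{\ast}(\boldsymbol{A}_i^T\boldsymbol{\lambda}_i^{\star})-(\boldsymbol{A}_i^T\boldsymbol{\lambda}_i^{\star})^T\boldsymbol{x}_i$ up to the constant $\sum_i f_i^{\ast}(\boldsymbol{A}_i^T\boldsymbol{\lambda}_i^{\star})$; and primal feasibility $\boldsymbol{x}^{\star}\in X$ (condition (\ref{equ:L_G_opt3})), applied through $\boldsymbol{A}_{ji}\boldsymbol{x}_j^{\star}-\boldsymbol{c}_{ij}=-\boldsymbol{A}_{ij}\boldsymbol{x}_i^{\star}$, converts the remaining $\boldsymbol{x}^{\star}$/$\boldsymbol{c}_{ij}$ terms into $-\sum_i (\boldsymbol{A}_i^T\boldsymbol{\lambda}_i)^T\boldsymbol{x}_i^{\star}$, which pairs with $\sum_i f_i^{\ast}(\boldsymbol{A}_i^T\boldsymbol{\lambda}_i)$ to form the second gap $f_i^{\ast}(\boldsymbol{A}_i^T\boldsymbol{\lambda}_i)+f_i(\boldsymbol{x}_i^{\star})-(\boldsymbol{A}_i^T\boldsymbol{\lambda}_i)^T\boldsymbol{x}_i^{\star}$ up to the constant $\sum_i f_i(\boldsymbol{x}_i^{\star})$.

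Finally I would collect the leftover constants $-\sum_i\big[f_i(\boldsymbol{x}_i^{\star})+f_i^{\ast}(\boldsymbol{A}_i^T\boldsymbol{\lambda}_i^{\star})\big]$ together with the residual $\boldsymbol{c}_{ij}$-terms. Because $\boldsymbol{A}_i^T\boldsymbol{\lambda}_i^{\star}\in\partial f_i(\boldsymbol{x}_i^{\star})$ by (\ref{equ:L_G_opt1}) and (\ref{equ:L_G_opt4}), Fenchel's inequality holds with equality at the saddle point, so this collapses to $-\sum_i (\boldsymbol{A}_i^T\boldsymbol{\lambda}_i^{\star})^T\boldsymbol{x}_i^{\star}$ plus the residual, and pairing the two orientations of each edge and invoking $\boldsymbol{x}^{\star}\in X$ once more shows the remainder is exactly zero --- this is essentially the same computation that yields $p(\boldsymbol{x}^{\star},\boldsymbol{\lambda}^{\star})=0$. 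With the remainder gone, the right-hand side of (\ref{equ:VI_gen1}) equals $\sum_i$ of the two nonnegative Fenchel--Young gaps, establishing the inequality, and equality forces each gap to vanish, which are precisely the subgradient conditions (\ref{equ:p_opt_cond2})--(\ref{equ:p_opt_cond4}). I expect the main obstacle to be the directed-edge bookkeeping: arranging the $[i,j]\leftrightarrow[j,i]$ relabellings and applying the two saddle relations ($\boldsymbol{\lambda}^{\star}\in\Lambda$ and $\boldsymbol{x}^{\star}\in X$) to exactly the right subset of terms, so that every cross term collapses into a clean node-local pair $(f_i,f_i^{\ast})$ rather than leaving stray cross-node couplings behind.
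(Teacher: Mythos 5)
Your proposal is correct and follows essentially the same route as the paper's proof: both rearrange the right-hand side using the saddle point's primal and dual feasibility ($\boldsymbol{x}^{\star}\in X$, $\boldsymbol{\lambda}^{\star}\in\Lambda$) so that the unstarred bilinear terms cancel under the directed-edge relabelling, reduce what remains to two Fenchel--Young gaps per node plus a remainder killed by $p(\boldsymbol{x}^{\star},\boldsymbol{\lambda}^{\star})=0$, and read off the equality conditions (\ref{equ:p_opt_cond2})--(\ref{equ:p_opt_cond4}) from tightness of those gaps. No gaps to report.
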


\begin{proof}
Given a saddle point $(\boldsymbol{x}^{\star},\boldsymbol{\lambda}^{\star})$ of $L_{\mathcal{P}}$, the right hand side of the inequality (\ref{equ:VI_gen1}) can be reformulated as 
\begin{align}
&\hspace{-1mm} \sum_{i\in \mathcal{V}}\hspace{-0.6mm}\Bigg[\hspace{-0.5mm}\sum_{j\in \mathcal{N}_i}\hspace{-1.5mm}\Big(\hspace{-0.3mm} -\hspace{-0.3mm}\boldsymbol{\lambda}_{i|j}^{\star,T}\Big(\boldsymbol{A}_{j i}\boldsymbol{x}_j-\frac{\boldsymbol{c}_{ij}}{2}\Big)+\boldsymbol{x}_i^{\star,T}\boldsymbol{A}_{i j}^T\boldsymbol{\lambda}_{j|i} \nonumber \\
&\hspace{18mm}-\hspace{-0.5mm}\boldsymbol{\lambda}_{i|j}^T\boldsymbol{c}_{ij}\Big)\hspace{-0.5mm}+\hspace{-0.5mm}f_i(\boldsymbol{x}_i)\hspace{-0.5mm}+\hspace{-0.5mm}f_i^{\ast}(\boldsymbol{A}_{i}^T\boldsymbol{\lambda}_{i})\hspace{-0.4mm}\Bigg] \nonumber  \\
&\hspace{-2mm}=\hspace{-0.5mm} \sum_{i\in \mathcal{V}}\hspace{-0.6mm}\Bigg[\hspace{-0.5mm}\sum_{j\in \mathcal{N}_i}\hspace{-1.5mm}\Big(\hspace{-0.3mm}-\hspace{-0.3mm}\boldsymbol{\lambda}_{j|i}^{\star,T}\boldsymbol{A}_{i j}\boldsymbol{x}_i+(\boldsymbol{A}_{j i}\boldsymbol{x}_j^{\star}-\boldsymbol{c}_{ij})^T\boldsymbol{\lambda}_{i|j}\Big) \nonumber \\
&\hspace{8mm}+\hspace{-0.5mm}f_i(\boldsymbol{x}_i)\hspace{-0.5mm}+\hspace{-0.5mm}f_i^{\ast}(\boldsymbol{A}_{i}^T\boldsymbol{\lambda}_{i})+\frac{1}{2}\hspace{-0.6mm}\sum_{j\in \mathcal{N}_i}\hspace{-0.7mm}\boldsymbol{c}_{ij}^T\boldsymbol{\lambda}_{i|j}^{\star}\hspace{-0.4mm}\Bigg]
\nonumber \\
&\hspace{-2mm}=\hspace{-0.5mm} \sum_{i\in \mathcal{V}}\hspace{-0.6mm}\Bigg[\hspace{-0.5mm}\sum_{j\in \mathcal{N}_i}\hspace{-1.5mm}\Big(\hspace{-0.3mm}-\hspace{-0.3mm}\boldsymbol{\lambda}_{i|j}^{\star,T}\boldsymbol{A}_{i j}\boldsymbol{x}_i-\boldsymbol{x}_i^{\star,T}\boldsymbol{A}_{i j}^T\boldsymbol{\lambda}_{i|j}\Big)\hspace{-0.5mm}+\hspace{-0.5mm}f_i(\boldsymbol{x}_i)\nonumber \\
&\hspace{8mm}+\hspace{-0.5mm}f_i^{\ast}(\boldsymbol{A}_{i}^T\boldsymbol{\lambda}_{i})+\frac{1}{2}\hspace{-0.6mm}\sum_{j\in \mathcal{N}_i}\hspace{-0.7mm}\boldsymbol{c}_{ij}^T\boldsymbol{\lambda}_{i|j}^{\star}\hspace{-0.4mm}\Bigg],
\label{equ:VI_gen2}
\end{align}
where the last equality is obtained by using $(\boldsymbol{x}^{\star},\boldsymbol{\lambda}^{\star})\in (X,\Lambda)$. %applying (\ref{equ:L_G_opt3})-(\ref{equ:L_G_opt4}) with the substitution $(\boldsymbol{x}',\boldsymbol{\lambda}')=(\boldsymbol{x}^{\star},\boldsymbol{\lambda}^{\star})$. 
Using Fenchel's inequalities (\ref{equ:Fenchel_ineq}), we conclude that for any $i\in \mathcal{V}$, the following two inequalities hold
\begin{align}
\displaystyle f_i^{\ast}(\boldsymbol{A}_{i}^T\boldsymbol{\lambda}_{i})\hspace{-0.5mm}-\hspace{-0.5mm}\boldsymbol{x}_i^{\star,T}\hspace{-0.5mm}(\boldsymbol{A}_{i}^T\boldsymbol{\lambda}_{i})&\geq -\hspace{-0.5mm}f_i(\boldsymbol{x}_i^{\star}) \label{equ:VI_gen3} \\
\displaystyle f_i(\boldsymbol{x}_i)-\boldsymbol{x}_i^{T}(\boldsymbol{A}_{i}^T\boldsymbol{\lambda}_{i})^{\star} &\geq \hspace{-0.5mm} -f_i^{\ast}(\boldsymbol{A}_{i}^T\boldsymbol{\lambda}_{i}^{\star}). \label{equ:VI_gen4}
\end{align}

Finally, combining (\ref{equ:VI_gen2})-(\ref{equ:VI_gen4}) and the fact that $p(\boldsymbol{x}^{\star},\boldsymbol{\lambda}^{\star})=0$ produces the inequality (\ref{equ:VI_gen1}).
The equality holds if and only if (\ref{equ:VI_gen3})-(\ref{equ:VI_gen4}) hold, of which the optimality conditions are given by (\ref{equ:p_opt_cond2})-(\ref{equ:p_opt_cond4}) (see (\ref{equ:conj_def})-(\ref{equ:conj_def3}) for the argument).
\end{proof}

Lemma~\ref{lemma:inequality} shows that the quantity on the right hand side of (\ref{equ:VI_gen1}) is always lower-bounded by zero. In the next two subsections, we will construct proper upper bounds for the quantity by replacing $(\boldsymbol{x},\boldsymbol{\lambda})$ with real estimate of PDMM. The algorithmic convergence will  be established by showing that the upper bounds approach to zero when iteration increases.

The conditions (\ref{equ:p_opt_cond2})-(\ref{equ:p_opt_cond4}) in Lemma~\ref{lemma:inequality} are not sufficient for showing that $(\boldsymbol{x},\boldsymbol{\lambda})$ is a saddle point of $L_{\mathcal{P}}$. The primal and dual feasibilities $\boldsymbol{x}\in X$ and $\boldsymbol{\lambda}\in \Lambda$ are also required to complete the argument, as shown in Lemma~\ref{lemma:primal_saddleP_converse}, \ref{lemma:dual_saddleP_converse} and \ref{lemma:mix_saddleP_converse} below. Lemma~\ref{lemma:primal_saddleP_converse} and \ref{lemma:dual_saddleP_converse} are preliminary to show that $(\boldsymbol{x},\boldsymbol{\lambda})$ is a saddle point of $L_{\mathcal{P}}$ as presented in Lemma~\ref{lemma:mix_saddleP_converse}. These three lemmas will be used in the next two subsections for convergence analysis.

\begin{lemma}
Let $(\boldsymbol{x}^{\star},\boldsymbol{\lambda}^{\star})$ be a saddle point of $L_{\mathcal{P}}$. 
Given $\boldsymbol{x}=\boldsymbol{x}'$ which satisfies (\ref{equ:p_opt_cond4}) and $\boldsymbol{x}'\in X$, then $(\boldsymbol{x}',\boldsymbol{\lambda}^{\star})$ is a saddle point of $L_{\mathcal{P}}$.
\label{lemma:primal_saddleP_converse}
\end{lemma}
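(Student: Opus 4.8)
The plan is to exploit that $L_{\mathcal{P}}(\boldsymbol{x},\boldsymbol{\lambda})$ is convex in $\boldsymbol{x}$ for fixed $\boldsymbol{\lambda}$ and concave in $\boldsymbol{\lambda}$ for fixed $\boldsymbol{x}$ (the penalty $h_{\mathcal{P}_p}$ is convex, $-h_{\mathcal{P}_d}$ is concave, and $L_{pd}$ already has this structure). For such a function the two first-order stationarity conditions are \emph{sufficient} for a saddle point, so it is enough to verify $\boldsymbol{0}\in\partial_{\boldsymbol{x}}L_{\mathcal{P}}(\boldsymbol{x}',\boldsymbol{\lambda}^{\star})$ and $\boldsymbol{0}\in\partial_{\boldsymbol{\lambda}}\big[-L_{\mathcal{P}}(\boldsymbol{x}',\boldsymbol{\lambda}^{\star})\big]$. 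Before doing so I would record two facts that come for free: since $(\boldsymbol{x}^{\star},\boldsymbol{\lambda}^{\star})$ is a saddle point of $L_{\mathcal{P}}$, Theorem~\ref{thoerem:saddlePointGen} (conditions (\ref{equ:L_G_opt3})--(\ref{equ:L_G_opt4})) yields $\boldsymbol{x}^{\star}\in X$ and, crucially, $\boldsymbol{\lambda}^{\star}\in\Lambda$, i.e. $\boldsymbol{\lambda}_{i|j}^{\star}=\boldsymbol{\lambda}_{j|i}^{\star}$ for every $(i,j)\in\mathcal{E}$.

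For the $\boldsymbol{x}$-stationarity I would differentiate $L_{\mathcal{P}}$ with respect to each block $\boldsymbol{x}_i$. Collecting the contributions from $f_i$, from the bilinear term $-\boldsymbol{\lambda}_{j|i}^{T}(\boldsymbol{A}_{i j}\boldsymbol{x}_i-\boldsymbol{c}_{ij})$, and from the quadratic penalty $h_{\mathcal{P}_p}$ gives the inclusion
\[
\boldsymbol{0}\in\partial f_i(\boldsymbol{x}_i')-\sum_{j\in\mathcal{N}_i}\boldsymbol{A}_{i j}^{T}\boldsymbol{\lambda}_{j|i}^{\star}+\sum_{j\in\mathcal{N}_i}\boldsymbol{A}_{i j}^{T}\boldsymbol{P}_{p,ij}\big(\boldsymbol{A}_{i j}\boldsymbol{x}_i'+\boldsymbol{A}_{j i}\boldsymbol{x}_j'-\boldsymbol{c}_{ij}\big).
\]
Because $\boldsymbol{x}'\in X$, each residual $\boldsymbol{A}_{i j}\boldsymbol{x}_i'+\boldsymbol{A}_{j i}\boldsymbol{x}_j'-\boldsymbol{c}_{ij}$ vanishes, so the penalty term drops out and what remains is precisely the hypothesis (\ref{equ:p_opt_cond4}). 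Hence $\boldsymbol{x}'$ minimizes $L_{\mathcal{P}}(\cdot,\boldsymbol{\lambda}^{\star})$, and this part is routine.

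The $\boldsymbol{\lambda}$-stationarity is the delicate step. Differentiating the $\boldsymbol{\lambda}_{i|j}$ block, the term $-f_i^{\ast}(\boldsymbol{A}_i^{T}\boldsymbol{\lambda}_i)$ contributes $-\boldsymbol{A}_{i j}\boldsymbol{s}_i$ for a \emph{single} $\boldsymbol{s}_i\in\partial f_i^{\ast}(\boldsymbol{A}_i^{T}\boldsymbol{\lambda}_i^{\star})$ shared across all neighbours $j$, while the bilinear and penalty terms contribute $-(\boldsymbol{A}_{j i}\boldsymbol{x}_j'-\boldsymbol{c}_{ij})$ and $-\boldsymbol{P}_{d,ij}(\boldsymbol{\lambda}_{i|j}^{\star}-\boldsymbol{\lambda}_{j|i}^{\star})$. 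Using $\boldsymbol{\lambda}^{\star}\in\Lambda$ kills the penalty term, and using $\boldsymbol{x}'\in X$ rewrites $\boldsymbol{c}_{ij}-\boldsymbol{A}_{j i}\boldsymbol{x}_j'=\boldsymbol{A}_{i j}\boldsymbol{x}_i'$, so the requirement reduces to finding $\boldsymbol{s}_i\in\partial f_i^{\ast}(\boldsymbol{A}_i^{T}\boldsymbol{\lambda}_i^{\star})$ with $\boldsymbol{A}_{i j}\boldsymbol{s}_i=\boldsymbol{A}_{i j}\boldsymbol{x}_i'$ for every $j\in\mathcal{N}_i$. I would then claim that $\boldsymbol{s}_i=\boldsymbol{x}_i'$ is admissible: invoking $\boldsymbol{\lambda}^{\star}\in\Lambda$ once more to replace $\boldsymbol{\lambda}_{j|i}^{\star}$ by $\boldsymbol{\lambda}_{i|j}^{\star}$ turns the left-hand side of (\ref{equ:p_opt_cond4}) into $\boldsymbol{A}_i^{T}\boldsymbol{\lambda}_i^{\star}=\sum_{j\in\mathcal{N}_i}\boldsymbol{A}_{i j}^{T}\boldsymbol{\lambda}_{i|j}^{\star}$, so that (\ref{equ:p_opt_cond4}) reads $\boldsymbol{A}_i^{T}\boldsymbol{\lambda}_i^{\star}\in\partial f_i(\boldsymbol{x}_i')$; the conjugate subgradient correspondence (\ref{equ:conj_def2})--(\ref{equ:conj_def3}) then delivers $\boldsymbol{x}_i'\in\partial f_i^{\ast}(\boldsymbol{A}_i^{T}\boldsymbol{\lambda}_i^{\star})$, which is exactly the admissibility needed.

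The main obstacle is this last step. The $\boldsymbol{\lambda}$-block stationarity demands a \emph{common} subgradient of $f_i^{\ast}$ that simultaneously matches $\boldsymbol{A}_{i j}\boldsymbol{x}_i'$ for all $j\in\mathcal{N}_i$, and the two ingredients that make it go through are (i) the index swap $\boldsymbol{\lambda}_{j|i}^{\star}=\boldsymbol{\lambda}_{i|j}^{\star}$ afforded by dual feasibility, which is what lets the hypothesis (\ref{equ:p_opt_cond4}), stated in terms of $\boldsymbol{\lambda}_{j|i}^{\star}$, be read inside the argument $\boldsymbol{A}_i^{T}\boldsymbol{\lambda}_i^{\star}$ of $f_i^{\ast}$, and (ii) the Fenchel duality flip between $\partial f_i$ and $\partial f_i^{\ast}$. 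Once both stationarity inclusions are in hand, convex--concavity of $L_{\mathcal{P}}$ upgrades them to the saddle-point inequalities $L_{\mathcal{P}}(\boldsymbol{x}',\boldsymbol{\lambda})\le L_{\mathcal{P}}(\boldsymbol{x}',\boldsymbol{\lambda}^{\star})\le L_{\mathcal{P}}(\boldsymbol{x},\boldsymbol{\lambda}^{\star})$ for all $\boldsymbol{x},\boldsymbol{\lambda}$, which is the assertion of the lemma.
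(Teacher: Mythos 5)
Your proposal is correct and follows essentially the same route as the paper: the paper simply observes that (\ref{equ:p_opt_cond4}), $\boldsymbol{x}'\in X$ and $\boldsymbol{\lambda}^{\star}\in\Lambda$ are exactly the saddle-point characterization (\ref{equ:L_G_opt1})--(\ref{equ:L_G_opt4}) of Theorem~\ref{thoerem:saddlePointGen} applied to the pair $(\boldsymbol{x}',\boldsymbol{\lambda}^{\star})$, while you verify that characterization by explicit block-wise stationarity. Your careful treatment of the $\boldsymbol{\lambda}$-block (the index swap via $\boldsymbol{\lambda}^{\star}\in\Lambda$ followed by the Fenchel flip to get $\boldsymbol{x}_i'\in\partial f_i^{\ast}(\boldsymbol{A}_i^{T}\boldsymbol{\lambda}_i^{\star})$) is precisely the content the paper delegates to its remark that the conjugate-side optimality conditions are redundant given (\ref{equ:L_G_opt1})--(\ref{equ:L_G_opt4}).
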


\begin{proof}
By using (\ref{equ:p_opt_cond4}) and the fact that $\boldsymbol{x}'\in X$ and $\boldsymbol{\lambda}^{\star}\in \Lambda$, it is immediate from (\ref{equ:L_G_opt1})-(\ref{equ:L_G_opt4}) that $(\boldsymbol{x}',\boldsymbol{\lambda}^{\star})$ is a saddle point of $L_{\mathcal{P}}$.
%It is clear from (\ref{equ:p_opt_cond2}), $\boldsymbol{x}'\in X$ and (\ref{equ:KKT_prim_1})-(\ref{equ:KKT_prim_2}) that $(\boldsymbol{x}',\boldsymbol{\delta}^{\star})$ where $\{\boldsymbol{\delta}_{ij}^{\star}=\boldsymbol{\lambda}_{i|j}^{\star}=\boldsymbol{\lambda}_{j|i}^{\star}|(i,j)\in \mathcal{E}\}$ is a saddle point of the primal Lagrangian $L_{p}$. As a result, we can conclude from Lemma~\ref{lemma:P2DSaddlePoint}-\ref{lemma:saddlePointForward} and Theorem~\ref{thoerem:saddlePointGen} that $(\boldsymbol{x}',\boldsymbol{\lambda}^{\star})$ is a saddle point of $L_{\mathcal{P}}$.
\end{proof}

\begin{lemma}
Let $(\boldsymbol{x}^{\star},\boldsymbol{\lambda}^{\star})$ be a saddle point of $L_{\mathcal{P}}$.
Given $\boldsymbol{\lambda}=\boldsymbol{\lambda}'$ which satisfies (\ref{equ:p_opt_cond2}) and $\boldsymbol{\lambda}'\in \Lambda$, then $(\boldsymbol{x}^{\star},\boldsymbol{\lambda}')$ is a saddle point of $L_{\mathcal{P}}$.
\label{lemma:dual_saddleP_converse}
\end{lemma}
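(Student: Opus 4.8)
The plan is to dualize the argument used for Lemma~\ref{lemma:primal_saddleP_converse}, exchanging the roles of $\boldsymbol{x}$ and $\boldsymbol{\lambda}$. By the converse half of Theorem~\ref{thoerem:saddlePointGen}, a pair is a saddle point of $L_{\mathcal{P}}$ precisely when it satisfies the three optimality conditions (\ref{equ:L_G_opt1})-(\ref{equ:L_G_opt4}). Hence I would establish that the candidate point $(\boldsymbol{x}^{\star},\boldsymbol{\lambda}')$ meets all three conditions, after which the theorem delivers the claim directly; there is no separate inequality to manipulate.

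First I would record that the stationarity condition (\ref{equ:L_G_opt1}), evaluated at the pair $(\boldsymbol{x}^{\star},\boldsymbol{\lambda}')$, reads $\sum_{j\in\mathcal{N}_i}\boldsymbol{A}_{i j}^T\boldsymbol{\lambda}_{j|i}'\in\partial_{\boldsymbol{x}_i}f_i(\boldsymbol{x}_i^{\star})$ for every $i\in\mathcal{V}$, which is exactly the hypothesis that $\boldsymbol{\lambda}'$ satisfies (\ref{equ:p_opt_cond2}). Next, the dual-feasibility condition (\ref{equ:L_G_opt4}), i.e. $\boldsymbol{\lambda}_{i|j}'-\boldsymbol{\lambda}_{j|i}'=\boldsymbol{0}$ for all $(i,j)\in\mathcal{E}$, is nothing but the assumption $\boldsymbol{\lambda}'\in\Lambda$. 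Finally, the primal-feasibility condition (\ref{equ:L_G_opt3}) for the $\boldsymbol{x}$-component, $\boldsymbol{A}_{i j}\boldsymbol{x}_i^{\star}+\boldsymbol{A}_{j i}\boldsymbol{x}_j^{\star}-\boldsymbol{c}_{ij}=\boldsymbol{0}$, holds because $(\boldsymbol{x}^{\star},\boldsymbol{\lambda}^{\star})$ is a saddle point of $L_{\mathcal{P}}$ and therefore, by Theorem~\ref{thoerem:saddlePointGen}, $\boldsymbol{x}^{\star}$ solves the primal problem, so $\boldsymbol{x}^{\star}\in X$.

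With (\ref{equ:L_G_opt1})-(\ref{equ:L_G_opt4}) all in force at $(\boldsymbol{x}^{\star},\boldsymbol{\lambda}')$, the converse direction of Theorem~\ref{thoerem:saddlePointGen} immediately certifies that $(\boldsymbol{x}^{\star},\boldsymbol{\lambda}')$ is a saddle point of $L_{\mathcal{P}}$. I expect no genuine obstacle here; the only point requiring care is the bookkeeping of which variable plays the ``$\star$'' role and which the ``prime'' role, so that the given hypothesis (\ref{equ:p_opt_cond2}) is correctly matched against the instance of (\ref{equ:L_G_opt1}) at $(\boldsymbol{x}^{\star},\boldsymbol{\lambda}')$ rather than at $(\boldsymbol{x}',\boldsymbol{\lambda}')$.
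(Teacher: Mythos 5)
Your proposal is correct and matches the paper's argument: the paper proves Lemma~\ref{lemma:dual_saddleP_converse} by mirroring the proof of Lemma~\ref{lemma:primal_saddleP_converse}, i.e., by checking that the candidate point satisfies the optimality conditions (\ref{equ:L_G_opt1})--(\ref{equ:L_G_opt4}) of Theorem~\ref{thoerem:saddlePointGen}, exactly as you do. Your version is merely more explicit about which hypothesis supplies which condition.
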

\begin{proof}
The proof is similar to that for Lemma~\ref{lemma:primal_saddleP_converse}. 
\end{proof}

\begin{lemma}
Let $(\boldsymbol{x}^{\star},\boldsymbol{\lambda}^{\star})$ be a saddle point of $L_{\mathcal{P}}$.
Given $(\boldsymbol{x},\boldsymbol{\lambda})=(\boldsymbol{x}',\boldsymbol{\lambda}')$ which satisfy (\ref{equ:p_opt_cond2})-(\ref{equ:p_opt_cond4}) and $(\boldsymbol{x}',\boldsymbol{\lambda}')\in (X,\Lambda)$, then $(\boldsymbol{x}',\boldsymbol{\lambda}')$ is a saddle point of $L_{\mathcal{P}}$.
\label{lemma:mix_saddleP_converse}
\end{lemma}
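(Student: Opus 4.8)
The plan is to realize $(\boldsymbol{x}',\boldsymbol{\lambda}')$ as a saddle point by combining the two ``one-sided'' converse lemmas with the symmetry property of the saddle-point set, rather than verifying the optimality conditions (\ref{equ:L_G_opt1})--(\ref{equ:L_G_opt4}) from scratch. First I would note that the hypothesis splits cleanly: evaluated at $(\boldsymbol{x},\boldsymbol{\lambda})=(\boldsymbol{x}',\boldsymbol{\lambda}')$, condition (\ref{equ:p_opt_cond4}) involves only $\boldsymbol{x}'$ together with the fixed $\boldsymbol{\lambda}^{\star}$, while (\ref{equ:p_opt_cond2}) involves only $\boldsymbol{\lambda}'$ together with the fixed $\boldsymbol{x}^{\star}$. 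These are exactly the hypotheses consumed by Lemma~\ref{lemma:primal_saddleP_converse} and Lemma~\ref{lemma:dual_saddleP_converse}, respectively.

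Next I would invoke these two lemmas separately. Because $\boldsymbol{x}'$ satisfies (\ref{equ:p_opt_cond4}) and $\boldsymbol{x}'\in X$, Lemma~\ref{lemma:primal_saddleP_converse} gives that $(\boldsymbol{x}',\boldsymbol{\lambda}^{\star})$ is a saddle point of $L_{\mathcal{P}}$. Symmetrically, because $\boldsymbol{\lambda}'$ satisfies (\ref{equ:p_opt_cond2}) and $\boldsymbol{\lambda}'\in\Lambda$, Lemma~\ref{lemma:dual_saddleP_converse} gives that $(\boldsymbol{x}^{\star},\boldsymbol{\lambda}')$ is a saddle point of $L_{\mathcal{P}}$. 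At this stage I would have manufactured two saddle points whose components are precisely the ingredients to be combined.

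The final step is to apply Lemma~\ref{lemma:saddlePointSet} to the pair of saddle points $(\boldsymbol{x}',\boldsymbol{\lambda}^{\star})$ and $(\boldsymbol{x}^{\star},\boldsymbol{\lambda}')$. Treating the former as the first saddle point and the latter as the second, the lemma guarantees that the two ``cross'' pairings are again saddle points of $L_{\mathcal{P}}$; one of these crosses is $(\boldsymbol{x}',\boldsymbol{\lambda}')$ and the other is $(\boldsymbol{x}^{\star},\boldsymbol{\lambda}^{\star})$, which yields the claim.

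The proof is essentially a stitching together of earlier results, so I expect no analytical difficulty; the one place to be careful is the role-matching in the last step. Since Lemma~\ref{lemma:saddlePointSet} is phrased for an arbitrary pair of saddle points, I must ensure the $\boldsymbol{x}$-coordinate is drawn from $(\boldsymbol{x}',\boldsymbol{\lambda}^{\star})$ and the $\boldsymbol{\lambda}$-coordinate from $(\boldsymbol{x}^{\star},\boldsymbol{\lambda}')$, so that the desired cross is indeed $(\boldsymbol{x}',\boldsymbol{\lambda}')$ and not the trivial $(\boldsymbol{x}^{\star},\boldsymbol{\lambda}^{\star})$. Note also that the feasibility hypotheses $(\boldsymbol{x}',\boldsymbol{\lambda}')\in(X,\Lambda)$ are not used directly here — they are already absorbed into Lemmas~\ref{lemma:primal_saddleP_converse} and \ref{lemma:dual_saddleP_converse}.
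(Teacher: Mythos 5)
Your proposal is correct and follows essentially the same route as the paper: invoke Lemma~\ref{lemma:primal_saddleP_converse} and Lemma~\ref{lemma:dual_saddleP_converse} to obtain the saddle points $(\boldsymbol{x}',\boldsymbol{\lambda}^{\star})$ and $(\boldsymbol{x}^{\star},\boldsymbol{\lambda}')$, then use the cross-pairing property to conclude. The only cosmetic difference is that you apply Lemma~\ref{lemma:saddlePointSet} directly (with the role-matching you correctly flag), whereas the paper merely says ``by a similar argument''; your version is if anything slightly cleaner.
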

\begin{proof}
It is known from Lemma~\ref{lemma:primal_saddleP_converse} and \ref{lemma:dual_saddleP_converse} that in addition to $(\boldsymbol{x}^{\star},\boldsymbol{\lambda}^{\star})$,  $(\boldsymbol{x}',\boldsymbol{\lambda}^{\star})$ and  $(\boldsymbol{x}^{\star},\boldsymbol{\lambda}')$ are also the saddle points of $L_{\mathcal{P}}$. By using a similar argument as the one for Lemma~\ref{lemma:saddlePointSet}, one can show that $(\boldsymbol{x}',\boldsymbol{\lambda}')$ is a saddle point of $L_{\mathcal{P}}$.
\end{proof}

%It is worth noting that the right hand side of (\ref{equ:VI_gen1}) can be alternatively expressed as
%\begin{align}
%p(\boldsymbol{y}&,\boldsymbol{\lambda})+\sum_{i\in V}\sum_{j\in N(i)}\left(\boldsymbol{y}_{i|j}^{\star,T}\boldsymbol{\lambda}_j-\boldsymbol{\lambda}_i^{\star,T}\boldsymbol{y}_{j|i}\right) \label{equ:VI_gen3} \\
%=&\sum_{i\in V}\sum_{j\in N(i)}\left((\boldsymbol{\lambda}_i-\boldsymbol{\lambda}_i^{\star})^T\boldsymbol{y}_{j|i}-(\boldsymbol{y}_{i|j}-\boldsymbol{y}_{i|j}^{\star})^T\boldsymbol{\lambda}_j\right)+p(\boldsymbol{y},\boldsymbol{\lambda})-p(\boldsymbol{y}^{\star},\boldsymbol{\lambda}^{\star}),  \label{equ:VI_gen4}
%\end{align}
%which is symbolically more similar to the VI for ADMM \cite{Wang12OADM}. Also, it is straightforward that by taking $(\boldsymbol{y},\boldsymbol{\lambda})=(\boldsymbol{y}^{\star},\boldsymbol{\lambda}^{\star})$ in (\ref{equ:VI_gen4}), the equality holds in (\ref{equ:VI_gen1}). In the rest of the paper, we use the simple form (\ref{equ:VI_gen3}) instead of (\ref{equ:VI_gen4}) to save space.

\subsection{Synchronous PDMM}
\label{subsec:rate_syn}
%In this subsection, we analyze the synchronous PDMM as presented in Subsection~\ref{subsec:syn_updating}. We will derive the convergence rates of the algorithm for different functional properties (e.g., the functions are strongly convex or not).

In this subsection, we show that the synchronous PDMM converges with the sub-linear rate $\mathcal{O}(K^{-1})$. In order to obtain the result, we need the following two lemmas.

\begin{lemma}
Let $(\boldsymbol{x}^{\star},\boldsymbol{\lambda}^{\star})$ be a saddle point of $L_{\mathcal{P}}$. The estimate $(\hat{\boldsymbol{x}}^{k+1},\hat{\boldsymbol{\lambda}}^{k+1})$ is obtained by performing (\ref{equ:x_updateSyn})-(\ref{equ:lambda_updateSyn}) under Condition~\ref{con:G}. Then there is
\begin{align}
&\hspace{-1mm}\sum_{i\in \mathcal{V}}\hspace{-0.5mm}\sum_{j\in \mathcal{N}_i}\hspace{-1.5mm}\Big[(\hat{\boldsymbol{\lambda}}_{i|j}^{k+1}\hspace{-0.5mm}-\hspace{-0.5mm}\boldsymbol{\lambda}_{i|j}^{\star})^T\Big(\hspace{-0.5mm}\boldsymbol{A}_{j i}\hat{\boldsymbol{x}}_j^{k+1}\hspace{-0.5mm}-\hspace{-0.5mm}\frac{\boldsymbol{c}_{ij}}{2}\Big)\hspace{-0.5mm}-\hspace{-0.5mm}(\hat{\boldsymbol{x}}_i^{k+1}\hspace{-0.5mm}-\hspace{-0.5mm}\boldsymbol{x}_i^{\star})^T\nonumber \\
&\hspace{8mm}\cdot\boldsymbol{A}_{i j}^T\hat{\boldsymbol{\lambda}}_{j|i}^{k+1}\Big]\hspace{-0.5mm}+\hspace{-0.5mm}p(\hat{\boldsymbol{x}}^{k+1},\hat{\boldsymbol{\lambda}}^{k+1})\leq \hspace{-0.6mm}\sum_{i\in \mathcal{V}}\hspace{-1mm}\sum_{j\in \mathcal{N}_i}d_{i|j}^{k+1},
\label{equ:optCon_syn}
\end{align}
where $d_{i|j}^{k+1}$ is given by
\begin{align}
&\hspace{-1mm}d_{i|j}^{k+1}\hspace{-0.5mm}=\hspace{-0.5mm}\frac{1}{2}\Big(\Big\|\boldsymbol{P}_{p,ij}^{\frac{1}{2}}\boldsymbol{A}_{j i}(\hat{\boldsymbol{x}}_j^{k}\hspace{-0.5mm}-\hspace{-0.5mm}\boldsymbol{x}_{j}^{\star})\hspace{-0.5mm}+\hspace{-0.5mm}\boldsymbol{P}_{p,ij}^{-\frac{1}{2}}(\boldsymbol{\lambda}_{j|i}^{\star}\hspace{-0.5mm}-\hspace{-0.5mm}\hat{\boldsymbol{\lambda}}_{j|i}^{k})\Big\|^2\nonumber \\
&\hspace{3mm}-\hspace{-0.5mm}\Big\|\boldsymbol{P}_{p,ij}^{\frac{1}{2}}\boldsymbol{A}_{j i}(\hat{\boldsymbol{x}}_j^{k+1}-\boldsymbol{x}_{j}^{\star})\hspace{-0.6mm}+\hspace{-0.6mm}\boldsymbol{P}_{p,ij}^{-\frac{1}{2}}(\boldsymbol{\lambda}_{j|i}^{\star}\hspace{-0.5mm}-\hspace{-0.5mm}\hat{\boldsymbol{\lambda}}_{j|i}^{k+1})\Big\|^2\nonumber \\ 
&\hspace{3mm}-\hspace{-0.5mm}\Big\|\boldsymbol{P}_{p,ij}^{\frac{1}{2}}(\boldsymbol{A}_{i j}\hat{\boldsymbol{x}}_i^{k+1}\hspace{-0.6mm}+\hspace{-0.6mm}\boldsymbol{A}_{j i}\hat{\boldsymbol{x}}_j^{k}\hspace{-0.6mm}-\hspace{-0.6mm}\boldsymbol{c}_{ij})\hspace{-0.6mm}+\hspace{-0.6mm}\boldsymbol{P}_{p,ij}^{-\frac{1}{2}}(\hat{\boldsymbol{\lambda}}_{i|j}^{k+1}\hspace{-0.6mm}-\hspace{-0.6mm}\hat{\boldsymbol{\lambda}}_{j|i}^{k})\Big\|^2\nonumber\\
&\hspace{3mm}+\hspace{-0.5mm}\|\Delta \boldsymbol{P}_{d,ij}^{\frac{1}{2}}(\boldsymbol{\lambda}_{j|i}^{\star}\hspace{-0.5mm}-\hspace{-0.5mm}\hat{\boldsymbol{\lambda}}_{j|i}^{k})\|^2\hspace{-1mm}-\hspace{-0.5mm}\|\Delta \boldsymbol{P}_{d,ij}^{\frac{1}{2}}(\boldsymbol{\lambda}_{j|i}^{\star}\hspace{-0.5mm}-\hspace{-0.5mm}\hat{\boldsymbol{\lambda}}_{j|i}^{k+1})\|^2\nonumber \\
&\hspace{3mm}-\hspace{-0.5mm}\|\Delta \boldsymbol{P}_{d,ij}^{\frac{1}{2}}(\hat{\boldsymbol{\lambda}}_{i|j}^{k+1}\hspace{-0.5mm}-\hspace{-0.5mm}\hat{\boldsymbol{\lambda}}_{j|i}^{k}) \|^2\Big),
\label{equ:d_syn}
\end{align}
where $\boldsymbol{P}_{p,ij}=\boldsymbol{P}_{p,ij}^{\frac{1}{2}}\boldsymbol{P}_{p,ij}^{\frac{1}{2}}$ and $\Delta \boldsymbol{P}_{d,ij}=\Delta\boldsymbol{P}_{d,ij}^{\frac{1}{2}}\Delta\boldsymbol{P}_{d,ij}^{\frac{1}{2}}$.
\label{lemma:gen_syn}
\end{lemma}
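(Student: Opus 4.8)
The plan is to view the left-hand side of (\ref{equ:optCon_syn}) as the same expression that Lemma~\ref{lemma:inequality} bounds below by zero in (\ref{equ:VI_gen1}), now evaluated at the synchronous iterate $(\hat{\boldsymbol{x}}^{k+1},\hat{\boldsymbol{\lambda}}^{k+1})$; the content of the lemma is therefore to establish a matching \emph{upper} bound by $\sum_{i}\sum_{j\in\mathcal{N}_i}d_{i|j}^{k+1}$. First I would write down the optimality conditions of the two node-wise minimizations (\ref{equ:x_updateSyn}) and (\ref{equ:lambda_updateSyn}). The $\boldsymbol{x}_i$-step produces a subgradient $\boldsymbol{g}_i\in\partial f_i(\hat{\boldsymbol{x}}_i^{k+1})$ of the explicit form $\boldsymbol{g}_i=\sum_{j\in\mathcal{N}_i}\boldsymbol{A}_{ij}^{T}[\hat{\boldsymbol{\lambda}}_{j|i}^{k}-\boldsymbol{P}_{p,ij}(\boldsymbol{A}_{ij}\hat{\boldsymbol{x}}_i^{k+1}+\boldsymbol{A}_{ji}\hat{\boldsymbol{x}}_j^{k}-\boldsymbol{c}_{ij})]$, while the $\boldsymbol{\lambda}_i$-step, recast through the auxiliary variable $\boldsymbol{w}_i^{k+1}$ of Proposition~\ref{prop:no_conjugate}, yields $\boldsymbol{w}_i^{k+1}\in\partial f_i^{\ast}(\boldsymbol{A}_i^{T}\hat{\boldsymbol{\lambda}}_i^{k+1})$ together with the edge relation (\ref{equ:w_lambda_optCond2}). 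The key feature to keep in sight is that both relations tie node $i$'s \emph{new} estimate to its neighbors' \emph{previous}-iterate values $\hat{\boldsymbol{x}}_j^{k}$ and $\hat{\boldsymbol{\lambda}}_{j|i}^{k}$.

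Next I would bound the conjugate-bearing terms $f_i(\hat{\boldsymbol{x}}_i^{k+1})+f_i^{\ast}(\boldsymbol{A}_i^{T}\hat{\boldsymbol{\lambda}}_i^{k+1})$ inside $p(\hat{\boldsymbol{x}}^{k+1},\hat{\boldsymbol{\lambda}}^{k+1})$ from above using convexity: since $\boldsymbol{g}_i\in\partial f_i(\hat{\boldsymbol{x}}_i^{k+1})$ and $\boldsymbol{w}_i^{k+1}\in\partial f_i^{\ast}(\boldsymbol{A}_i^{T}\hat{\boldsymbol{\lambda}}_i^{k+1})$, the subgradient inequalities evaluated at the saddle point give $f_i(\hat{\boldsymbol{x}}_i^{k+1})\le f_i(\boldsymbol{x}_i^{\star})+\boldsymbol{g}_i^{T}(\hat{\boldsymbol{x}}_i^{k+1}-\boldsymbol{x}_i^{\star})$ and $f_i^{\ast}(\boldsymbol{A}_i^{T}\hat{\boldsymbol{\lambda}}_i^{k+1})\le f_i^{\ast}(\boldsymbol{A}_i^{T}\boldsymbol{\lambda}_i^{\star})+\boldsymbol{w}_i^{k+1,T}\boldsymbol{A}_i^{T}(\hat{\boldsymbol{\lambda}}_i^{k+1}-\boldsymbol{\lambda}_i^{\star})$. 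The residual constants $f_i(\boldsymbol{x}_i^{\star})+f_i^{\ast}(\boldsymbol{A}_i^{T}\boldsymbol{\lambda}_i^{\star})$ then cancel against the $-\tfrac12\boldsymbol{c}_{ij}^{T}\boldsymbol{\lambda}_{i|j}^{\star}$ contributions via the identity $p(\boldsymbol{x}^{\star},\boldsymbol{\lambda}^{\star})=0$ and $\boldsymbol{\lambda}^{\star}\in\Lambda$. After also using $\boldsymbol{x}^{\star}\in X$ to write $\boldsymbol{c}_{ij}=\boldsymbol{A}_{ij}\boldsymbol{x}_i^{\star}+\boldsymbol{A}_{ji}\boldsymbol{x}_j^{\star}$ and $\boldsymbol{\lambda}_{i|j}^{\star}=\boldsymbol{\lambda}_{j|i}^{\star}$, the whole left-hand side collapses to a purely bilinear functional of the residual vectors $\boldsymbol{A}_{ji}(\hat{\boldsymbol{x}}_j-\boldsymbol{x}_j^{\star})$ and $\hat{\boldsymbol{\lambda}}_{j|i}-\boldsymbol{\lambda}_{j|i}^{\star}$ (at iterations $k$ and $k+1$), the constraint residual $\boldsymbol{A}_{ij}\hat{\boldsymbol{x}}_i^{k+1}+\boldsymbol{A}_{ji}\hat{\boldsymbol{x}}_j^{k}-\boldsymbol{c}_{ij}$, and the dual gap $\hat{\boldsymbol{\lambda}}_{i|j}^{k+1}-\hat{\boldsymbol{\lambda}}_{j|i}^{k}$.

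I would then substitute the update identities from the first step to eliminate $\boldsymbol{g}_i$ and $\boldsymbol{w}_i^{k+1}$, so that each bilinear term becomes an inner product of these residuals weighted by $\boldsymbol{P}_{p,ij}$ or $\boldsymbol{P}_{d,ij}$. Applying the polarization identity $2\,\boldsymbol{a}^{T}\boldsymbol{M}\boldsymbol{b}=\|\boldsymbol{a}+\boldsymbol{b}\|_{\boldsymbol{M}}^{2}-\|\boldsymbol{a}\|_{\boldsymbol{M}}^{2}-\|\boldsymbol{b}\|_{\boldsymbol{M}}^{2}$ in the $\boldsymbol{P}_{p,ij}$-weighted geometry turns every inner product into a difference of squared weighted norms. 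Condition~\ref{con:G} enters decisively here: the splitting $\boldsymbol{P}_{d,ij}=\boldsymbol{P}_{p,ij}^{-1}+\Delta\boldsymbol{P}_{d,ij}$ of (\ref{equ:G_dp}) separates the dual-residual contributions into a $\boldsymbol{P}_{p,ij}^{-1}$ part, which merges with the primal residuals to yield the first three squared-norm terms of $d_{i|j}^{k+1}$ in (\ref{equ:d_syn}), and a $\Delta\boldsymbol{P}_{d,ij}$ part, which yields the final three; the hypothesis $\Delta\boldsymbol{P}_{d,ij}\succeq\boldsymbol{0}$ is exactly what makes the leftover term $-\|\Delta\boldsymbol{P}_{d,ij}^{1/2}(\hat{\boldsymbol{\lambda}}_{i|j}^{k+1}-\hat{\boldsymbol{\lambda}}_{j|i}^{k})\|^{2}$ appear with the correct (nonpositive) sign.

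The step I expect to be the main obstacle is the index bookkeeping in this last stage. Because the synchronous scheme couples node $i$'s new estimate to its neighbors' old estimates, the bilinear terms mix iteration indices $k$ and $k+1$ across adjacent nodes, and the telescoping norms $\|\cdot^{k}\|^{2}-\|\cdot^{k+1}\|^{2}$ only emerge once the global double sum $\sum_i\sum_{j\in\mathcal{N}_i}$ has been regrouped per directed edge. I would therefore resist symmetrizing the cross terms prematurely, keeping them in their neighbor-indexed form and tracking carefully which residual is attached to which directed edge $[i,j]$, so that the contributions assemble into exactly the stated $d_{i|j}^{k+1}$ rather than a looser bound. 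Verifying that no stray cross-edge terms survive, and that the $\tfrac12$ weights together with the $\boldsymbol{c}_{ij}/2$ splitting match (\ref{equ:d_syn}) precisely, is where the bulk of the careful computation lies.
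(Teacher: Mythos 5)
Your proposal is correct and follows essentially the same route as the paper's Appendix~A: apply the optimality (subgradient) conditions of the two node-wise minimizations, evaluate the resulting first-order inequalities at the saddle point, cancel constants via $p(\boldsymbol{x}^{\star},\boldsymbol{\lambda}^{\star})=0$ and primal/dual feasibility, then use the polarization identity together with the splitting $\boldsymbol{P}_{d,ij}=\boldsymbol{P}_{p,ij}^{-1}+\Delta\boldsymbol{P}_{d,ij}$ to assemble $d_{i|j}^{k+1}$. Your detour through $\boldsymbol{w}_i^{k+1}$ and Proposition~\ref{prop:no_conjugate} is an equivalent reformulation of the paper's direct use of the $\boldsymbol{\lambda}_i$-update's optimality conditions, not a genuinely different argument.
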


\begin{proof}See the proof in Appendix~\ref{appendix:lemma_gen_syn}. \end{proof}

\begin{lemma}
Every pair of estimates $(\hat{\boldsymbol{x}_i}^{k+1},\hat{\boldsymbol{\lambda}}_{i|j}^{k+1})$, $i\in\mathcal{V}$, $j\in \mathcal{N}_i$, $k\geq 0$, in Lemma~\ref{lemma:gen_syn} is upper bounded by a constant $M$ under a squared error criterion: 
\begin{align}
\Big\|\boldsymbol{P}_{p,ij}^{\frac{1}{2}}\boldsymbol{A}_{j i}(\hat{\boldsymbol{x}}_j^{k+1}-\boldsymbol{x}_{j}^{\star})\hspace{-0.6mm}+\hspace{-0.6mm}\boldsymbol{P}_{p,ij}^{-\frac{1}{2}}(\boldsymbol{\lambda}_{j|i}^{\star}\hspace{-0.5mm}-\hspace{-0.5mm}\hat{\boldsymbol{\lambda}}_{j|i}^{k+1})\Big\|^2\leq M.
\label{equ:bound_syn}
\end{align} 
\label{lemma:bound_syn}
\end{lemma}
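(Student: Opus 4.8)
The plan is to read the inequality (\ref{equ:optCon_syn}) as a one-step energy decrease and then telescope it. First I would apply Lemma~\ref{lemma:inequality} with the specific choice $(\boldsymbol{x},\boldsymbol{\lambda})=(\hat{\boldsymbol{x}}^{k+1},\hat{\boldsymbol{\lambda}}^{k+1})$: the left-hand side of (\ref{equ:optCon_syn}) is \emph{exactly} the right-hand side of (\ref{equ:VI_gen1}) evaluated at this point, so Lemma~\ref{lemma:inequality} guarantees it is nonnegative. Combining this lower bound with the upper bound furnished by Lemma~\ref{lemma:gen_syn} gives the key scalar inequality $0\le \sum_{i\in\mathcal{V}}\sum_{j\in\mathcal{N}_i} d_{i|j}^{k+1}$.

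Next I would expose the telescoping structure hidden in $d_{i|j}^{k+1}$ of (\ref{equ:d_syn}). For each directed edge I introduce the nonnegative Lyapunov quantity
\[
V_{j|i}^{k}:=\left\|\boldsymbol{P}_{p,ij}^{\frac{1}{2}}\boldsymbol{A}_{j i}(\hat{\boldsymbol{x}}_j^{k}-\boldsymbol{x}_j^{\star})+\boldsymbol{P}_{p,ij}^{-\frac{1}{2}}(\boldsymbol{\lambda}_{j|i}^{\star}-\hat{\boldsymbol{\lambda}}_{j|i}^{k})\right\|^2+\left\|\Delta\boldsymbol{P}_{d,ij}^{\frac{1}{2}}(\boldsymbol{\lambda}_{j|i}^{\star}-\hat{\boldsymbol{\lambda}}_{j|i}^{k})\right\|^2,
\]
which is well defined because $\Delta\boldsymbol{P}_{d,ij}\succeq\boldsymbol{0}$ by Condition~\ref{con:G}. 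Reading off (\ref{equ:d_syn}), the first two squared norms together with the two $\Delta\boldsymbol{P}_{d,ij}$ terms carrying the superscripts $k$ and $k+1$ assemble into $V_{j|i}^{k}-V_{j|i}^{k+1}$, while the two remaining squared norms (the constraint-residual term and $\|\Delta\boldsymbol{P}_{d,ij}^{\frac{1}{2}}(\hat{\boldsymbol{\lambda}}_{i|j}^{k+1}-\hat{\boldsymbol{\lambda}}_{j|i}^{k})\|^2$) enter with a minus sign and are both nonnegative. Discarding these gives the clean per-edge bound $d_{i|j}^{k+1}\le \tfrac{1}{2}\big(V_{j|i}^{k}-V_{j|i}^{k+1}\big)$.

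I would then sum over all directed edges. Writing $\Phi^{k}:=\sum_{i\in\mathcal{V}}\sum_{j\in\mathcal{N}_i} V_{j|i}^{k}$ (a finite sum over $\vec{\mathcal{E}}$, using that $\boldsymbol{P}_{p,ij}=\boldsymbol{P}_{p,ji}$ so the dummy indices $i,j$ may be relabelled freely), the two preceding displays combine into $0\le\sum_{i\in\mathcal{V}}\sum_{j\in\mathcal{N}_i} d_{i|j}^{k+1}\le\tfrac{1}{2}(\Phi^{k}-\Phi^{k+1})$. Hence $\Phi^{k+1}\le\Phi^{k}$ for every $k\ge 0$, so $\Phi^{k}\le\Phi^{0}$ for all $k$. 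Since every summand of $\Phi^{k+1}$ is nonnegative, each $V_{j|i}^{k+1}$ is bounded by $\Phi^{k+1}$, and the first squared norm of $V_{j|i}^{k+1}$ is in turn bounded by $V_{j|i}^{k+1}$; chaining these estimates yields
\[
\left\|\boldsymbol{P}_{p,ij}^{\frac{1}{2}}\boldsymbol{A}_{j i}(\hat{\boldsymbol{x}}_j^{k+1}-\boldsymbol{x}_j^{\star})+\boldsymbol{P}_{p,ij}^{-\frac{1}{2}}(\boldsymbol{\lambda}_{j|i}^{\star}-\hat{\boldsymbol{\lambda}}_{j|i}^{k+1})\right\|^2\le V_{j|i}^{k+1}\le\Phi^{k+1}\le\Phi^{0},
\]
so (\ref{equ:bound_syn}) holds with the explicit constant $M=\Phi^{0}$, which depends only on the initialization and the fixed saddle point $(\boldsymbol{x}^{\star},\boldsymbol{\lambda}^{\star})$ and is therefore uniform in $k$, $i$ and $j$, as required.

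The step I expect to be the main obstacle is the regrouping of the six squared-norm terms of (\ref{equ:d_syn}) into $V_{j|i}^{k}-V_{j|i}^{k+1}$ plus genuinely nonnegative remainders: one must check that the $\boldsymbol{P}_{p,ij}$ contributions and the $\Delta\boldsymbol{P}_{d,ij}$ contributions pair up consistently across consecutive iterations so that a single monotone quantity $\Phi^{k}$ emerges, and the accompanying bookkeeping over directed versus undirected edges is the only place where an index slip could creep in. Everything else is a routine nonnegativity argument.
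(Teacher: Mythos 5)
Your proposal is correct and is essentially the paper's own argument made explicit: the paper's proof is only a two-line sketch ("perform algebra on (\ref{equ:optCon_syn})--(\ref{equ:d_syn}) for $k=0$, then recurse"), and your telescoping of $d_{i|j}^{k+1}\le\tfrac12(V_{j|i}^{k}-V_{j|i}^{k+1})$ combined with the nonnegativity from Lemma~\ref{lemma:inequality} is exactly that recursion, yielding the monotone quantity $\Phi^{k}\le\Phi^{0}=M$. The regrouping of the six squared norms checks out, since the two discarded terms are indeed nonpositive contributions and $\boldsymbol{P}_{p,ij}$, $\Delta\boldsymbol{P}_{d,ij}$ are attached to the undirected edge, so the directed-edge bookkeeping is consistent.
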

\begin{proof}
One can first prove (\ref{equ:bound_syn}) for $k=0$ by performing algebra on (\ref{equ:optCon_syn})-(\ref{equ:d_syn}). The inequality (\ref{equ:bound_syn}) for $k>0$ can then be proved recursively. 
\end{proof}

Upon obtaining the results in Lemma~\ref{lemma:gen_syn} and \ref{lemma:bound_syn}, we are ready to present the convergence rate of synchronous PDMM.

\begin{sloppypar}
\begin{theorem}
Let $(\hat{\boldsymbol{x}}^{k},\hat{\boldsymbol{\lambda}}^{k})$, $k=1,\ldots,K$, be obtained by performing  (\ref{equ:x_updateSyn})-(\ref{equ:lambda_updateSyn}) under Condition~\ref{con:G}. 
The average estimate ${ (\bar{\boldsymbol{x}}^K,\bar{\boldsymbol{\lambda}}^K)=(\frac{1}{K}\sum_{k=1}^K\hat{\boldsymbol{x}}^k,\frac{1}{K}\sum_{k=1}^K\hat{\boldsymbol{\lambda}}^k)}$ satisfies 
\begin{align}
\hspace{-1.5mm}0\leq&\hspace{-0.6mm}\sum_{i\in \mathcal{V}}\hspace{-0.8mm}\sum_{j\in \mathcal{N}_i}\hspace{-1.8mm}\Big[(\bar{\boldsymbol{\lambda}}_{i|j}^{K}\hspace{-0.5mm}-\hspace{-0.5mm}\boldsymbol{\lambda}_{i|j}^{\star})^T\hspace{-0.6mm}\Big(\hspace{-0.5mm}\boldsymbol{A}_{j i}\bar{\boldsymbol{x}}_j^{K}\hspace{-0.8mm}-\hspace{-0.8mm}\frac{\boldsymbol{c}_{ij}}{2}\Big)\hspace{-0.8mm}-\hspace{-0.6mm}(\bar{\boldsymbol{x}}_i^{K}\hspace{-0.5mm}-\hspace{-0.5mm}\boldsymbol{x}_i^{\star})^T\nonumber\\
\hspace{-1.5mm}&\hspace{13mm}\cdot\boldsymbol{A}_{i j}^T\bar{\boldsymbol{\lambda}}_{j|i}^{K}\Big]\hspace{-0.5mm}+\hspace{-0.5mm}p(\bar{\boldsymbol{x}}^{K},\bar{\boldsymbol{\lambda}}^{K})\leq \mathcal{O}\Big(\frac{1}{K}\Big)\label{equ:optCon_syn_sublinear2}
\end{align}
\begin{align}
\lim_{K\rightarrow \infty}&\Big[\boldsymbol{P}_{p,ij}^{\frac{1}{2}}(\boldsymbol{A}_{i j}\bar{\boldsymbol{x}}_i^{K}+\boldsymbol{A}_{j i}\bar{\boldsymbol{x}}_j^{K}-\boldsymbol{c}_{ij})\nonumber\\
&\hspace{1mm}+\boldsymbol{P}_{p,ij}^{-\frac{1}{2}}(\bar{\boldsymbol{\lambda}}_{i|j}^{K}-\bar{\boldsymbol{\lambda}}_{j|i}^{K})\Big]=\boldsymbol{0} \hspace{12.5mm} \forall [i,j]\in \vec{\mathcal{E}}.
\label{equ:mix_feas_syn}
\end{align}
\label{theo:syn}
\vspace{-2mm}
\end{theorem}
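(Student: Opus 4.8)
The plan is to combine the per-iteration inequality of Lemma~\ref{lemma:gen_syn} with the lower bound of Lemma~\ref{lemma:inequality}, using a convexity property of the left-hand side of (\ref{equ:optCon_syn}) together with the telescoping structure of the bound $d_{i|j}^{k+1}$ in (\ref{equ:d_syn}). The starting observation is that the left-hand side of (\ref{equ:optCon_syn}) is literally the expression on the right-hand side of (\ref{equ:VI_gen1}) evaluated at the iterate $(\hat{\boldsymbol{x}}^{k+1},\hat{\boldsymbol{\lambda}}^{k+1})$; call this expression $Q(\boldsymbol{x},\boldsymbol{\lambda})$. The crucial structural fact is that $Q$ is \emph{jointly convex} in $(\boldsymbol{x},\boldsymbol{\lambda})$: after relabelling $i\leftrightarrow j$ in the term $(\boldsymbol{x}_i-\boldsymbol{x}_i^{\star})^T\boldsymbol{A}_{i j}^T\boldsymbol{\lambda}_{j|i}$ and summing over $\vec{\mathcal{E}}$, the genuinely bilinear products $\boldsymbol{\lambda}_{i|j}^T\boldsymbol{A}_{j i}\boldsymbol{x}_j$ cancel, leaving terms affine in $(\boldsymbol{x},\boldsymbol{\lambda})$ plus $p(\boldsymbol{x},\boldsymbol{\lambda})$ of (\ref{equ:p_x_lambda}), which is a sum of the convex $f_i(\boldsymbol{x}_i)$, the convex $f_i^{\ast}(\boldsymbol{A}_i^T\boldsymbol{\lambda}_i)$ and a linear term, hence jointly convex. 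Jensen's inequality then gives $Q(\bar{\boldsymbol{x}}^K,\bar{\boldsymbol{\lambda}}^K)\leq\frac{1}{K}\sum_{k=1}^K Q(\hat{\boldsymbol{x}}^k,\hat{\boldsymbol{\lambda}}^k)$.

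Next I would sum (\ref{equ:optCon_syn}) over $k=0,\ldots,K-1$ and telescope the right-hand side. In (\ref{equ:d_syn}) the pairs $\|\boldsymbol{P}_{p,ij}^{\frac{1}{2}}\boldsymbol{A}_{j i}(\hat{\boldsymbol{x}}_j^{k}-\boldsymbol{x}_j^{\star})+\boldsymbol{P}_{p,ij}^{-\frac{1}{2}}(\boldsymbol{\lambda}_{j|i}^{\star}-\hat{\boldsymbol{\lambda}}_{j|i}^{k})\|^2$ and $\|\Delta\boldsymbol{P}_{d,ij}^{\frac{1}{2}}(\boldsymbol{\lambda}_{j|i}^{\star}-\hat{\boldsymbol{\lambda}}_{j|i}^{k})\|^2$ telescope across consecutive $k$, while the remaining two squared norms enter with a negative sign. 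Dropping the nonpositive terminal boundary terms shows that $\sum_{k=0}^{K-1}\sum_{i,j}d_{i|j}^{k+1}$ is bounded above by a constant $C$ independent of $K$, namely the initial values of the telescoping quantities. Combining this with the Jensen bound and the nonnegativity $0\leq Q(\bar{\boldsymbol{x}}^K,\bar{\boldsymbol{\lambda}}^K)$ furnished by Lemma~\ref{lemma:inequality} yields $0\leq Q(\bar{\boldsymbol{x}}^K,\bar{\boldsymbol{\lambda}}^K)\leq C/K$, which is exactly (\ref{equ:optCon_syn_sublinear2}).

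For the feasibility limit (\ref{equ:mix_feas_syn}), the same telescoped identity together with $Q(\hat{\boldsymbol{x}}^{k+1},\hat{\boldsymbol{\lambda}}^{k+1})\geq 0$ shows that the sum over $k$ of the negative squared term $\|\boldsymbol{P}_{p,ij}^{\frac{1}{2}}(\boldsymbol{A}_{i j}\hat{\boldsymbol{x}}_i^{k+1}+\boldsymbol{A}_{j i}\hat{\boldsymbol{x}}_j^{k}-\boldsymbol{c}_{ij})+\boldsymbol{P}_{p,ij}^{-\frac{1}{2}}(\hat{\boldsymbol{\lambda}}_{i|j}^{k+1}-\hat{\boldsymbol{\lambda}}_{j|i}^{k})\|^2$, which I denote $\|\boldsymbol{\beta}_{i|j}^{k+1}\|^2$, is finite, hence summable in $k$. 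The residual appearing in (\ref{equ:mix_feas_syn}) is precisely the average $\frac{1}{K}\sum_{k=1}^K\tilde{\boldsymbol{\beta}}_{i|j}^{k}$ of the \emph{index-consistent} quantity $\tilde{\boldsymbol{\beta}}_{i|j}^{k}$ (all superscripts equal to $k$). I would bridge the two by summing the difference $\tilde{\boldsymbol{\beta}}_{i|j}^{k}-\boldsymbol{\beta}_{i|j}^{k}$, which depends only on the consecutive-iterate increments and whose sum collapses to a single boundary term at iteration $K$; that boundary term is exactly the quantity bounded by $M$ in Lemma~\ref{lemma:bound_syn}. Then $\|\frac{1}{K}\sum_k\boldsymbol{\beta}_{i|j}^{k}\|\to 0$ by Cauchy--Schwarz (summability of $\|\boldsymbol{\beta}_{i|j}^{k}\|^2$ gives $\sum_{k=1}^K\|\boldsymbol{\beta}_{i|j}^{k}\|=\mathcal{O}(\sqrt{K})$), and the boundary term scaled by $1/K$ also vanishes, establishing (\ref{equ:mix_feas_syn}).

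The hard part will be this second step: reconciling the index mismatch between the bilinear bound $\boldsymbol{\beta}_{i|j}^{k+1}$, which mixes iterates $k$ and $k+1$, and the index-consistent residual $\tilde{\boldsymbol{\beta}}_{i|j}^{K}$ that actually appears in (\ref{equ:mix_feas_syn}). Summability of $\|\boldsymbol{\beta}_{i|j}^{k+1}\|^2$ alone does not suffice; one must additionally invoke the \emph{a priori} boundedness of the iterates from Lemma~\ref{lemma:bound_syn} to kill the boundary term generated by the telescoping of the increments.
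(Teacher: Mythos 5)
Your proposal is correct and follows essentially the same route as the paper's proof: sum the per-iteration bound of Lemma~\ref{lemma:gen_syn} over $k$, telescope the right-hand side of (\ref{equ:d_syn}) to a constant, apply Jensen's inequality to the jointly convex left-hand side together with the lower bound of Lemma~\ref{lemma:inequality} for (\ref{equ:optCon_syn_sublinear2}), and use the boundedness from Lemma~\ref{lemma:bound_syn} for (\ref{equ:mix_feas_syn}). You merely make explicit two details the paper leaves implicit --- the cancellation of the bilinear terms that yields joint convexity, and the index-shift boundary term reconciling $\hat{\boldsymbol{x}}_j^{k}$ with $\hat{\boldsymbol{x}}_j^{k+1}$ --- both handled correctly.
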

\end{sloppypar}
\begin{proof}
Summing (\ref{equ:optCon_syn}) over $k$ and simplifying the expression yields
\begin{align}
&\hspace{-1mm}\sum_{k=0}^{K-1}\hspace{-0.5mm}\Bigg(\sum_{i\in \mathcal{V}}\hspace{-0.5mm}\sum_{j\in \mathcal{N}_i}\hspace{-1.5mm}\Big[(\hat{\boldsymbol{\lambda}}_{i|j}^{k+1}\hspace{-0.5mm}-\hspace{-0.5mm}\boldsymbol{\lambda}_{i|j}^{\star})^T\Big(\hspace{-0.5mm}\boldsymbol{A}_{j i}\hat{\boldsymbol{x}}_j^{k+1}\hspace{-0.5mm}-\hspace{-0.5mm}\frac{\boldsymbol{c}_{ij}}{2}\Big)\hspace{-0.5mm}\nonumber\\
&\hspace{8mm}-\hspace{-0.5mm}(\hat{\boldsymbol{x}}_i^{k+1}\hspace{-0.5mm}-\hspace{-0.5mm}\boldsymbol{x}_i^{\star})^T\boldsymbol{A}_{i j}^T\hat{\boldsymbol{\lambda}}_{j|i}^{k+1}\Big]\hspace{-0.5mm}+\hspace{-0.5mm}p(\hat{\boldsymbol{x}}^{k+1},\hat{\boldsymbol{\lambda}}^{k+1})\hspace{-0.5mm}+\hspace{-0.5mm}\sum_{i\in\mathcal{V}}\hspace{-0.5mm}\sum_{j\in \mathcal{N}}\nonumber\\
&\Big[\Big\|\boldsymbol{P}_{p,ij}^{\frac{1}{2}}(\boldsymbol{A}_{i j}\hat{\boldsymbol{x}}_i^{k+1}\hspace{-0.5mm}+\hspace{-0.5mm}\boldsymbol{A}_{j i}\hat{\boldsymbol{x}}_j^{k}\hspace{-0.5mm}-\hspace{-0.5mm}\boldsymbol{c}_{ij})\hspace{-0.5mm}+\hspace{-0.5mm}\boldsymbol{P}_{p,ij}^{-\frac{1}{2}}(\hat{\boldsymbol{\lambda}}_{i|j}^{k+1}\hspace{-0.5mm}-\hspace{-0.5mm}\hat{\boldsymbol{\lambda}}_{j|i}^{k})\Big\|^2\Big]\Bigg)\nonumber\\
&\hspace{-1.5mm}\leq \hspace{-0.5mm}\sum_{i\in \mathcal{V}}\hspace{-1mm}\sum_{j\in \mathcal{N}_i}\hspace{-0.5mm}\frac{1}{2}\Big(\Big\|\boldsymbol{P}_{p,ij}^{\frac{1}{2}}(\boldsymbol{A}_{j i}(\hat{\boldsymbol{x}}_j^{0}\hspace{-0.5mm}-\hspace{-0.5mm}{\boldsymbol{x}}_j^{\star} )\hspace{-0.5mm}+\hspace{-0.5mm}\boldsymbol{P}_{p,ij}^{-\frac{1}{2}}({\boldsymbol{\lambda}}_{i|j}^{\star}\hspace{-0.5mm}-\hspace{-0.5mm}\hat{\boldsymbol{\lambda}}_{j|i}^{0})\Big\|^2\nonumber\\
&\hspace{21mm}+\hspace{-0.5mm}\|\Delta \boldsymbol{P}_{d,ij}^{\frac{1}{2}}({\boldsymbol{\lambda}}_{j|i}^{\star}\hspace{-0.5mm}-\hspace{-0.5mm}\hat{\boldsymbol{\lambda}}_{j|i}^{0}) \|^2\Big).
\label{equ:optCon_syn_sublinear}
\end{align}
Finally, since the left hand side of (\ref{equ:optCon_syn_sublinear}) is a convex function of $(\boldsymbol{x},\boldsymbol{\lambda})$, applying Jensen's inequality to (\ref{equ:optCon_syn_sublinear}) and using the inequality of Lemma~\ref{lemma:inequality} yields (\ref{equ:optCon_syn_sublinear2}). Similarly, applying Jensen's inequality to (\ref{equ:optCon_syn_sublinear}) and using the upper-bound result of Lemma~\ref{lemma:bound_syn} yields the asymptotic result (\ref{equ:mix_feas_syn}).
%\begin{align}
%&\hspace{-4mm}\lim_{K\rightarrow \infty} \sum_{i\in \mathcal{V}}\hspace{-0.5mm}\sum_{j\in \mathcal{N}_i}\hspace{-1.5mm}\Big[(\hat{\boldsymbol{\lambda}}_{i|j}^{K}\hspace{-0.5mm}-\hspace{-0.5mm}\boldsymbol{\lambda}_{i|j}^{\star})^T\Big(\hspace{-0.5mm}\boldsymbol{A}_{j i}\hat{\boldsymbol{x}}_j^{K}\hspace{-0.5mm}-\hspace{-0.5mm}\frac{\boldsymbol{c}_{ij}}{2}\Big)\hspace{-0.5mm}\nonumber\\
%&\hspace{16mm}-\hspace{-0.5mm}(\hat{\boldsymbol{x}}_i^{K}\hspace{-0.5mm}-\hspace{-0.5mm}\boldsymbol{x}_i^{\star})^T\boldsymbol{A}_{i j}^T\hat{\boldsymbol{\lambda}}_{j|i}^{K}\Big]\hspace{-0.5mm}+\hspace{-0.5mm}p(\hat{\boldsymbol{x}}^{K},\hat{\boldsymbol{\lambda}}^{K})=0\nonumber.
%\end{align}
%As a result, by substituting $(\boldsymbol{x},\boldsymbol{\lambda})=(\hat{\boldsymbol{x}}^{K},\hat{\boldsymbol{\lambda}}^{K})$ in (\ref{equ:p_opt_cond2})-(\ref{equ:p_opt_cond4}), we obtain (\ref{equ:grad_opt_dual_syn1})-(\ref{equ:grad_opt_primal_syn1}).
\end{proof}

Finally, we use the results of Theorem~\ref{theo:syn} to show that as $K$ goes to infinity, the average estimate $(\bar{\boldsymbol{x}}^{K},\bar{\boldsymbol{\lambda}}^{K})$ converges to a saddle point  of $L_{\mathcal{P}}$.

\begin{theorem}
The average estimate $(\bar{\boldsymbol{x}}^{K},\bar{\boldsymbol{\lambda}}^{K})$ of Theorem~\ref{theo:syn} converges to a saddle point $(\boldsymbol{x}^{\star},\boldsymbol{\lambda}^{\star})$ of $L_{\mathcal{P}}$ as $K$ increases.
\label{theo:syn1}\vspace{-2mm}
\end{theorem}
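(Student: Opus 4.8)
The plan is to feed the two conclusions of Theorem~\ref{theo:syn} into the converse saddle-point test of Lemma~\ref{lemma:mix_saddleP_converse}. What I need at a limit point of the averaged sequence are exactly three things: the optimality conditions (\ref{equ:p_opt_cond2})--(\ref{equ:p_opt_cond4}), primal feasibility $\boldsymbol{x}\in X$, and dual feasibility $\boldsymbol{\lambda}\in\Lambda$. Theorem~\ref{theo:syn} is engineered to deliver these in the limit, so the real work is to pass to the limit carefully and to separate the mixed feasibility (\ref{equ:mix_feas_syn}) into its primal and dual parts.

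First I would secure a convergent subsequence. The quantity bounded uniformly by $M$ in Lemma~\ref{lemma:bound_syn} is affine in $(\hat{\boldsymbol{x}}^{k+1},\hat{\boldsymbol{\lambda}}^{k+1})$, so applying Jensen's inequality (convexity of $\|\cdot\|^2$) to the average shows that the same expression evaluated at $(\bar{\boldsymbol{x}}^K,\bar{\boldsymbol{\lambda}}^K)$ is also bounded by $M$ for every $K$. This confines $(\bar{\boldsymbol{x}}^K,\bar{\boldsymbol{\lambda}}^K)$ to a bounded set in the relevant coordinates $\boldsymbol{A}_{ji}\bar{\boldsymbol{x}}_j^K$ and $\bar{\boldsymbol{\lambda}}_{j|i}^K$, so by Bolzano--Weierstrass there is a subsequence $(\bar{\boldsymbol{x}}^{K_\ell},\bar{\boldsymbol{\lambda}}^{K_\ell})\to(\boldsymbol{x}^{\infty},\boldsymbol{\lambda}^{\infty})$.

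Next I would use the sandwich (\ref{equ:optCon_syn_sublinear2}): the displayed gap at $(\bar{\boldsymbol{x}}^K,\bar{\boldsymbol{\lambda}}^K)$ lies between $0$ and $\mathcal{O}(1/K)$ and hence tends to $0$. Along the subsequence the bilinear terms pass to the limit by continuity, while $p$ is lower semicontinuous since each $f_i$ and $f_i^{\ast}$ is closed, proper and convex; thus the gap at $(\boldsymbol{x}^{\infty},\boldsymbol{\lambda}^{\infty})$ is $\leq 0$, whereas Lemma~\ref{lemma:inequality} forces it to be $\geq 0$. The equality case of Lemma~\ref{lemma:inequality} then yields (\ref{equ:p_opt_cond2})--(\ref{equ:p_opt_cond4}) at the limit. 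For feasibility I would write (\ref{equ:mix_feas_syn}) for the two directed copies $[i,j]$ and $[j,i]$ of an edge $(i,j)\in\mathcal{E}$: the primal residual $\boldsymbol{A}_{ij}\bar{\boldsymbol{x}}_i^K+\boldsymbol{A}_{ji}\bar{\boldsymbol{x}}_j^K-\boldsymbol{c}_{ij}$ is common to both while the dual residual $\bar{\boldsymbol{\lambda}}_{i|j}^K-\bar{\boldsymbol{\lambda}}_{j|i}^K$ merely changes sign, and the edge penalty $\boldsymbol{P}_{p,ij}$ is shared. Adding the two relations and using nonsingularity of $\boldsymbol{P}_{p,ij}^{1/2}$ gives $\boldsymbol{A}_{ij}\bar{\boldsymbol{x}}_i^K+\boldsymbol{A}_{ji}\bar{\boldsymbol{x}}_j^K-\boldsymbol{c}_{ij}\to\boldsymbol{0}$, and subtracting them gives $\bar{\boldsymbol{\lambda}}_{i|j}^K-\bar{\boldsymbol{\lambda}}_{j|i}^K\to\boldsymbol{0}$, so $\boldsymbol{x}^{\infty}\in X$ and $\boldsymbol{\lambda}^{\infty}\in\Lambda$. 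With optimality and feasibility both in hand, Lemma~\ref{lemma:mix_saddleP_converse} certifies that $(\boldsymbol{x}^{\infty},\boldsymbol{\lambda}^{\infty})$ is a saddle point of $L_{\mathcal{P}}$.

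The hard part will be upgrading this subsequential statement to convergence of the whole averaged sequence, together with the lower-semicontinuity passage for $p$. For the latter, if continuity is awkward I would instead push the subgradient inclusions directly to the limit using the closedness of the graph of $\partial f_i$, thereby avoiding any evaluation of $p$ at the limit. For the full-convergence issue I would exploit the telescoping structure already present in the proof of Lemma~\ref{lemma:bound_syn}: it shows that, relative to any fixed saddle point, the squared-error energy of the raw iterates is nonincreasing, i.e. the iterates are Fej\'{e}r monotone with respect to the saddle-point set. Combined with the fact that every subsequential limit of the averages is a saddle point, this should pin down a unique accumulation point and hence give convergence of $(\bar{\boldsymbol{x}}^K,\bar{\boldsymbol{\lambda}}^K)$ to a single saddle point $(\boldsymbol{x}^{\star},\boldsymbol{\lambda}^{\star})$.
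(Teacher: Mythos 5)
Your proposal follows essentially the same route as the paper's proof: it feeds the gap bound (\ref{equ:optCon_syn_sublinear2}) together with Lemma~\ref{lemma:inequality} into the optimality conditions (\ref{equ:p_opt_cond2})--(\ref{equ:p_opt_cond4}), separates primal and dual feasibility by adding and subtracting the two directed-edge copies of (\ref{equ:mix_feas_syn}), and concludes via Lemma~\ref{lemma:mix_saddleP_converse}. The extra machinery you supply (subsequence extraction via Lemma~\ref{lemma:bound_syn} and Jensen, lower semicontinuity of $p$, and Fej\'{e}r monotonicity to upgrade subsequential to full convergence) is added rigor at exactly the points the paper passes over informally, not a different argument.
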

\begin{proof}
The basic idea of the proof is to investigate if $(\bar{\boldsymbol{x}}^{K},\bar{\boldsymbol{\lambda}}^{K})$ satisfies all the conditions of Lemma~\ref{lemma:mix_saddleP_converse}. 
By investigation of Lemma~\ref{lemma:inequality} and (\ref{equ:optCon_syn_sublinear2}), it is clear that the average estimate $(\bar{\boldsymbol{x}}^{K},\bar{\boldsymbol{\lambda}}^{K})$ asymptotically satisfies the conditions (\ref{equ:p_opt_cond2})-(\ref{equ:p_opt_cond4}) by letting $(\boldsymbol{x},\boldsymbol{\lambda})=(\bar{\boldsymbol{x}}^{K},\bar{\boldsymbol{\lambda}}^{K})$.

Next we show that as $K$ increases, $\bar{\boldsymbol{x}}^{K}$ asymptotically converges to an element of the primal feasible set $X$ and so does $\bar{\boldsymbol{\lambda}}^{K}$ to an element of the dual feasible set $\Lambda$. To do do, we reconsider (\ref{equ:mix_feas_syn}) for each pair of directed edges $[i,j]$ and $[j,i]$, which can be expressed as
\begin{align}
\lim_{K\rightarrow \infty}&\Big[\boldsymbol{P}_{p,ij}^{\frac{1}{2}}(\boldsymbol{A}_{i j}\bar{\boldsymbol{x}}_i^{K}\hspace{-0.6mm}+\hspace{-0.6mm}\boldsymbol{A}_{j i}\bar{\boldsymbol{x}}_j^{K}\hspace{-0.6mm}-\hspace{-0.6mm}\boldsymbol{c}_{ij})\hspace{-0.6mm}+\hspace{-0.6mm}\boldsymbol{P}_{p,ij}^{-\frac{1}{2}}(\bar{\boldsymbol{\lambda}}_{i|j}^{K}\hspace{-0.6mm}-\hspace{-0.6mm}\bar{\boldsymbol{\lambda}}_{j|i}^{K})\Big]\hspace{-0.6mm}=\hspace{-0.6mm}\boldsymbol{0} \nonumber \\
\lim_{K\rightarrow \infty}&\Big[\boldsymbol{P}_{p,ij}^{\frac{1}{2}}(\boldsymbol{A}_{i j}\bar{\boldsymbol{x}}_i^{K}\hspace{-0.6mm}+\hspace{-0.6mm}\boldsymbol{A}_{j i}\bar{\boldsymbol{x}}_j^{K}\hspace{-0.6mm}-\hspace{-0.6mm}\boldsymbol{c}_{ij})\hspace{-0.6mm}+\hspace{-0.6mm}\boldsymbol{P}_{p,ij}^{-\frac{1}{2}}(\bar{\boldsymbol{\lambda}}_{j|i}^{K}\hspace{-0.6mm}-\hspace{-0.6mm}\bar{\boldsymbol{\lambda}}_{i|j}^{K})\Big]\hspace{-0.6mm}=\hspace{-0.6mm}\boldsymbol{0}. \nonumber
\end{align}
Combining the above two expressions produces
\begin{align}
\lim_{K\rightarrow \infty}&\boldsymbol{A}_{ij}\bar{\boldsymbol{x}}_i^{K}+\boldsymbol{A}_{ji}\bar{\boldsymbol{x}}_j^{K}=\boldsymbol{c}_{ij} \hspace{11mm}\forall (i,j)\in \mathcal{E} \nonumber\\
\lim_{K\rightarrow \infty}&\bar{\boldsymbol{\lambda}}_{j|i}^{K}=\bar{\boldsymbol{\lambda}}_{i|j}^{K} \hspace{30mm} \forall (i,j)\in \mathcal{E}.\nonumber
\end{align}
It is straightforward from Lemma~\ref{lemma:mix_saddleP_converse} that $(\bar{\boldsymbol{x}}^{K},\bar{\boldsymbol{\lambda}}^{K})$ converges to a saddle point of $L_{\mathcal{P}}$ as $K$ increases.\vspace{0.5mm}
\end{proof}

Further we have the following result from Theorem~\ref{theo:syn1}:
\begin{corollary}
If for certain $i\in \mathcal{V}$, the estimate $\hat{\boldsymbol{x}}_i^{k}$ in Theorem~\ref{theo:syn} converges to a fixed point $\boldsymbol{x}_i'$ ($\lim_{k\rightarrow \infty}\hat{\boldsymbol{x}}_i^{k}=\boldsymbol{x}_i'$), we have $\boldsymbol{x}_i'=\boldsymbol{x}_i^{\star}$ which is the $i$th component of the optimal solution $\boldsymbol{x}^{\star}$ in Theorem~\ref{theo:syn1}. Similarly, if the estimate $\hat{\boldsymbol{\lambda}}_{i|j}^{k}$ converges to a point $\boldsymbol{\lambda}_{i|j}'$, we have $\boldsymbol{\lambda}_{i|j}'=\boldsymbol{\lambda}_{i|j}^{\star}$.  
\label{coro:syn}
\end{corollary}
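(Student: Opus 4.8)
The plan is to exploit the relationship between the raw iterates $\hat{\boldsymbol{x}}_i^{k}$ and their running averages $\bar{\boldsymbol{x}}_i^{K}=\frac{1}{K}\sum_{k=1}^{K}\hat{\boldsymbol{x}}_i^{k}$, whose limiting behaviour is already pinned down by Theorem~\ref{theo:syn1}. The only ingredient I need beyond that theorem is the elementary fact about Ces\`aro means: if a sequence $\{\hat{\boldsymbol{x}}_i^{k}\}$ converges to a limit $\boldsymbol{x}_i'$, then its sequence of arithmetic means $\{\bar{\boldsymbol{x}}_i^{K}\}$ converges to the same limit $\boldsymbol{x}_i'$.

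First I would invoke Theorem~\ref{theo:syn1}, which guarantees that the full average estimate $(\bar{\boldsymbol{x}}^{K},\bar{\boldsymbol{\lambda}}^{K})$ converges, as $K\rightarrow\infty$, to some saddle point $(\boldsymbol{x}^{\star},\boldsymbol{\lambda}^{\star})$ of $L_{\mathcal{P}}$; in particular the $i$-th block converges, i.e.\ $\lim_{K\rightarrow\infty}\bar{\boldsymbol{x}}_i^{K}=\boldsymbol{x}_i^{\star}$. Next, under the hypothesis $\lim_{k\rightarrow\infty}\hat{\boldsymbol{x}}_i^{k}=\boldsymbol{x}_i'$, the Ces\`aro fact applied componentwise yields $\lim_{K\rightarrow\infty}\bar{\boldsymbol{x}}_i^{K}=\boldsymbol{x}_i'$. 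Since a convergent sequence has a unique limit, equating the two expressions for $\lim_{K\rightarrow\infty}\bar{\boldsymbol{x}}_i^{K}$ forces $\boldsymbol{x}_i'=\boldsymbol{x}_i^{\star}$, which is exactly the assertion. The argument for the dual variable is verbatim: if $\hat{\boldsymbol{\lambda}}_{i|j}^{k}\rightarrow\boldsymbol{\lambda}_{i|j}'$ then $\bar{\boldsymbol{\lambda}}_{i|j}^{K}\rightarrow\boldsymbol{\lambda}_{i|j}'$ by the same Ces\`aro step, while Theorem~\ref{theo:syn1} gives $\bar{\boldsymbol{\lambda}}_{i|j}^{K}\rightarrow\boldsymbol{\lambda}_{i|j}^{\star}$, so $\boldsymbol{\lambda}_{i|j}'=\boldsymbol{\lambda}_{i|j}^{\star}$.

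There is essentially no technical obstacle here, as the result is a direct corollary of Theorem~\ref{theo:syn1} combined with Ces\`aro summation. The one point deserving care is that Theorem~\ref{theo:syn1} asserts convergence to \emph{a} saddle point without claiming that the saddle-point set is a singleton. This causes no difficulty: the averaged sequence converges to one fixed saddle point, and the Ces\`aro identity ties the limit of the raw $i$-th component directly to the $i$-th component of that same limit, so no comparison across distinct saddle points is ever required.
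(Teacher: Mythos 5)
Your proof is correct and follows essentially the same route the paper intends: the paper states Corollary~\ref{coro:syn} as an immediate consequence of Theorem~\ref{theo:syn1} without writing out the details, and your Ces\`aro-mean argument (convergence of the iterates implies convergence of their running averages to the same limit, which by Theorem~\ref{theo:syn1} must be the corresponding component of the saddle point) is precisely the justification being left implicit. Your remark that no uniqueness of the saddle-point set is needed, since the comparison is with the one saddle point the averages converge to, is also the right observation.
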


\subsection{Asynchronous PDMM}
\label{subsec:rate_asyn}
In this subsection, we characterize the convergence rate of asynchronous PDMM. In order to facilitate the analysis, we consider a predefined  node-activation strategy (no randomness is involved). We suppose at each iteration $k$, the node $i=\textrm{mod}(k,m)+1$ is activated for parameter-updating, where $m=|\mathcal{V}|$ and $\textrm{mod}(\cdot,\cdot)$ stands for the modulus operation. Then naturally, after a segment of $m$ consecutive iterations, all the nodes will be activated sequentially, one node at each iteration.

To be able to derive the convergence rate, we consider segments of iterations, i.e., $k\in \{lm,lm+1,\ldots (l+1)m-1\}$, where $l\geq 0$. Each segment $l$ consists of $m$ iterations. With the mapping $i=\textrm{mod}(k,m)+1$, it is immediate that $k=ml$ activates node~1 and $k=(l+1)m-1$ activates node~$m$. Based on the above analysis, we have the following result.
\begin{lemma}
Let $k_1,k_2$ be two iteration indices within a segment $\{lm,lm+1,\ldots, (l+1)m-1\}$. If $k_1<k_2$, then $i_1<i_2$, where the node-index $i_q=\textrm{mod}(k_q,m)+1$, $q=1,2$.
\label{lemma:act_node_order}
\end{lemma}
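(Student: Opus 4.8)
The plan is to exploit the fact that every iteration index lying in the given segment shares the same integer quotient when divided by $m$, so that the modulus operation reduces to a plain remainder that increases strictly with $k$. In other words, I would reduce the lemma to the elementary monotonicity of the remainder map $k\mapsto \textrm{mod}(k,m)$ restricted to any block of $m$ consecutive integers.

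First I would write each index in the form $k_q = lm + r_q$ for $q=1,2$, where $r_q$ is the offset of $k_q$ within the segment $\{lm, lm+1, \ldots, (l+1)m-1\}$. By definition of the segment we have $0\leq r_q \leq m-1$, and since this range is precisely the set of admissible remainders modulo $m$, it follows that $\textrm{mod}(k_q,m)=\textrm{mod}(lm+r_q,m)=r_q$. Consequently the activated node-index is $i_q = \textrm{mod}(k_q,m)+1 = r_q+1$.

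Next I would invoke the hypothesis $k_1<k_2$. Subtracting the common offset $lm$ from both indices gives $r_1<r_2$, and adding $1$ to both sides yields $i_1 = r_1+1 < r_2+1 = i_2$, which is exactly the claimed conclusion.

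There is no substantive obstacle here: the single point requiring care is the verification that the representation $k_q = lm + r_q$ with $r_q\in\{0,1,\ldots,m-1\}$ is valid and that this range is what makes $\textrm{mod}(k_q,m)=r_q$ (rather than forcing a further reduction). Once this is noted, the strict monotonicity of $k\mapsto i$ on a single segment is immediate, so the proof is a short direct computation rather than an argument needing any of the saddle-point or convergence machinery developed earlier.
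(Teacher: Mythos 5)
Your proof is correct: writing $k_q = lm + r_q$ with $r_q \in \{0,1,\ldots,m-1\}$, observing that $\textrm{mod}(k_q,m)=r_q$ on this range, and then transferring the strict inequality $k_1<k_2$ to $r_1<r_2$ and hence to $i_1<i_2$ is exactly the elementary argument the lemma rests on. The paper gives no explicit proof (it presents the result as immediate from the mapping $i=\textrm{mod}(k,m)+1$ on a single segment), so your write-up simply supplies the short verification the authors left implicit.
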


Upon introducing Lemma~\ref{lemma:act_node_order}, we are ready to perform convergence analysis.
\begin{lemma}
Let $(\boldsymbol{x}^{\star},\boldsymbol{\lambda}^{\star})$ be a saddle point of $L_{\mathcal{P}}$. A segment of estimates $\{(\hat{\boldsymbol{x}}^{k+1},\hat{\boldsymbol{\lambda}}^{k+1})|k=lm,\ldots,(l+1)m-1 \}$, is obtained by performing (\ref{equ:x_lambda_updateAsyn1})-(\ref{equ:x_lambda_updateAsyn2}) under Condition~\ref{con:G}. Then there is
\begin{align}
&\hspace{-3mm}\sum_{i\in \mathcal{V}}\hspace{-0.8mm}\sum_{j\in \mathcal{N}_i}\hspace{-1.2mm}\Big[\hspace{-1.5mm}\left(\hspace{-0.6mm}\hat{\boldsymbol{\lambda}}_{i|j}^{(l\hspace{-0.3mm}+\hspace{-0.4mm}1)m}\hspace{-0.9mm}-\hspace{-0.9mm}\boldsymbol{\lambda}_{i|j}^{\star}\hspace{-0.6mm}\right)^{\hspace{-0.6mm}T}\hspace{-2mm}\Big(\hspace{-0.5mm}\boldsymbol{A}_{j i}\hat{\boldsymbol{x}}_j^{(l\hspace{-0.4mm}+\hspace{-0.4mm}1)m}\hspace{-0.7mm}-\hspace{-0.7mm}\frac{\boldsymbol{c}_{ij}}{2}\hspace{-0.5mm}\Big)\hspace{-1mm}-\hspace{-1mm}\Big(\hspace{-0.5mm}\hat{\boldsymbol{x}}_i^{(l\hspace{-0.3mm}+\hspace{-0.3mm}1)m}\hspace{-0.7mm}-\hspace{-0.7mm}\boldsymbol{x}_i^{\star}\hspace{-0.5mm}\Big)^{\hspace{-0.6mm}T}\hspace{-0.5mm}\nonumber\\
&\hspace{0mm}\cdot\boldsymbol{A}_{i j}^T\hat{\boldsymbol{\lambda}}_{j|i}^{(l+1)m}\Big]\hspace{-0.5mm}+\hspace{-0.5mm}p\left(\hat{\boldsymbol{x}}^{(l+1)m},\hat{\boldsymbol{\lambda}}^{(l+1)m}\right)\leq \hspace{-1mm}\sum_{(u,v)\in \mathcal{E}}^{u<v}\hspace{-1mm}d_{uv}^{l+1}, \label{equ:optCon_gen_asyn}
\end{align}
where $d_{uv}^{l+1}$ is given by
\begin{align}
&\hspace{0mm}d_{uv}^{l+1}\hspace{-0.5mm}=\hspace{-0.5mm}\frac{1}{2}\Big(\|\boldsymbol{P}_{p,uv}^{\frac{1}{2}}\boldsymbol{A}_{v u}(\hat{\boldsymbol{x}}_v^{lm}-\boldsymbol{x}_{v}^{\star} )+\boldsymbol{P}_{p,uv}^{-\frac{1}{2}}(\boldsymbol{\lambda}_{v|u}^{\star}-\hat{\boldsymbol{\lambda}}_{v|u}^{lm})\|^2\nonumber\\
&\hspace{3mm}-\hspace{-0.5mm}\|\boldsymbol{P}_{p,uv}^{\frac{1}{2}}\boldsymbol{A}_{v u}\hspace{-0.15mm}(\hat{\boldsymbol{x}}_v^{(l+1)m}\hspace{-0.5mm}-\hspace{-0.5mm}\boldsymbol{x}_{v}^{\star})\hspace{-0.5mm}+\hspace{-0.5mm}\boldsymbol{P}_{p,uv}^{-\frac{1}{2}}\hspace{-0.15mm}(\boldsymbol{\lambda}_{v|u}^{\star}\hspace{-0.6mm}-\hspace{-0.6mm}\hat{\boldsymbol{\lambda}}_{v|u}^{(l+1)m})\|^2\nonumber \\
&\hspace{3mm}-\|\boldsymbol{P}_{p,uv}^{\frac{1}{2}}(\boldsymbol{A}_{u v}\hat{\boldsymbol{x}}_u^{(l+1)m}+\boldsymbol{A}_{vu}\hat{\boldsymbol{x}}_v^{(l+1)m}-\boldsymbol{c}_{uv})\nonumber\\
&\hspace{8mm}-\boldsymbol{P}_{p,uv}^{-\frac{1}{2}}(\hat{\boldsymbol{\lambda}}_{u|v}^{(l+1)m}-\hat{\boldsymbol{\lambda}}_{v|u}^{(l+1)m})\|^2\nonumber\\
&\hspace{3mm}-\|\boldsymbol{P}_{p,uv}^{\frac{1}{2}}(\boldsymbol{A}_{u v}\hat{\boldsymbol{x}}_u^{(l+1)m}+\boldsymbol{A}_{v u}\hat{\boldsymbol{x}}_v^{lm}-\boldsymbol{c}_{uv})\nonumber\\
&\hspace{8mm}+\hspace{-0.5mm}\boldsymbol{P}_{p,uv}^{-\frac{1}{2}}(\hat{\boldsymbol{\lambda}}_{u|v}^{(l\hspace{-0.3mm}+\hspace{-0.3mm}1)m}\hspace{-0.6mm}-\hspace{-0.6mm}\hat{\boldsymbol{\lambda}}_{v|u}^{lm})\|^2\hspace{-0.6mm}+\hspace{-0.6mm}\|\Delta\boldsymbol{P}_{d,uv}^{\frac{1}{2}}({\boldsymbol{\lambda}}_{u|v}^{\star}\hspace{-0.6mm}-\hspace{-0.6mm}\hat{\boldsymbol{\lambda}}_{v|u}^{lm})\|^2\nonumber\\
&\hspace{3mm}-\hspace{-0.7mm}\|\Delta\boldsymbol{P}_{d,uv}^{\frac{1}{2}}\hspace{-0.1mm}(\hspace{-0.1mm}{\boldsymbol{\lambda}}_{u|v}^{\star}\hspace{-0.7mm}-\hspace{-0.7mm}\hat{\boldsymbol{\lambda}}_{v|u}^{(l\hspace{-0.3mm}+\hspace{-0.3mm}1)m}\hspace{-0.1mm})\hspace{-0.1mm}\|^2\hspace{-0.7mm}-\hspace{-0.7mm}\|\hspace{-0.1mm}\Delta\boldsymbol{P}_{d,uv}^{\frac{1}{2}}(\hspace{-0.1mm}\hat{\boldsymbol{\lambda}}_{u|v}^{(l\hspace{-0.3mm}+\hspace{-0.3mm}1)m}\hspace{-0.8mm}-\hspace{-0.8mm}\hat{\boldsymbol{\lambda}}_{v|u}^{lm}\hspace{-0.1mm})\hspace{-0.1mm}\|^2\nonumber\\
&\hspace{3mm}-\hspace{-0.5mm}\|\Delta\boldsymbol{P}_{d,uv}^{\frac{1}{2}}(\hat{\boldsymbol{\lambda}}_{u|v}^{(l+1)m}\hspace{-0.5mm}-\hspace{-0.5mm}\hat{\boldsymbol{\lambda}}_{v|u}^{(l+1)m})\|^2\Big)\qquad u<v. \label{equ:d_asyn}
\end{align}
\label{lemma:gen_asyn}\vspace{-4mm}
\end{lemma}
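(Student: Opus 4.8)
The plan is to mirror the argument behind the synchronous Lemma~\ref{lemma:gen_syn} (proved in Appendix~\ref{appendix:lemma_gen_syn}), but to adapt it to the cyclic single-node updates. First I would record the first-order conditions of the per-node subproblem. When node $i=\textrm{mod}(k,m)+1$ fires at iteration $k$, the update (\ref{equ:x_lambda_updateAsyn1}) reduces to exactly the two node-oriented minimizations (\ref{equ:x_updateSyn})--(\ref{equ:lambda_updateSyn}) in the variables $(\boldsymbol{x}_i,\boldsymbol{\lambda}_i)$, so stationarity gives $\sum_{j\in\mathcal{N}_i}\boldsymbol{A}_{ij}^T\bigl[\hat{\boldsymbol{\lambda}}_{j|i}^{k}-\boldsymbol{P}_{p,ij}(\boldsymbol{A}_{ij}\hat{\boldsymbol{x}}_i^{k+1}+\boldsymbol{A}_{ji}\hat{\boldsymbol{x}}_j^{k}-\boldsymbol{c}_{ij})\bigr]\in\partial f_i(\hat{\boldsymbol{x}}_i^{k+1})$ together with the analogous stationarity condition for $\hat{\boldsymbol{\lambda}}_{i|j}^{k+1}$, which by (\ref{equ:conj_def})--(\ref{equ:conj_def3}) furnishes a subgradient inclusion for $f_i^{\ast}$. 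These two inclusions are precisely what is needed to upper bound the $i$-indexed summand of the right-hand side of the variational inequality (\ref{equ:VI_gen1}).

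Second, I would convert each inner product against $(\hat{\boldsymbol{x}}_i^{k+1}-\boldsymbol{x}_i^{\star})$ and $(\hat{\boldsymbol{\lambda}}_{i|j}^{k+1}-\boldsymbol{\lambda}_{i|j}^{\star})$ into telescoping differences of weighted squared norms via the polarization identity $2\langle a-b,\,a-c\rangle=\|a-b\|^2+\|a-c\|^2-\|b-c\|^2$, applied in both the $\boldsymbol{P}_{p,ij}$ and the $\Delta\boldsymbol{P}_{d,ij}$ inner products, where Condition~\ref{con:G} is invoked to split $\boldsymbol{P}_{d,ij}=\boldsymbol{P}_{p,ij}^{-1}+\Delta\boldsymbol{P}_{d,ij}$ with $\Delta\boldsymbol{P}_{d,ij}\succeq\boldsymbol{0}$. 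This is the step that manufactures the squared-norm terms appearing in (\ref{equ:d_asyn}), including the nonnegative $\Delta\boldsymbol{P}_{d,ij}$ pieces, and it parallels the synchronous derivation term by term.

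Third, and this is where I expect the main obstacle, I would reorganize the directed-edge sum $\sum_{i\in\mathcal{V}}\sum_{j\in\mathcal{N}_i}$ into the undirected-edge sum $\sum_{(u,v)\in\mathcal{E}}^{u<v}$, carefully tracking which segment each neighbouring estimate belongs to. By Lemma~\ref{lemma:act_node_order}, when node $u$ (with $u<v$) fires its neighbour $v$ has \emph{not} yet been updated in the segment, so it sees $\hat{\boldsymbol{x}}_v^{lm}$ and $\hat{\boldsymbol{\lambda}}_{v|u}^{lm}$, whereas when $v$ fires it sees the already-updated $\hat{\boldsymbol{x}}_u^{(l+1)m}$. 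It is exactly this asymmetry that forces both the previous-segment superscripts $lm$ and the current-segment superscripts $(l+1)m$ to coexist in $d_{uv}^{l+1}$, and making the telescoping close across the whole segment requires that the intermediate terms produced at $u$'s update cancel against those produced at $v$'s update. Verifying these cancellations under the precise ordering dictated by Lemma~\ref{lemma:act_node_order} is the delicate bookkeeping at the heart of the proof.

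Finally, summing the per-node bounds over the $m$ iterations $k=lm,\ldots,(l+1)m-1$ of the segment and collecting the surviving terms yields (\ref{equ:optCon_gen_asyn}) with $d_{uv}^{l+1}$ as displayed in (\ref{equ:d_asyn}); the lower bound of $0$ on the left-hand side is inherited directly from Lemma~\ref{lemma:inequality} evaluated at the end-of-segment estimate $(\hat{\boldsymbol{x}}^{(l+1)m},\hat{\boldsymbol{\lambda}}^{(l+1)m})$. I would reuse as much of the synchronous algebra from Appendix~\ref{appendix:lemma_gen_syn} as possible, so that the only genuinely new content is the segment-wise indexing argument of the third step.
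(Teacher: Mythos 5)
Your proposal follows essentially the same route as the paper's Appendix~B proof: apply the subgradient (variational) inequality to each activated node's two subproblems, use Lemma~\ref{lemma:act_node_order} to decide whether a neighbour's estimate carries the superscript $lm$ or $(l+1)m$ (the paper encodes this in a function $s(k,j)$), pair the two directed-edge contributions of each undirected edge together with the two quadratic penalty terms, and convert the resulting inner products into the squared-norm differences of (\ref{equ:d_asyn}) via a polarization identity under the splitting $\boldsymbol{P}_{d,ij}=\boldsymbol{P}_{p,ij}^{-1}+\Delta\boldsymbol{P}_{d,ij}$ of Condition~\ref{con:G}. The only cosmetic discrepancy is that the lower bound of $0$ is not part of this lemma's claim (it enters later via Lemma~\ref{lemma:inequality} in Theorem~\ref{theo:sublinear_asyn}), but this does not affect the correctness of your argument.
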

\begin{proof}
See the proof in Appendix~\ref{appendix:lemma_gen_asyn}. Lemma~\ref{lemma:act_node_order} will be used in the proof to simplify mathematic derivations.
\end{proof}
\begin{remark}
We note that Lemma~\ref{lemma:gen_asyn} corresponds to Lemma~\ref{lemma:gen_syn} which is for synchronous PDMM.
The right hand side of (\ref{equ:optCon_gen_asyn}) consists of $|\mathcal{E}|$ quantities $\{d_{uv}^{l+1}\}$ (one for each edge $(u,v)\in\mathcal{E}$) as opposed to that of (\ref{equ:optCon_syn}) which consists of $|\vec{\mathcal{E}}|$ quantities $\{d_{i|j}^{k+1}\}$ (one for each directed edge $[i,j]\in\vec{\mathcal{E}}$).
\end{remark}
\begin{lemma}
Every pair of estimates $(\hat{\boldsymbol{x}}_v^{(l+1)m},\hat{\boldsymbol{\lambda}}_{v|u}^{(l+1)m})$, $(u,v)\in\mathcal{E}$, $u<v$, $l\geq 0$, in Lemma~\ref{lemma:gen_asyn} is upper bounded by a constant $M$ under a squared error criterion: 
\begin{align}
&\hspace{-4mm}\|\boldsymbol{P}_{\hspace{-0.5mm}p,uv}^{\frac{1}{2}}\boldsymbol{A}_{v u}\hspace{-0.15mm}(\hat{\boldsymbol{x}}_v^{(l\hspace{-0.3mm}+\hspace{-0.3mm}1)m}\hspace{-0.6mm}-\hspace{-0.6mm}\boldsymbol{x}_{v}^{\star})\hspace{-0.7mm}+\hspace{-0.7mm}\boldsymbol{P}_{\hspace{-0.3mm}p,uv}^{-\frac{1}{2}}\hspace{-0.15mm}(\boldsymbol{\lambda}_{v|u}^{\star}\hspace{-0.8mm}-\hspace{-0.8mm}\hat{\boldsymbol{\lambda}}_{v|u}^{(l\hspace{-0.3mm}+\hspace{-0.3mm}1)m})\|^2\hspace{-0.9mm}\leq\hspace{-0.6mm} M. \nonumber
\end{align} 
\label{lemma:bound_asyn}
 \vspace{-4mm}
\end{lemma}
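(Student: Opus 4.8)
The plan is to combine the nonnegativity supplied by Lemma~\ref{lemma:inequality} with the telescoping structure hidden in $d_{uv}^{l+1}$ of (\ref{equ:d_asyn}), exactly paralleling the recursive argument sketched for the synchronous Lemma~\ref{lemma:bound_syn}. First I would note that the left-hand side of (\ref{equ:optCon_gen_asyn}) is precisely the expression that Lemma~\ref{lemma:inequality} lower-bounds by zero, evaluated at $(\boldsymbol{x},\boldsymbol{\lambda})=(\hat{\boldsymbol{x}}^{(l+1)m},\hat{\boldsymbol{\lambda}}^{(l+1)m})$. Hence (\ref{equ:optCon_gen_asyn}) immediately gives
\[
0\leq \sum_{(u,v)\in\mathcal{E}}^{u<v} d_{uv}^{l+1}.
\]

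Next I would reorganize $d_{uv}^{l+1}$ into a Lyapunov-type difference. Define, for each segment endpoint $lm$,
\[
V^l = \sum_{(u,v)\in\mathcal{E}}^{u<v}\frac{1}{2}\Big(\|\boldsymbol{P}_{p,uv}^{\frac{1}{2}}\boldsymbol{A}_{v u}(\hat{\boldsymbol{x}}_v^{lm}-\boldsymbol{x}_v^{\star})+\boldsymbol{P}_{p,uv}^{-\frac{1}{2}}(\boldsymbol{\lambda}_{v|u}^{\star}-\hat{\boldsymbol{\lambda}}_{v|u}^{lm})\|^2 + \|\Delta\boldsymbol{P}_{d,uv}^{\frac{1}{2}}(\boldsymbol{\lambda}_{u|v}^{\star}-\hat{\boldsymbol{\lambda}}_{v|u}^{lm})\|^2\Big).
\]
The two terms collected in $V^l$ are exactly the two plus-sign terms of (\ref{equ:d_asyn}), both carried by the start index $lm$. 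Of the six minus-sign terms, two (those depending on $\hat{\boldsymbol{x}}_v^{(l+1)m}$ and $\hat{\boldsymbol{\lambda}}_{v|u}^{(l+1)m}$) reconstruct $V^{l+1}$ at the end index $(l+1)m$, while the remaining four are nonnegative residuals, call their sum $R^l\geq 0$. Thus $\sum_{(u,v)} d_{uv}^{l+1}=V^l-V^{l+1}-R^l$, and combining with the displayed inequality yields $V^{l+1}\leq V^l$.

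I would then chain this monotonicity. Because the start terms of segment $l+2$ (at index $(l+1)m$) coincide in functional form with the end terms of segment $l+1$, the quantity $V^{l+1}$ is the same functional evaluated at the successor endpoint, so $V^{l+1}\leq V^l\leq\cdots\leq V^0$ for every $l\geq 0$. The target quantity $\|\boldsymbol{P}_{p,uv}^{\frac{1}{2}}\boldsymbol{A}_{v u}(\hat{\boldsymbol{x}}_v^{(l+1)m}-\boldsymbol{x}_v^{\star})+\boldsymbol{P}_{p,uv}^{-\frac{1}{2}}(\boldsymbol{\lambda}_{v|u}^{\star}-\hat{\boldsymbol{\lambda}}_{v|u}^{(l+1)m})\|^2$ is, for each edge $(u,v)$, exactly twice one of the two nonnegative summands of $V^{l+1}$; hence it is bounded by $2V^{l+1}\leq 2V^0$. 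Setting $M=2V^0$, a constant fixed only by the initialization $(\hat{\boldsymbol{x}}^0,\hat{\boldsymbol{\lambda}}^0)$, the saddle point, and the matrices in $\mathcal{P}$, completes the proof.

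The main obstacle will be the bookkeeping in the second step: correctly matching each plus-sign term of $d_{uv}^{l+1}$ at index $lm$ with the corresponding minus-sign term at index $(l+1)m$, so that $V^l$ is genuinely one functional evaluated at consecutive segment endpoints — this is what legitimizes the chain $V^{l+1}\leq V^l\leq\cdots\leq V^0$. Extra care is needed because the summation in (\ref{equ:optCon_gen_asyn}) ranges over undirected edges with the orientation convention $u<v$ (in contrast to the directed-edge sum of the synchronous Lemma~\ref{lemma:gen_syn}), so the pairing of the cross-variable penalty terms must respect this convention throughout.
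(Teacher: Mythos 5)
Your proposal is correct and follows essentially the same route the paper intends: the paper leaves Lemma~\ref{lemma:bound_asyn} unproved, deferring to the recursive argument sketched for the synchronous analogue (Lemma~\ref{lemma:bound_syn}), and your Lyapunov reformulation $\sum_{(u,v)}d_{uv}^{l+1}=V^l-V^{l+1}-R^l$ with $R^l\geq 0$, combined with the nonnegativity from Lemma~\ref{lemma:inequality} applied to the left-hand side of (\ref{equ:optCon_gen_asyn}), is exactly that recursion made explicit. The term-matching you describe (the two plus-sign terms of (\ref{equ:d_asyn}) at index $lm$ against the corresponding minus-sign terms at $(l+1)m$) checks out, so $M=2V^0$ is a valid constant.
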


\begin{theorem}
Let the $K\geq 1$ segments of estimates $\{(\hat{\boldsymbol{x}}^{k+1},\hat{\boldsymbol{\lambda}}^{k+1})| k=0,\ldots, Km-1\}$ be obtained by performing  (\ref{equ:x_lambda_updateAsyn1})-(\ref{equ:x_lambda_updateAsyn2}) under Condition~\ref{con:G}. The average estimates $(\check{\boldsymbol{x}}^K,\check{\boldsymbol{\lambda}}^K)\hspace{-0.5mm}=\hspace{-0.5mm}(\frac{1}{K}\hspace{-0.3mm}\sum_{l=1}^K\hspace{-0.3mm}\hat{\boldsymbol{x}}^{lm},\frac{1}{K}\hspace{-0.3mm}\sum_{l=1}^K\hspace{-0.3mm}\hat{\boldsymbol{\lambda}}^{lm})$ satisfies 
\begin{align}
&\hspace{-2mm}0\hspace{-0.5mm}\leq\hspace{-0.5mm}\sum_{i\in \mathcal{V}}\hspace{-1mm}\sum_{j\in \mathcal{N}_i}\hspace{-1.5mm}\Big[\hspace{-0.5mm}\left(\check{\boldsymbol{\lambda}}_{i|j}^{K}\hspace{-0.5mm}-\hspace{-0.5mm}\boldsymbol{\lambda}_{i|j}^{\star}\right)^T\hspace{-1.2mm}\Big(\hspace{-0.5mm}\boldsymbol{A}_{j i}\check{\boldsymbol{x}}_j^{K}\hspace{-0.5mm}-\hspace{-0.5mm}\frac{\boldsymbol{c}_{ij}}{2}\Big)\hspace{-0.6mm}-\hspace{-0.6mm}\Big(\hspace{-0.2mm}\check{\boldsymbol{x}}_i^{K}\hspace{-0.5mm}-\hspace{-0.5mm}\boldsymbol{x}_i^{\star}\hspace{-0.2mm}\Big)^T\nonumber\\
&\hspace{14mm}\cdot\boldsymbol{A}_{i j}^T\check{\boldsymbol{\lambda}}_{j|i}^{K}\Big]\hspace{-0.5mm}+\hspace{-0.5mm}p\left(\check{\boldsymbol{x}}^{K},\check{\boldsymbol{\lambda}}^{K}\right)\leq \mathcal{O}\Big(\frac{1}{K}\Big) \\
&\hspace{-2mm}0\hspace{-0.5mm}\leq\Big\|\boldsymbol{P}_{p,uv}^{\frac{1}{2}}(\boldsymbol{A}_{u v}\check{\boldsymbol{x}}_u^{K}+\boldsymbol{A}_{v u}\check{\boldsymbol{x}}_v^{K}-\boldsymbol{c}_{uv})\nonumber\\
&\hspace{2mm}-\hspace{-0.5mm}\boldsymbol{P}_{p,uv}^{-\frac{1}{2}}(\check{\boldsymbol{\lambda}}_{u|v}^{K}\hspace{-0.6mm}-\hspace{-0.6mm}\check{\boldsymbol{\lambda}}_{v|u}^{K})\Big\|^2\hspace{-0.6mm}\leq\hspace{-0.5mm}\mathcal{O}\Big(\frac{1}{K}\Big)
\hspace{1.5mm} \forall (u,v)\hspace{-0.5mm}\in\hspace{-0.5mm} \mathcal{E}, u\hspace{-0.5mm}<\hspace{-0.5mm}v \\
&\hspace{-2mm}\lim_{K \rightarrow \infty}\hspace{-0.2mm}\Big[\boldsymbol{P}_{p,uv}^{\frac{1}{2}}(\boldsymbol{A}_{u v}\check{\boldsymbol{x}}_u^{K}+\boldsymbol{A}_{v u}\check{\boldsymbol{x}}_v^{K}-\boldsymbol{c}_{uv})\nonumber\\
&\hspace{2mm}+\hspace{-0.5mm}\boldsymbol{P}_{p,uv}^{-\frac{1}{2}}(\check{\boldsymbol{\lambda}}_{u|v}^{K}\hspace{-0.5mm}-\hspace{-0.5mm}\check{\boldsymbol{\lambda}}_{v|u}^{K})\Big]\hspace{-0.5mm}=\hspace{-0.5mm}\boldsymbol{0}\hspace{1mm} \forall (u,v)\hspace{-0.5mm}\in\hspace{-0.5mm} \mathcal{E}, u\hspace{-0.5mm}<\hspace{-0.5mm}v.\label{equ:mix_feas_asyn2}
\end{align}
\label{theo:sublinear_asyn}
\vspace{-4mm}
\end{theorem}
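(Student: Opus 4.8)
The plan is to mirror the proof of the synchronous Theorem~\ref{theo:syn}, working now with \emph{segments} of $m$ iterations rather than with single iterations. First I would sum the segment-wise inequality (\ref{equ:optCon_gen_asyn}) of Lemma~\ref{lemma:gen_asyn} over $l=0,1,\ldots,K-1$, so that the left-hand side accumulates the saddle-point residual plus $p(\cdot)$ at the segment endpoints $(\hat{\boldsymbol{x}}^{(l+1)m},\hat{\boldsymbol{\lambda}}^{(l+1)m})$, while the right-hand side becomes $\sum_{l=0}^{K-1}\sum_{(u,v)\in\mathcal{E}}^{u<v} d_{uv}^{l+1}$. The key observation is that in the expression (\ref{equ:d_asyn}) for $d_{uv}^{l+1}$ the first two norm terms differ only by the segment index ($lm$ versus $(l+1)m$), as do the two $\Delta\boldsymbol{P}_{d,uv}$ terms built from $\boldsymbol{\lambda}_{u|v}^{\star}$; these telescope across $l$, collapsing the summed right-hand side to a constant that depends only on the initial estimates at $l=0$ (after discarding the nonpositive endpoint terms at $l=K$).

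The remaining norm terms in $d_{uv}^{l+1}$ -- the two ``mixed'' residuals together with the two mixed $\Delta\boldsymbol{P}_{d,uv}$ terms -- are all nonpositive and do \emph{not} telescope, so I would move them to the left-hand side as nonnegative quantities. Dividing the resulting inequality by $K$ and invoking convexity of the left-hand side in $(\boldsymbol{x},\boldsymbol{\lambda})$, Jensen's inequality lets me replace the per-segment estimates by the averages $(\check{\boldsymbol{x}}^K,\check{\boldsymbol{\lambda}}^K)$. Lemma~\ref{lemma:inequality} then guarantees that the averaged residual-plus-$p$ term is nonnegative, which together with the constant $\mathcal{O}(1)$ bound on the right yields the first $\mathcal{O}(1/K)$ inequality. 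The second inequality follows immediately: the mixed residual carrying the \emph{minus} sign between $\hat{\boldsymbol{\lambda}}_{u|v}$ and $\hat{\boldsymbol{\lambda}}_{v|u}$ (the third norm in (\ref{equ:d_asyn})) is built purely from the segment-endpoint indices $(l+1)m$, so after division by $K$ and Jensen it is directly controlled by the same constant, giving the stated $\mathcal{O}(1/K)$ feasibility-type bound.

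For the asymptotic statement (\ref{equ:mix_feas_asyn2}) with the \emph{plus} sign, I would argue as in Theorem~\ref{theo:syn}: boundedness of the summed right-hand side forces each mixed term to vanish as $l\to\infty$, and Lemma~\ref{lemma:bound_asyn} supplies the uniform bound needed to pass from the mixed-index residual (which couples $\hat{\boldsymbol{x}}_v^{lm}$ with $\hat{\boldsymbol{x}}_u^{(l+1)m}$) to the pure-average quantity evaluated at $(\check{\boldsymbol{x}}^K,\check{\boldsymbol{\lambda}}^K)$. The main obstacle is bookkeeping rather than any new idea: because an activated node uses \emph{updated} estimates of its lower-indexed neighbors but \emph{stale} estimates of its higher-indexed neighbors within a segment, the telescoping is per-\emph{undirected}-edge $(u,v)$ with $u<v$ and the two mixed residuals carry different index patterns. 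This asymmetry is exactly what Lemma~\ref{lemma:act_node_order} is designed to resolve, and it has already been absorbed into Lemma~\ref{lemma:gen_asyn}; the delicate point that remains is to verify that the plus-sign residual, despite its mismatched indices, still collapses to the clean averaged feasibility residual in the limit, which is precisely where the uniform bound of Lemma~\ref{lemma:bound_asyn} is essential.
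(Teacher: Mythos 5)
Your proposal is correct and follows essentially the same route as the paper, whose own proof of Theorem~\ref{theo:sublinear_asyn} simply defers to the argument of Theorem~\ref{theo:syn}: sum the segment-wise inequality of Lemma~\ref{lemma:gen_asyn} over $l$, telescope the paired terms in $d_{uv}^{l+1}$, keep the non-telescoping nonpositive terms on the left, divide by $K$, apply Jensen's inequality, and invoke Lemma~\ref{lemma:inequality} and Lemma~\ref{lemma:bound_asyn} for the lower bound and the asymptotic feasibility statement. Your bookkeeping of which terms telescope and which yield the two $\mathcal{O}(1/K)$ bounds matches the structure of (\ref{equ:d_asyn}) exactly.
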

\begin{proof}
The proof is similar to that for Theorem~\ref{theo:syn}. \vspace{0.5mm}
\end{proof}

Similarly to synchrounous PDMM, by using the results of Theorem~\ref{theo:sublinear_asyn}, we can conclude that: 
\begin{theorem}
The average estimate $(\check{\boldsymbol{x}}^{K},\check{\boldsymbol{\lambda}}^{K})$ of Theorem~\ref{theo:sublinear_asyn} converges to a saddle point $(\boldsymbol{x}^{\star},\boldsymbol{\lambda}^{\star})$ of $L_{\mathcal{P}}$ as $K$ increases.
\label{theo:asyn1}
\end{theorem}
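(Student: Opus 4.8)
The plan is to mirror the argument used for the synchronous case in Theorem~\ref{theo:syn1}, whose engine is the converse saddle-point criterion of Lemma~\ref{lemma:mix_saddleP_converse}. Concretely, I would show that the segment-averaged iterate $(\check{\boldsymbol{x}}^K,\check{\boldsymbol{\lambda}}^K)$ asymptotically (i) satisfies the optimality conditions (\ref{equ:p_opt_cond2})--(\ref{equ:p_opt_cond4}), and (ii) approaches the primal and dual feasible sets $X$ and $\Lambda$. Lemma~\ref{lemma:mix_saddleP_converse} then delivers the conclusion that the limit point is a saddle point of $L_{\mathcal{P}}$.

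For step (i), I would invoke the first inequality of Theorem~\ref{theo:sublinear_asyn}. Setting $(\boldsymbol{x},\boldsymbol{\lambda})=(\check{\boldsymbol{x}}^K,\check{\boldsymbol{\lambda}}^K)$, its left-hand side is precisely the quantity that Lemma~\ref{lemma:inequality} shows to be lower-bounded by zero, while its right-hand side is $\mathcal{O}(1/K)$ and hence vanishes as $K\rightarrow\infty$. The squeeze forces the left-hand side to zero, and by the equality clause of Lemma~\ref{lemma:inequality} this is equivalent to the optimality conditions (\ref{equ:p_opt_cond2})--(\ref{equ:p_opt_cond4}) holding in the limit.

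For step (ii), I would combine the two remaining displayed relations of Theorem~\ref{theo:sublinear_asyn} for each undirected edge $(u,v)\in\mathcal{E}$, $u<v$. The second relation shows that the ``$-$'' combination $\boldsymbol{P}_{p,uv}^{1/2}(\boldsymbol{A}_{u v}\check{\boldsymbol{x}}_u^K+\boldsymbol{A}_{v u}\check{\boldsymbol{x}}_v^K-\boldsymbol{c}_{uv})-\boldsymbol{P}_{p,uv}^{-1/2}(\check{\boldsymbol{\lambda}}_{u|v}^K-\check{\boldsymbol{\lambda}}_{v|u}^K)$ has squared norm of order $\mathcal{O}(1/K)$, so it tends to $\boldsymbol{0}$; the third relation (\ref{equ:mix_feas_asyn2}) shows the ``$+$'' combination tends to $\boldsymbol{0}$. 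Adding the two limits cancels the $\boldsymbol{\lambda}$-difference term and yields $\boldsymbol{P}_{p,uv}^{1/2}(\boldsymbol{A}_{u v}\check{\boldsymbol{x}}_u^K+\boldsymbol{A}_{v u}\check{\boldsymbol{x}}_v^K-\boldsymbol{c}_{uv})\rightarrow\boldsymbol{0}$, whereas subtracting them cancels the constraint residual and yields $\boldsymbol{P}_{p,uv}^{-1/2}(\check{\boldsymbol{\lambda}}_{u|v}^K-\check{\boldsymbol{\lambda}}_{v|u}^K)\rightarrow\boldsymbol{0}$. Since each $\boldsymbol{P}_{p,uv}\succ\boldsymbol{0}$ is invertible, these give $\boldsymbol{A}_{u v}\check{\boldsymbol{x}}_u^K+\boldsymbol{A}_{v u}\check{\boldsymbol{x}}_v^K\rightarrow\boldsymbol{c}_{uv}$ and $\check{\boldsymbol{\lambda}}_{u|v}^K\rightarrow\check{\boldsymbol{\lambda}}_{v|u}^K$, i.e.\ $\check{\boldsymbol{x}}^K$ asymptotically enters $X$ and $\check{\boldsymbol{\lambda}}^K$ asymptotically enters $\Lambda$.

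The structural difference from the synchronous proof---and the place I would be most careful---is that here the two relations being combined are the ``$+$'' limit and the ``$-$'' $\mathcal{O}(1/K)$ bound attached to a single undirected edge, rather than two ``$+$'' relations arising from the two directed edges $[i,j]$ and $[j,i]$. I expect the main (and only minor) obstacle to be bookkeeping: confirming that the per-segment averaging $(\check{\boldsymbol{x}}^K,\check{\boldsymbol{\lambda}}^K)=(\frac{1}{K}\sum_{l=1}^K\hat{\boldsymbol{x}}^{lm},\frac{1}{K}\sum_{l=1}^K\hat{\boldsymbol{\lambda}}^{lm})$ indexes exactly the end-of-segment iterates appearing in Lemma~\ref{lemma:gen_asyn}, so that the summation-plus-Jensen step underlying Theorem~\ref{theo:sublinear_asyn} lines up, and that the $u<v$ orientation convention is applied consistently when translating the edge relations into membership in $X$ and $\Lambda$. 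Once feasibility and the optimality conditions are both established, Lemma~\ref{lemma:mix_saddleP_converse} closes the proof.
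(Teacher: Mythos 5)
Your proposal is correct and follows essentially the same route as the paper, which proves Theorem~\ref{theo:asyn1} only by appeal to the synchronous argument of Theorem~\ref{theo:syn1}: squeeze the Lemma~\ref{lemma:inequality} quantity to zero to obtain (\ref{equ:p_opt_cond2})--(\ref{equ:p_opt_cond4}), derive primal and dual feasibility from the edge relations of Theorem~\ref{theo:sublinear_asyn}, and conclude via Lemma~\ref{lemma:mix_saddleP_converse}. Your adaptation of the feasibility step --- adding and subtracting the ``$+$'' limit and the ``$-$'' $\mathcal{O}(1/K)$ bound for a single undirected edge $(u,v)$, $u<v$, in place of the two directed-edge relations used in the synchronous case --- is exactly the bookkeeping change the asynchronous setting requires, and it is carried out correctly.
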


\begin{corollary}
If for certain $u \in \mathcal{V}$, the estimate $\hat{\boldsymbol{x}}_u^{lm}$ in Theorem~\ref{theo:sublinear_asyn} converges to a fixed point $\boldsymbol{x}_u'$ ($\lim_{l\rightarrow \infty} \hat{\boldsymbol{x}}_u^{lm}= \boldsymbol{x}_u'$), we have ${\boldsymbol{x}}_u'=\boldsymbol{x}_u^{\star}$ which is the $u$th component of the optimal solution $\boldsymbol{x}^{\star}$ in Theorem~\ref{theo:asyn1}. Similarly, if the estimate $\hat{\boldsymbol{\lambda}}_{u|v}^{lm}$ converges to a point $\boldsymbol{\lambda}_{u|v}'$, we hvae $\boldsymbol{\lambda}_{u|v}'=\boldsymbol{\lambda}_{u|v}^{\star}$.
\label{coro:asyn}
\end{corollary}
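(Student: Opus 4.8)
The plan is to combine the convergence of the running average established in Theorem~\ref{theo:asyn1} with the elementary fact that the Cesàro mean of a convergent sequence converges to the same limit. Recall that the average estimate in Theorem~\ref{theo:sublinear_asyn} is defined component-wise as $\check{\boldsymbol{x}}_u^K=\frac{1}{K}\sum_{l=1}^K\hat{\boldsymbol{x}}_u^{lm}$, i.e., as the running (Cesàro) average of the end-of-segment iterates $\{\hat{\boldsymbol{x}}_u^{lm}\}_{l\geq 1}$, and likewise $\check{\boldsymbol{\lambda}}_{u|v}^K=\frac{1}{K}\sum_{l=1}^K\hat{\boldsymbol{\lambda}}_{u|v}^{lm}$. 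The whole argument hinges on reading these averages backwards: once an individual end-of-segment sequence is assumed to converge, its average must converge to the same point.

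First I would invoke the standard Cesàro result: if a sequence $\{a_l\}$ converges to a limit $a$, then its running average $\frac{1}{K}\sum_{l=1}^K a_l$ also converges to $a$. Applying this to the hypothesis $\lim_{l\rightarrow\infty}\hat{\boldsymbol{x}}_u^{lm}=\boldsymbol{x}_u'$ yields $\lim_{K\rightarrow\infty}\check{\boldsymbol{x}}_u^K=\boldsymbol{x}_u'$. Next I would combine this with Theorem~\ref{theo:asyn1}, which guarantees that $(\check{\boldsymbol{x}}^K,\check{\boldsymbol{\lambda}}^K)$ converges to a saddle point $(\boldsymbol{x}^{\star},\boldsymbol{\lambda}^{\star})$ of $L_{\mathcal{P}}$; in particular $\lim_{K\rightarrow\infty}\check{\boldsymbol{x}}_u^K=\boldsymbol{x}_u^{\star}$. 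By uniqueness of the limit of the convergent sequence $\{\check{\boldsymbol{x}}_u^K\}$, we obtain $\boldsymbol{x}_u'=\boldsymbol{x}_u^{\star}$, which is exactly the $u$th component of the optimal primal solution. The identical argument with $\hat{\boldsymbol{\lambda}}_{u|v}^{lm}$ in place of $\hat{\boldsymbol{x}}_u^{lm}$ gives $\boldsymbol{\lambda}_{u|v}'=\boldsymbol{\lambda}_{u|v}^{\star}$, completing the claim. This mirrors the synchronous case of Corollary~\ref{coro:syn}, replacing Theorem~\ref{theo:syn1} by Theorem~\ref{theo:asyn1} and the per-iteration index $k$ by the per-segment index $lm$.

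Since both halves reduce to the same Cesàro-mean step, I do not expect any substantial obstacle. The only point requiring mild care is conceptual: Theorem~\ref{theo:asyn1} asserts convergence of the \emph{average} sequence to a specific saddle point (namely the limit of the averages), so $\boldsymbol{x}^{\star}$ is well-defined as that limit even when the saddle-point set of $L_{\mathcal{P}}$ is not a singleton, and hence the matching of the two limits of $\{\check{\boldsymbol{x}}_u^K\}$ is unambiguous. Note also that the corollary is purely conditional: the proof does not need to establish convergence of the individual estimates $\hat{\boldsymbol{x}}_u^{lm}$, but only to identify their limit with the corresponding saddle-point component whenever such a limit exists.
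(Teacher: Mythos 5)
Your proof is correct and matches the argument the paper leaves implicit (it states the corollary without proof as a consequence of Theorem~\ref{theo:asyn1}): the Cesàro-mean step combined with uniqueness of limits is exactly the intended reasoning, and you rightly note that $\boldsymbol{x}^{\star}$ is well-defined as the limit of the averages even when the saddle-point set is not a singleton.
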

   
\section{Application to Distributed Averaging}
\label{sec:disAve}

In this section, we consider solving the problem of distributed averaging by using PDMM. Distributed averaging is one of the basic and important operations for advanced distributed signal processing \cite{Boyd06gossip,Dimakis10GossipAlg}. 

%Our main purpose is to study the properties of PDMM through the experiment rather than performing comparison with other algorithms. In particular,
%We first  study the properties of PDMM we will focus on two properties. Firstly, we consider how to choose the parameter set $\mathcal{P}$ for fast convergence. Secondly, we consider how PDMM behaves when there are transmission failures in the graph. 

\subsection{Problem formulation}
Suppose every node $i$ in a graph $G=(\mathcal{V},\mathcal{E})$ carries a scalar parameter, denoted as $t_i$. $t_i$ may represent a measurement of the environment, such as temperature, humidity or darkness. The problem is to compute the average value $t_{ave}=\frac{1}{m}\sum_{i\in \mathcal{V}}t_i$ iteratively only through message-passing between neighboring nodes in the graph. %The research objective is to allow every node have a good approximation of the average value after a reasonable number of iterations.

The above averaging problem can be formulated as a quadratic optimization over the graph as
\begin{align}
\min_{\{x_i\}}\sum_{i\in \mathcal{V}}\frac{1}{2}(x_i-t_i)^2\quad \textrm{ s.t. } x_i-x_j=0 \quad \forall (i,j)\in \mathcal{E}.\label{equ:avePrim}
\end{align}
The optimal solution equals to $x_1^{\star}=\ldots=x_m^{\star}=t_{ave}$, which is the same as the averaging value. 

The quadratic problem (\ref{equ:avePrim}) is inline with (\ref{equ:optProMulti_re}) by letting
\begin{align}
&f_i(x_i)=\frac{1}{2}(x_i-t_i)^2\quad \forall i\in \mathcal{V} \label{equ:avePrim1}\\
&(\boldsymbol{A}_{i j},\boldsymbol{A}_{j i},\boldsymbol{c}_{ij})=(1,-1,0)\quad \forall(i,j)\in \mathcal{E},i<j. \label{equ:avePrim2}
\end{align} In next subsection, we apply PDMM for distributed averaging.

\subsection{Parameter computations and transmissions}
Before deriving the updating expressions for PDMM, we first configure the set $\mathcal{P}$ in $L_{\mathcal{P}}$. For distributed averaging, all the matrices in $\mathcal{P}$ become scalars. For simplicity, we set the value of the primal scalars and the dual scalars as
\begin{subequations}
\begin{align}\boldsymbol{P}_{p,ij}&=\gamma_p\quad \forall(i,j)\in \mathcal{E} \\
\boldsymbol{P}_{d,ij}&=\gamma_d \quad \forall (i,j)\in \mathcal{E},
\end{align}
 \label{equ:ave_parSet}
\end{subequations} where the two parameters $\gamma_p>0$ and $\gamma_d>0$.

We start with the synchronous PDMM. %referring to the procedure that all nodes in the graph are activated at each iteration. 
By inserting (\ref{equ:avePrim1})-(\ref{equ:ave_parSet}) into (\ref{equ:x_updateSyn}), (\ref{equ:lambda_x_relation1}) and (\ref{equ:w_lambda_optCond5}), the updating expression for $(\hat{\boldsymbol{x}}^{k+1},\hat{\boldsymbol{\lambda}}^{k+1})$ at iteration $k$ can be derived as
\begin{align}
\hat{x}_i^{k+1}\hspace{-0.4mm}&=\hspace{-0.4mm}\frac{ t_i+\sum_{j\in \mathcal{N}_i}(\gamma_p\hat{x}_j^k+\boldsymbol{A}_{i j}\hat{\lambda}_{j|i}^k)}{1+|\mathcal{N}_i|\gamma_p}\quad \forall i\in \mathcal{V} \label{equ:ave_x}\\
\hat{\lambda}_{i|j}^{k+1}\hspace{-0.4mm}&=\hspace{-0.4mm}\hat{\lambda}_{j|i}^k\hspace{-0.5mm}-\hspace{-0.5mm}\frac{1}{\gamma_d}\Big(\boldsymbol{A}_{j i}\hat{x}_j^{k}\hspace{-0.5mm}+\hspace{-0.5mm}\boldsymbol{A}_{i j}w_i^{k+1}\Big)\quad \forall [i,j]\in \vec{\mathcal{E}},\label{equ:ave_lambda}
\end{align}
where 
\begin{align}
\hspace{-2mm}w_i^{k+1}\hspace{-0.4mm}=\hspace{-0.4mm}\frac{\sum_{j\in \mathcal{N}_i}(\hat{x}_j^k\hspace{-0.3mm}+\hspace{-0.3mm}\gamma_d\boldsymbol{A}_{i j}\hat{\lambda}_{j|i}^k)\hspace{-0.3mm}+\hspace{-0.3mm}\gamma_dt_i}{|\mathcal{N}_i|+\gamma_d}\quad \forall i\in \mathcal{V}. \label{equ:ave_w}
\end{align}
For the case that $\gamma_d=\gamma_p^{-1}$, it is immediate from (\ref{equ:ave_x}) and (\ref{equ:ave_w}) that $\hat{x}_i^{k+1}={w}_i^{k+1}$, which coincides with Proposition~\ref{prop:lambda_x_relation}.

The asynchronous PDMM only activates one node per iteration. Suppose node $i$ is activated at iteration $k$. Node $i$ then updates $\hat{{x}}_i$ and $\hat{{\lambda}}_{i|j}$, $j\in \mathcal{N}_i$, by following (\ref{equ:ave_x})-(\ref{equ:ave_lambda}) while all other nodes remain silent. After computation, node $i$ then sends $(\hat{x}_i,\hat{\lambda}_{i|j})$ to its neighboring node $j$ for all neighbors. %As introduced in Section~\ref{sec:BiADMM}, $\hat{x}_i$ and $\hat{\lambda}_{i|j}$ are referred to as the common and node-specific messages, respectively.

%In this subsection, we briefly consider broadcast transmission and point-to-point transmission. 
As described in Subsection~\ref{subsec:node_comp_trans}, if no transmission fails in the graph, the transmission of $\hat{\lambda}_{i|j}$, $j\in \mathcal{N}_i$, can be replaced by broadcast transmission of $w_i$ as given by (\ref{equ:ave_w}).
Once $w_i$ is received by a neighboring node $j$, $\hat{\lambda}_{i|j}$ can be easily computed by node $j$ alone using $w_i$, $\hat{x}_j$ and $\hat{\lambda}_{j|i}$ (see Eq.~(\ref{equ:ave_lambda})). If instead the transmission is not reliable, we have to return to point-to-point transmission.  %To briefly summarize, PDMM provides two different transmission schemes depending on the transmission channel, making it more flexible for practical usage.

%The recent work in \cite{Iutzeler13gossipAlg} also proposed a broadcast-based algorithm for distributed averaging, which represents the state-of-the-art that employ broadcast transmission. The algorithm requires that each activated node broadcasts two parameters, which is the same as PDMM when $\gamma_d\neq \gamma_p^{-1}$.  When setting $\gamma_d=\gamma_p^{-1}$, PDMM saves half of transmission bandwidth as required by the algorithm \cite{Iutzeler13gossipAlg}. We will show by experiment in next subsection that PDMM also converges much faster than the algorithm of \cite{Iutzeler13gossipAlg}.

\subsection{Experimental results}
\label{subsec:experiment}
We conducted three experiments for PDMM applied to distributed averaging. In the first experiment, we evaluated how different parameter-settings w.r.t. $(\gamma_p,\gamma_d)$ affect the convergence rates of PDMM. In the second experiment, we tested the non-perfect channels for PDMM, which lacks convergence guaranty at the moment. Finally, we evaluated the convergence rates of PDMM, ADMM and two gossip algorithms.

%For the case of non-perfect channels, we investigated how transmission failure affects PDMM. In this case, we took ADMM as a reference algorithm for performance comparison. We note that both the \emph{broadcast} and \emph{gossip} algorithms do not work under non-perfect channels. Actually, the two algorithms may converge to some constants instead of the real averaging value.

The tested graph in the three experiments was a $10\times 10$ two-dimensional grid (corresponding to $m=100$), implying that each node may have two, three or four neighbors. The mean squared error (MSE) $\frac{1}{m}\|\hat{\boldsymbol{x}}-t_{ave}\boldsymbol{1}\|_2^2$ was employed as performance measurement.

\subsubsection{performance for different parameter settings}
\label{subsub:parameterSel}
In this experiment, we evaluated the performance of PDMM by testing different parameter-settings for $(\gamma_p,\gamma_d)$. Both synchronous and asynchronous updating schemes were investigated.

At each iteration, the synchronous PDMM activated all the nodes for parameter-updating. As for the asynchronous PDMM, the nodes were activated sequentially by following the mapping $i=\textrm{mod}(k,m)+1$, where the iteration $k\geq 0$ (See Subsection~\ref{subsec:rate_asyn}). As a result, after every segment of $m=100$ iterations, all the nodes were activated once. In the experiment, we counted the number of iterations for the synchronous PDMM and the number of segments (each segment consists of $m$ iterations) for the asynchronous PDMM.

For each parameter-setting, we initialized $(\hat{{x}}_i^{0},\hat{\boldsymbol{\lambda}}_{i}^{0})=(t_i,\boldsymbol{0})$ for every $i\in \mathcal{V}$. The algorithm stops when the squared error is below $10^{-4}$.

Fig.~\ref{fig:per_param} displays the numbers of iterations (or segments) of PDMM under different parameter-settings. Each $\circ$ or {\tiny $\square$} symbol represents a particular setting for $(\gamma_p,\gamma_d)$. The settings denoted by {\tiny $\square$} are for the case that $\gamma_p\gamma_d<1$ while the ones by $\circ$ are for the case that $\gamma_p\gamma_d\geq 1$.

\begin{figure}[tb]
\centering
\begin{footnotesize}
  \includegraphics[width=80mm]{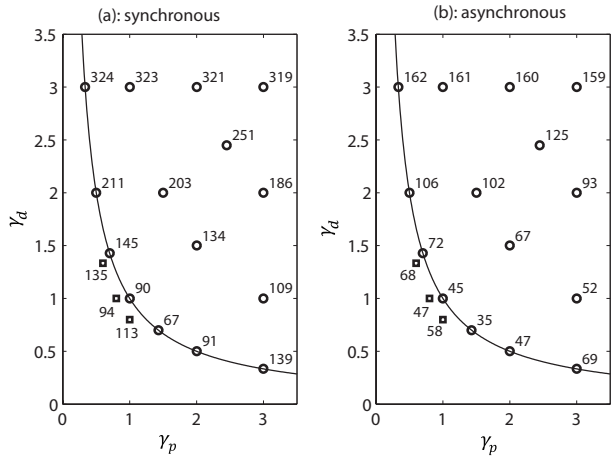}
\end{footnotesize}
\caption{\small Performance of PDMM for different parameter settings. Each value in subplot~$(a)$ represents the number of iterations required for the synchronous PDMM. On the other hand, each value in subplot~(b) represents the number of segments of iterations for the asynchronous PDMM, where each segment consists of $100$ iterations.  The convex curve in each subplot corresponds to $\gamma_p\gamma_d=1$.}
\label{fig:per_param}
\end{figure}

It is seen from the figure that large $\gamma_p$ or $\gamma_d$ can only make the algorithm converge slowly. The optimal parameter-setting that leads to the fastest convergence lies on the curve $\gamma_d\gamma_p=1$ for both the synchronous and the asynchronous updating schemes. Further, it appears that the two optimal settings for the two updating schemes are in a neighborhood.

Finally, we note that the settings denoted by {\tiny $\square$} correspond to the situation that $\gamma_p\gamma_d<1$. The experiment for those settings demonstrates that Condition~\ref{con:G} is only sufficient for algorithmic convergence. We also tested the setting $\gamma_p=\gamma_d=0.5$. We found that the above setting led to  divergence for both synchronous and synchronous schemes. This phenomenon suggests that $\gamma_p$ and $\gamma_d$ cannot be chosen arbitrarily in practice.

%\vspace{-2mm}
\subsubsection{performance with transmission failure}
\label{subsub:loss}
In this experiment, we studied how transmission failure affects the performance of PDMM given the fact that no convergence guaranty is derived at the moment. As discussed in Subsection~\ref{subsec:node_comp_trans}, we could not use broadcast transmission in the case of transmission loss. Instead, each activated node $i$ has to perform point-to-point transmission for $\hat{\lambda}_{i|j}$ from node $i$ to node $j\in \mathcal{N}_i$.

Due to transmission failure, PDMM was initialized differently from the first experiment. Each time the algorithm was tested, the initial estimate $(\hat{{\boldsymbol{x}}}^{0}, \hat{\boldsymbol{\lambda}}^{0})$ was set as
\begin{align}
(\hat{{\boldsymbol{x}}}^{0}, \hat{\boldsymbol{\lambda}}^{0})=(\boldsymbol{0},\boldsymbol{0}). \label{equ:per_loss_init}
\end{align}
%\begin{equation}
%\left\{\begin{array}{ll}\hat{{x}}_i^{0}=0 & \forall i\in \mathcal{V} \\
%\hat{{\lambda}}_{i|j}^{0}=0 &  \forall [i,j]\in \vec{\mathcal{E}}.
%\end{array}\right. \label{equ:per_loss_init}\end{equation}
The above initialization guarantees that every node in the graph has access to the initial estimates of neighboring nodes without transmission.

Fig.~\ref{fig:per_loss} demonstrates the performance of PDMM under three transmission losses: 0\%, 20\% and 40\%. Subplot~(a) and (b) are for the asynchronous and synchronous schemes, respectively. Each curve in the two subplots was obtained by averaging over 100 simulations to mitigate the effect of random transmission losses. It is seen that transmission failure only slows down the convergence speed of the algorithm. The above property is highly desirable in real applications because transmission losses might be inevitable in some networks (e.g., see \cite{Zhao03WSN} for investigation of packet-loss over wireless sensor networks in different environments).
 
Finally, it is observed that for each transmission-loss in subplot~(a), the error goes up in the first few hundred of iterations before deceasing. This may because of the special initialization (\ref{equ:per_loss_init}). We have tested the initialization $\{\hat{x}_i^{0}=t_i\}$ for 0\% transmission loss, where the MSE decreases along with the iterations monotonically. 

\begin{figure}[tb]
\centering
\begin{footnotesize}
  \includegraphics[width=80mm]{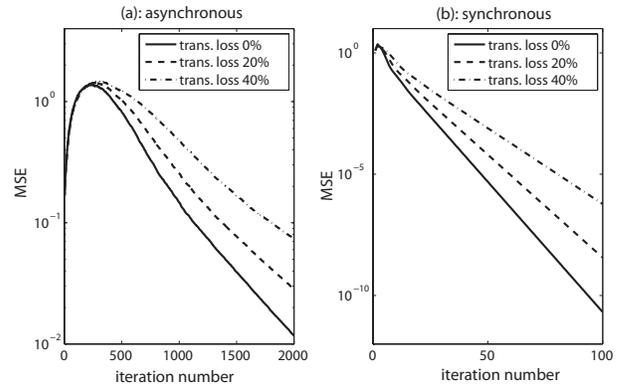}
\end{footnotesize}
\caption{\small Performance of synchronous/asynchronous PDMM under different transmission losses (\%). } \label{fig:per_loss}
\end{figure}

\subsubsection{performance comparison}
In this experiment, we investigated the convergence speeds of four algorithms under the condition of no transmission failure. Besides PDMM, we also implemented the broadcast-based algorithm in \cite{Iutzeler13gossipAlg} (referred to as \emph{broadcast}), the randomized gossip algorithm in \cite{Boyd06gossip} (referred to as \emph{gossip}) and ADMM. Unlike PDMM and ADMM that can work either synchronously or asynchronously, both \emph{broadcast} and \emph{gossip} algorithms can only work asynchronously. While \emph{broadcast} algorithm randomly activates one node per iteration, \emph{gossip} algorithm randomly activates one edge per iteration for parameter-updating.

Similarly to the first experiment, we also evaluated PDMM for both the synchronous and asynchronous schemes. For the asynchronous scheme, we tested all the four algorithms introduced above while for the synchronous scheme, we focused on PDMM and ADMM. The implementation of the synchronous/asynchronous ADMM follows from \cite{Boyd11ADMM} and \cite{Wei13ADMM}, respectively. The asynchronous ADMM \cite{Wei13ADMM} is similar to the \emph{gossip} algorithm in the sense that both algorithms activates one edge per iteration. 

We note that the asynchronous ADMM essentially activates two neighboring nodes per iteration. To make a fair comparison between PDMM and ADMM, we implemented two versions of PDMM for the asynchronous scheme. The first version follows Subsection~\ref{subsec:asyn_updating} where each iteration randomly activates one node as the \emph{gossip} algorithm, referred to as \emph{one-node PDMM}. The second version of PDMM randomly activates two neighboring nodes per iteration as the \emph{broadcast} algorithm, referred to as \emph{two-node PDMM}.

Both PDMM and ADMM have some parameters to be specified. To simplify the implementation, we let $\gamma_p=\gamma_d=1$ in PDMM (which is not the optimal setting from Fig.~\ref{fig:per_param}). %From Proposition~\ref{prop:lambda_x_relation}, the activated node only has to broadcast one common message to make PDMM work. 
Similarly,  we set the parameter in ADMM to be 1.

In the experiment, the \emph{gossip} and \emph{broadcast} algorithms were initialized according to \cite{Boyd06gossip} and \cite{Iutzeler13gossipAlg}, respectively. The initialization for PDMM was the same as in the first experiment. The estimates of ADMM were initialized similarly as for PDMM.

\begin{figure}[tb]
\centering
\begin{footnotesize}
  \includegraphics[width=80mm]{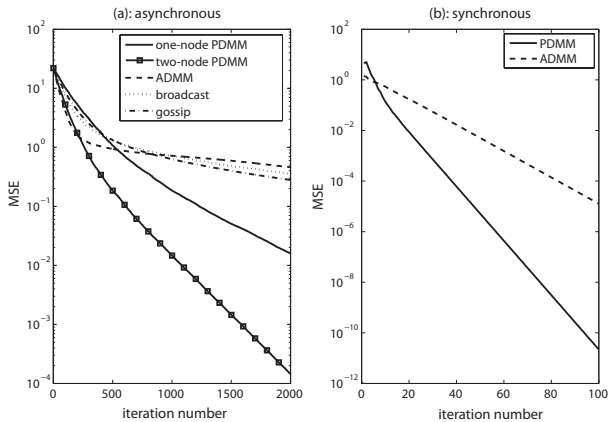}
\end{footnotesize}
\caption{\small  Performance comparison under perfect channel. The two curves in subplot~(b) at iteration 1 have a noticeable gap compared to subplot~(a). This is because under the synchronous scheme, all the parameters of each method are updated per iteration, leading to a relatively big performance difference in the beginning. } \label{fig:per_perfect}
\end{figure}
\begin{table}[t]
\centering
\begin{tabular}{|c|c|c|c|c|c|c|c|}
\hline
&\hspace{-3.5mm} {{\scriptsize $\begin{array}{c}\textrm{one-node} \\ \textrm{PDMM}\end{array}$}}\hspace{-3mm}
& \hspace{-3mm}{\scriptsize $\begin{array}{c}\textrm{two-node} \\ \textrm{PDMM}\end{array}$}  \hspace{-4mm}
& \hspace{-1.8mm}{\scriptsize ADMM} \hspace{-3mm} 
& \hspace{-1.5mm}\scriptsize{broadcast} \hspace{-1.8mm}
& \hspace{-1.8mm}\scriptsize{gossip}   \hspace{-2mm}
& \hspace{-3.5mm} {{\scriptsize $\begin{array}{c} \textrm{PDMM} \\ \textrm{(syn)}\end{array}$}}\hspace{-3mm} 
& \hspace{-3.5mm} {{\scriptsize $\begin{array}{c} \textrm{ADMM} \\ \textrm{(syn)}\end{array}$}}\hspace{-3mm}  \\
\hline  % inserts single-line
\hspace{-1.5mm}\scriptsize{ave. ($\mu s$)} \hspace{-1.5mm} & \footnotesize{5.46} & \footnotesize{8.92}& \footnotesize{6.54}& \footnotesize{2.10} & 
\footnotesize{0.24} & \footnotesize{380} & \footnotesize{384} \\ % Entering row contents Midnight&7&-3& 5&3&-1&-3&5\\
\hline
\hspace{-1.5mm}\scriptsize{std ($10^{-6}$)}  \hspace{-1.5mm} & \footnotesize{5.04} & \footnotesize{8.58} & \footnotesize{8.09} & \footnotesize{4.55} & \footnotesize{1.73} &\footnotesize{216} & \footnotesize{285} \\ % [1ex] adds vertical space
% centering table
% creating eight columns %inserting double-line
\hline
\end{tabular}
\caption{\small Average execution times (per iteration) and their standard deviations for the four methods. } %title of the table
\label{tab:time} \vspace{-5mm}
\end{table}

Fig.~\ref{fig:per_perfect} displays the MSE trajectories for the four methods while Table~\ref{tab:time} lists the average execution times (per iteration) and their standard deviations. Similarly to the second experiment, the performance of each method for the asynchronous scheme was obtained by averaging over 100 simulations to mitigate the effect of randomness introduced in node or edge-activation.
%The performance of the algorithms is displayed in Fig.~\ref{fig:per_perfect} and Table~\ref{tab:time}.  where subplot~(a) and (b) are for the asynchronous and synchronous schemes, respectively. 
We now focus on the asynchronous scheme. It is seen that the \emph{two-node PDMM} converges the fastest in terms of number of iterations while the \emph{gossip} algorithm requires the least execution time on average.  %After 700 iterations, the \emph{one-node PDMM} outperforms ADMM, \emph{gossip} and \emph{broadcast} algorithms.  
The above results suggest that for applications where signal transmission is more expensive than local computation (w.r.t. energy consumption), PDMM might be a good candidate as it may save number of iterations. 
%The above results suggest that PDMM leads to fast information-spread over the graph than the other three algorithms. %Besides fast convergence, PDMM is also advantageous over other algorithms in terms of the number of transmitted parameters per iteration. The \emph{broadcast} and \emph{gossip} algorithms have to transmit two parameters per iteration. ADMM has to transmit a set of node-specific parameters from the activated node to its neighbors. On the other hand, PDMM only broadcasts one parameter per iteration.

Fig.~\ref{fig:per_perfect}~(b) demonstrates the MSE performance of PDMM and ADMM for the synchronous scheme.   Both algorithms appear to have linear convergence rates. This may be because the objective functions in (\ref{equ:avePrim}) are strongly convex and have gradients which are Lipschitz continuous. It is seen from Table~\ref{tab:time} that both methods take roughly the same execution time. By combining the above results, we conclude that under synchronous scheme, PDMM converges faster than ADMM w.r.t. the execution time, which may be due to the fact that PDMM avoids the auxiliary variable $\boldsymbol{z}$ used in ADMM.

\section{Conclusion}
\label{sec:conclusion}
In this paper, we have proposed PDMM for iterative optimization over a general graph. The augmented primal-dual Lagrangian function is constructed of which a saddle point provides an optimal solution of the original problem, which leads to the design of PDMM. PDMM performs broadcast transmission under perfect channel and point-to-point transmission under non-perfect channel. We have shown that both the synchronous and asynchronous PDMMs possess a convergence rate of $\mathcal{O}(1/K)$ for general closed, proper and convex functions defined over the graph. As an example, we have applied PDMM for distributed averaging, through which properties of PDMM such as proper parameter-selection and resilience against transmission failure are further investigated. 

We note that PDMM is natural when performing node-oriented optimization over a graph as compared to ADMM which involves computing the edge variable $\boldsymbol{z}$ introduced in (\ref{equ:optProTwo}). A few applications in \cite{Heming15Thesis}, \cite{xiaoqiang16BiADMM} and \cite{Sherson16LCMV_conf}  suggest that PDMM is practically promising. While convergence properties of ADMM under different conditions (e.g., strong convexity and/or the gradients being Lipschitz continuous) are well understood, the convergence properties of PDMM for those conditions remain to be discovered.      
%One research direction for PDMM is to investigate the selection of the optimal matrices set $\mathcal{P}$ which lead to the fastest convergence rate.     

\appendices

\section{Proof for Lemma \ref{lemma:gen_syn}}
\label{appendix:lemma_gen_syn}

Before presenting the proof, we first introduce a basic inequality, which is described in a lemma below:

\begin{lemma}
Let $f_1(\boldsymbol{x})$ and $f_2(\boldsymbol{x})$ be two arbitrary closed, proper and convex functions. $\boldsymbol{x}^{\star}$ minimizes the sum of the two functions, i.e., $\boldsymbol{x}^{\star}=\arg\min_{\boldsymbol{x}}(f_1(\boldsymbol{x})+f_2(\boldsymbol{x}))$. Then, there is
\begin{align}
f_1(\boldsymbol{x})-f_1(\boldsymbol{x}^{\star})\geq (\boldsymbol{x}^{\star}-\boldsymbol{x})^T\boldsymbol{r}(\boldsymbol{x}^{\star})\quad \forall \boldsymbol{x}, \label{equ:basic_VI}
\end{align}
where $\boldsymbol{r}(\boldsymbol{x}^{\star})\in \partial_{\boldsymbol{x}}f_2(\boldsymbol{x}^{\star})$.
\label{lemma:basic_VI}
\end{lemma}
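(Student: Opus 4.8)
The plan is to translate the optimality of $\boldsymbol{x}^{\star}$ for the unconstrained problem $\min_{\boldsymbol{x}}(f_1(\boldsymbol{x})+f_2(\boldsymbol{x}))$ into a subgradient relation involving $f_1$ alone, and then read off (\ref{equ:basic_VI}) directly from the defining inequality of the subgradient. Since everything is convex and unconstrained, the natural engine is Fermat's rule together with the additivity of subdifferentials.

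First I would invoke the first-order optimality condition: because $\boldsymbol{x}^{\star}$ minimizes the sum of two closed, proper and convex functions, we have $\boldsymbol{0}\in\partial_{\boldsymbol{x}}(f_1+f_2)(\boldsymbol{x}^{\star})$. Applying the subdifferential sum rule, there exist subgradients $\boldsymbol{s}_1\in\partial_{\boldsymbol{x}}f_1(\boldsymbol{x}^{\star})$ and $\boldsymbol{s}_2\in\partial_{\boldsymbol{x}}f_2(\boldsymbol{x}^{\star})$ with $\boldsymbol{s}_1+\boldsymbol{s}_2=\boldsymbol{0}$. I then \emph{select} $\boldsymbol{r}(\boldsymbol{x}^{\star}):=\boldsymbol{s}_2\in\partial_{\boldsymbol{x}}f_2(\boldsymbol{x}^{\star})$, so that $-\boldsymbol{r}(\boldsymbol{x}^{\star})=\boldsymbol{s}_1\in\partial_{\boldsymbol{x}}f_1(\boldsymbol{x}^{\star})$. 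This is precisely the $\boldsymbol{r}(\boldsymbol{x}^{\star})$ whose existence the lemma asserts.

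With this choice in hand, the conclusion follows by the definition of a subgradient of $f_1$ at $\boldsymbol{x}^{\star}$: for every $\boldsymbol{x}$,
\begin{align}
f_1(\boldsymbol{x})\geq f_1(\boldsymbol{x}^{\star})+\bigl(-\boldsymbol{r}(\boldsymbol{x}^{\star})\bigr)^T(\boldsymbol{x}-\boldsymbol{x}^{\star}),\nonumber
\end{align}
and rearranging the linear term as $-\boldsymbol{r}(\boldsymbol{x}^{\star})^T(\boldsymbol{x}-\boldsymbol{x}^{\star})=(\boldsymbol{x}^{\star}-\boldsymbol{x})^T\boldsymbol{r}(\boldsymbol{x}^{\star})$ yields exactly (\ref{equ:basic_VI}). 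The one delicate point is the validity of the additivity $\partial(f_1+f_2)=\partial f_1+\partial f_2$, which in full generality needs a constraint qualification (Moreau--Rockafellar); however, in the intended applications (Lemma~\ref{lemma:gen_syn}) the companion function $f_2$ is a smooth quadratic whose subdifferential is a singleton, so the sum rule holds automatically and the selection of $\boldsymbol{r}(\boldsymbol{x}^{\star})$ is forced. I therefore expect the proof to be short, with the only conceptual care going into justifying the subgradient splitting rather than into any calculation.
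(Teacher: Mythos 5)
Your proof is correct. Note, however, that the paper does not actually prove this lemma: it simply states it in Appendix~A and cites the ADMM convergence literature (Wang--Banerjee, Deng--Yin, Boyd et al.), so there is no in-paper argument to compare against. Your derivation --- Fermat's rule $\boldsymbol{0}\in\partial(f_1+f_2)(\boldsymbol{x}^{\star})$, the Moreau--Rockafellar sum rule to split this into $\boldsymbol{s}_1+\boldsymbol{s}_2=\boldsymbol{0}$ with $\boldsymbol{s}_i\in\partial f_i(\boldsymbol{x}^{\star})$, the selection $\boldsymbol{r}(\boldsymbol{x}^{\star})=\boldsymbol{s}_2$, and the subgradient inequality for $f_1$ at $\boldsymbol{x}^{\star}$ --- is the standard one, and the algebraic rearrangement to (\ref{equ:basic_VI}) is right. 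You also correctly identify the only delicate point, namely that $\partial(f_1+f_2)=\partial f_1+\partial f_2$ needs a qualification condition in general, and correctly discharge it for the intended use: in (\ref{equ:optConSyn2})--(\ref{equ:optConSyn1}) the companion function $f_2$ is a differentiable quadratic-plus-linear term, so its subdifferential is the singleton $\{\nabla f_2(\boldsymbol{x}^{\star})\}$, the sum rule holds automatically, and $\boldsymbol{r}(\boldsymbol{x}^{\star})$ is uniquely determined --- which also resolves the mild ambiguity in the lemma's phrasing (the claim is existential in $\boldsymbol{r}$, not universal over $\partial f_2(\boldsymbol{x}^{\star})$). Nothing further is needed.
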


The above inequality is wildly exploited for the convergence analysis of ADMM and its variants \cite{Wang12OADM,Deng16ADMM,Boyd11ADMM}. We will also use the inequality in our proof.

Applying (\ref{equ:basic_VI}) to the updating equations (\ref{equ:x_updateSyn})-(\ref{equ:lambda_updateSyn}) for $(\hat{\boldsymbol{x}}^{k+1},\hat{\boldsymbol{\lambda}}^{k+1})$, we obtain a set of inequalities for all $ (\boldsymbol{x},\boldsymbol{\lambda})\in (\mathbb{R}^{\sum n_i},\mathbb{R}^{2\sum n_{ij}})$ as
\begin{align}
&\hspace{-2.5mm} \sum_{j\in \mathcal{N}_i}\hspace{-1mm}\left[\boldsymbol{P}_{d,ij}(\hat{\boldsymbol{\lambda}}_{j|i}^{k}\hspace{-0.5mm}-\hspace{-0.5mm}\hat{\boldsymbol{\lambda}}_{i|j}^{k+1})\hspace{-0.5mm}+\hspace{-0.5mm}\boldsymbol{c}_{ij}\hspace{-0.5mm}-\hspace{-0.5mm}\boldsymbol{A}_{j i}\hat{\boldsymbol{x}}_j^{k}\right]^T\hspace{-1.2mm}(\boldsymbol{\lambda}_{i|j}\hspace{-0.5mm}-\hspace{-0.5mm}\hat{\boldsymbol{\lambda}}_{i|j}^{k+1})  \nonumber \\
&\hspace{-2.5mm} \leq \hspace{-1mm} f_i^{\ast}(\boldsymbol{A}_{i}^T\boldsymbol{\lambda}_{i})\hspace{-0.5mm}-\hspace{-0.5mm} f_i^{\ast}(\boldsymbol{A}_{i}^T\hat{\boldsymbol{\lambda}}_{i}^{k+1})\hspace{26mm} \forall i\in \mathcal{V}\label{equ:optConSyn2}\\
&\hspace{-2.5mm}\sum_{j\in \mathcal{N}_i}\hspace{-1.2mm}\Big[\Big(\boldsymbol{P}_{p,ij}(\boldsymbol{c}_{ij}\hspace{-0.4mm}-\hspace{-0.4mm}\boldsymbol{A}_{i j}\boldsymbol{x}_i^{k+1}\hspace{-0.4mm}-\hspace{-0.4mm}\boldsymbol{A}_{j i}\hat{\boldsymbol{x}}_j^{k})\hspace{-0.4mm}+\hspace{-0.4mm}\hat{\boldsymbol{\lambda}}_{j|i}^{k}\Big)^T\hspace{-1.5mm} \boldsymbol{A}_{i j} \nonumber\\
&\hspace{8mm}\cdot (\boldsymbol{x}_i-\hat{\boldsymbol{x}}_i^{k+1} )\Big]\leq  f_i(\boldsymbol{x}_i)-f_i(\hat{\boldsymbol{x}}_i^{k+1})\hspace{4mm} \forall i\in \mathcal{V}. \label{equ:optConSyn1}
\end{align}
Adding (\ref{equ:optConSyn2})-(\ref{equ:optConSyn1}) over all $i\in \mathcal{V}$, and substituting $(\boldsymbol{x},\boldsymbol{\lambda})=(\boldsymbol{x}^{\star},\boldsymbol{\lambda}^{\star})$, the saddle point of $L_{\mathcal{P}}$, yields
\begin{align}
&\hspace{-2mm}\sum_{i\in \mathcal{V}}\sum_{j\in \mathcal{N}_i}\hspace{-1.2mm}\Big[(\hat{\boldsymbol{\lambda}}_{i|j}^{k+1}\hspace{-0.5mm}-\hspace{-0.5mm}\boldsymbol{\lambda}_{i|j}^{\star})^T\Big(\boldsymbol{A}_{j i}\hat{\boldsymbol{x}}_j^{k+1}\hspace{-0.5mm}-\hspace{-0.5mm}\frac{\boldsymbol{c}_{ij}}{2}\Big)\hspace{-0.5mm}-\hspace{-0.5mm}(\hat{\boldsymbol{x}}_i^{k+1}\hspace{-0.5mm}-\hspace{-0.5mm}\boldsymbol{x}_i^{\star})^T\nonumber\\
&\hspace{14mm}\cdot\boldsymbol{A}_{i j}^T\hat{\boldsymbol{\lambda}}_{j|i}^{k+1}\Big] +p(\hat{\boldsymbol{x}}^{k+1},\hat{\boldsymbol{\lambda}}^{k+1})-p(\boldsymbol{x}^{\star},\boldsymbol{\lambda}^{\star})\nonumber \\
&\leq \hspace{-0.5mm} \sum_{i\in \mathcal{V}}\hspace{-0.5mm}\sum_{j\in \mathcal{N}_i}\hspace{-1mm} \Big[\Big(\boldsymbol{P}_{p,ij}(\boldsymbol{c}_{ij}\hspace{-0.4mm}-\hspace{-0.4mm}\boldsymbol{A}_{i j}\boldsymbol{x}_i^{k+1}\hspace{-0.4mm}-\hspace{-0.4mm}\boldsymbol{A}_{j i}\hat{\boldsymbol{x}}_j^{k})+\hspace{-0.4mm}\hat{\boldsymbol{\lambda}}_{j|i}^{k} \nonumber\\
&\hspace{20mm}-\hspace{-0.5mm}\hat{\boldsymbol{\lambda}}_{j|i}^{k+1}\Big)^T\hspace{-1mm}\boldsymbol{A}_{i j}(\hat{\boldsymbol{x}}_i^{k+1}\hspace{-0.5mm}-\hspace{-0.5mm}\boldsymbol{x}_i^{\star})\hspace{-0.5mm}+\hspace{-0.5mm}(\hat{\boldsymbol{\lambda}}_{i|j}^{k+1}\hspace{-0.6mm}-\hspace{-0.6mm}\boldsymbol{\lambda}_{i|j}^{\star})^T\nonumber\\
&\hspace{15mm}\cdot\hspace{-0.5mm}\left(\hspace{-0.3mm}\boldsymbol{P}_{d,ij}(\hat{\boldsymbol{\lambda}}_{j|i}^{k}\hspace{-0.5mm}-\hspace{-0.5mm}\hat{\boldsymbol{\lambda}}_{i|j}^{k+1})\hspace{-0.5mm}+\hspace{-0.5mm}\boldsymbol{A}_{j i}(\hat{\boldsymbol{x}}_{j}^{k+1}\hspace{-0.5mm}-\hspace{-0.5mm}\hat{\boldsymbol{x}}_j^{k})\hspace{-0.3mm}\right)\hspace{-0.7mm}\Big] \nonumber \\
&= \hspace{-0.5mm} \sum_{i\in \mathcal{V}}\hspace{-0.5mm}\sum_{j\in \mathcal{N}_i}\hspace{-1mm} \Big[\Big(\boldsymbol{P}_{p,ij}\boldsymbol{A}_{j i}(\hat{\boldsymbol{x}}_j^{k+1}\hspace{-0.4mm}-\hspace{-0.4mm}\hat{\boldsymbol{x}}_j^{k})+\hspace{-0.4mm}\hat{\boldsymbol{\lambda}}_{j|i}^{k}\hspace{-0.4mm}-\hspace{-0.4mm}\hat{\boldsymbol{\lambda}}_{j|i}^{k+1}\Big)^T \nonumber\\
&\hspace{20mm}\cdot\boldsymbol{A}_{i j}(\hat{\boldsymbol{x}}_i^{k+1}-\boldsymbol{x}_i^{\star})+(\hat{\boldsymbol{\lambda}}_{i|j}^{k+1}-\boldsymbol{\lambda}_{i|j}^{\star})^T\nonumber\\
&\hspace{19mm}\cdot\hspace{-0.5mm}\left(\boldsymbol{P}_{d,ij}(\hat{\boldsymbol{\lambda}}_{j|i}^{k}\hspace{-0.5mm}-\hspace{-0.5mm}\hat{\boldsymbol{\lambda}}_{j|i}^{k+1})\hspace{-0.5mm}+\hspace{-0.5mm}\boldsymbol{A}_{j i}(\hat{\boldsymbol{x}}_{j}^{k+1}\hspace{-0.5mm}-\hspace{-0.5mm}\hat{\boldsymbol{x}}_j^{k})\right)\Big]\nonumber\\
&\hspace{3mm}-\hspace{-1.5mm}\sum_{(i,j)\in \mathcal{E}}\hspace{-1.0mm} \Big(\|\boldsymbol{c}_{ij}\hspace{-0.4mm}-\hspace{-0.4mm}\boldsymbol{A}_{i j}\boldsymbol{x}_i^{k+1}\hspace{-0.4mm}-\hspace{-0.4mm}\boldsymbol{A}_{j i}\hat{\boldsymbol{x}}_j^{k+1} \|_{\boldsymbol{P}_{p,ij}}^2\nonumber\\
&\hspace{16mm}+\|\hat{\boldsymbol{\lambda}}_{i|j}^{k+1}-\hat{\boldsymbol{\lambda}}_{j|i}^{k+1}\|_{\boldsymbol{P}_{d,ij}}^2\Big),
\label{equ:optCon_gen_syn1} 
\end{align}
where the last equality follows from the two optimality conditions (\ref{equ:L_G_opt3})-(\ref{equ:L_G_opt4}).

To further simplify (\ref{equ:optCon_gen_syn1}), one can first insert the alternative expression (\ref{equ:G_dp}) for every $\boldsymbol{P}_{d,ij}$ into (\ref{equ:optCon_gen_syn1}). After that, the expression (\ref{equ:optCon_syn}) can be obtained by simplifying the new expression  using (\ref{equ:L_G_opt3})-(\ref{equ:L_G_opt4}) and the following identity
\begin{align}
&(\boldsymbol{y}_1-\boldsymbol{y}_2)^T(\boldsymbol{y}_3-\boldsymbol{y}_4)\nonumber\\
&\equiv\hspace{-0.5mm}\frac{1}{2}(\|\boldsymbol{y}_1\hspace{-0.5mm}+\hspace{-0.5mm}\boldsymbol{y}_3\|^2\hspace{-0.5mm}-\hspace{-0.5mm}\|\boldsymbol{y}_1\hspace{-0.5mm}+\hspace{-0.5mm}\boldsymbol{y}_4\|^2\hspace{-0.5mm}-\hspace{-0.5mm}\|\boldsymbol{y}_2\hspace{-0.5mm}+\hspace{-0.5mm}\boldsymbol{y}_3\|^2\hspace{-0.5mm}+\hspace{-0.5mm}\|\boldsymbol{y}_2\hspace{-0.5mm}+\hspace{-0.5mm}\boldsymbol{y}_4\|^2).\nonumber
\end{align}

\section{Proof of Lemma~\ref{lemma:gen_asyn}}
\label{appendix:lemma_gen_asyn}

The basic idea for the proof is similar to that for Lemma~\ref{lemma:gen_syn} as presented in Appendix~\ref{appendix:lemma_gen_syn}.
However, since asynchronous PDMM activates one node $i\in \mathcal{V}$ per iteration, it is difficult to tell which neighbors of $i$ have been recently activated and which have not yet. The above difficulty requires careful treatment in the convergence analysis. We sketch the proof in the following for reference.

%Firstly, we point out that the particular mapping $i=\textrm{mod}(k,m)+1$ between iteration $k$ and node-index $i$ is not important for the proof. We only have to notice that for each segment of $m$ iterations, all the $m$ nodes are activated sequentially.

We focus on the parameter-updating for a particular segment of iterations $k\in \{ml, ml+1,\ldots,ml+m-1\}$, where $l\geq 0$. For simplicity, we denote the activated node $i$ at iteration $k$ as $i(k)$. To start with, we apply (\ref{equ:basic_VI}) to the updating equation (\ref{equ:x_lambda_updateAsyn1}) for the estimate $(\hat{\boldsymbol{x}}_{i(k)}^{k+1},\hat{\boldsymbol{\lambda}}_{i(k)}^{k+1})$ of node $i(k)$. In order to do so, we first have to consider the estimates of its neighbors. It may happen that some neighbors have already been activated within the segment while others are still waiting to be activated. If a neighbor $j\in \mathcal{N}_{i(k)}$ is still waiting, we then have $(\hat{\boldsymbol{x}}_j^k,\hat{\boldsymbol{\lambda}}_j^k)=(\hat{\boldsymbol{x}}_j^{lm},\hat{\boldsymbol{\lambda}}_j^{lm})$. Conversely, if a neighbor $j\in \mathcal{N}_{i(k)}$ has already been activated, we then have $(\hat{\boldsymbol{x}}_j^k,\hat{\boldsymbol{\lambda}}_j^k)=(\hat{\boldsymbol{x}}_j^{(l+1)m},\hat{\boldsymbol{\lambda}}_j^{(l+1)m})$. From Lemma~\ref{lemma:act_node_order}, it is clear that if $j<i(k)$ (or $j>i(k)$), then the neighbor $j$ has been activated (not yet activated). For simplicity, we use a function $s(k,j)$ to denote the value $lm$ or $(l+1)m$ for a neighbor $j\in \mathcal{N}_{i(k)}$ at iteration $k$
\begin{align}
s(k,j)=\left\{\begin{array}{ll}lm       & j>i(k)\\
                                (l+1)m  & j<i(k)  \end{array}\right..\label{equ:sFun}
\end{align} As for the activated node $i(k)$, we have $(\hat{\boldsymbol{x}}_{i(k)}^{k+1},\hat{\boldsymbol{\lambda}}_{i(k)}^{k+1})=(\hat{\boldsymbol{x}}_{i(k)}^{(l+1)m},\hat{\boldsymbol{\lambda}}_{i(k)}^{(l+1)m})$. As a result, the two inequalities for $\hat{\boldsymbol{x}}_{i(k)}^{k+1}$ and $\hat{\boldsymbol{\lambda}}_{i(k)}^{k+1}$ are given by
\begin{align}
&\hspace{-1mm} \sum_{j\in \mathcal{N}_{i(k)}}\hspace{-1mm}\Big[\boldsymbol{P}_{d,i(k)j}(\hat{\boldsymbol{\lambda}}_{j|i(k)}^{s(k,j)}\hspace{-0.5mm}-\hspace{-0.5mm}\hat{\boldsymbol{\lambda}}_{i(k)|j}^{(l+1)m})\hspace{-0.5mm}-\hspace{-0.5mm}\boldsymbol{A}_{j i(k)}\hat{\boldsymbol{x}}_j^{s(k,j)}\nonumber \\
&\hspace{14mm}+\hspace{-0.5mm}\boldsymbol{c}_{i(k)j}\Big]^T\hspace{-1mm}\left(\boldsymbol{\lambda}_{i(k)|j}\hspace{-0.5mm}-\hspace{-0.5mm}\hat{\boldsymbol{\lambda}}_{i(k)|j}^{(l+1)m}\right)  \nonumber \\
&\hspace{-1mm} \leq \hspace{-1mm} f_{i(k)}^{\ast}\hspace{-0.8mm}\left(\boldsymbol{A}_{i(k)}^T\boldsymbol{\lambda}_{i(k)}\right)\hspace{-0.5mm}-\hspace{-0.5mm} f_{i(k)}^{\ast}\hspace{-0.8mm}\left(\boldsymbol{A}_{i(k)}^T\hat{\boldsymbol{\lambda}}_{i(k)}^{(l+1)m}\right) \label{equ:optConAsyn1}\\
&\hspace{-1mm}\sum_{j\in \mathcal{N}_{i(k)}}\hspace{-1.2mm}\Big[\boldsymbol{P}_{p,i(k)j}\Big(-\hspace{-0.4mm}\boldsymbol{A}_{i(k) j}\boldsymbol{x}_{i(k)}^{(l+1)m}\hspace{-0.4mm}-\hspace{-0.4mm}\boldsymbol{A}_{j i(k)}\hat{\boldsymbol{x}}_j^{s(k,j)} \nonumber\\
&\hspace{8mm} +\boldsymbol{c}_{i(k)j}\hspace{-0.4mm}\Big)\hspace{-0.4mm}+\hspace{-0.4mm}\hat{\boldsymbol{\lambda}}_{j|i(k)}^{s(k,j)}\Big]^T \boldsymbol{A}_{i(k) j}\left(\boldsymbol{x}_{i(k)}-\hat{\boldsymbol{x}}_{i(k)}^{(l+1)m} \right)\nonumber \\
&\hspace{-1mm} \leq  f_{i(k)}\hspace{-0.8mm}\left(\boldsymbol{x}_{i(k)}\right)-f_{i(k)}\hspace{-0.8mm}\left(\hat{\boldsymbol{x}}_{i(k)}^{(l+1)m}\right), \label{equ:optConAsyn2}
\end{align}
where $lm\leq k<(l+1)m$.

Next adding (\ref{equ:optConAsyn1})-(\ref{equ:optConAsyn2}) over all $lm\leq k < (l+1)m$ and substituting $(\boldsymbol{x},\boldsymbol{\lambda})=(\boldsymbol{x}^{\star},\boldsymbol{\lambda}^{\star})$ yields
\begin{align}
&\hspace{-0mm}\sum_{i\in \mathcal{V}}\hspace{-0.8mm}\sum_{j\in \mathcal{N}_i}\hspace{-1.2mm}\Big[\hspace{-1.5mm}\left(\hspace{-0.6mm}\hat{\boldsymbol{\lambda}}_{i|j}^{(l\hspace{-0.3mm}+\hspace{-0.4mm}1)m}\hspace{-0.9mm}-\hspace{-0.9mm}\boldsymbol{\lambda}_{i|j}^{\star}\hspace{-0.6mm}\right)^{\hspace{-0.6mm}T}\hspace{-2mm}\Big(\hspace{-0.5mm}\boldsymbol{A}_{j i}\hat{\boldsymbol{x}}_j^{(l\hspace{-0.4mm}+\hspace{-0.4mm}1)m}\hspace{-0.7mm}-\hspace{-0.7mm}\frac{\boldsymbol{c}_{ij}}{2}\hspace{-0.5mm}\Big)\hspace{-1mm}-\hspace{-1mm}\Big(\hspace{-0.5mm}\hat{\boldsymbol{x}}_i^{(l\hspace{-0.3mm}+\hspace{-0.3mm}1)m}\hspace{-0.7mm}-\hspace{-0.7mm}\boldsymbol{x}_i^{\star}\hspace{-0.5mm}\Big)^{\hspace{-0.6mm}T}\hspace{-0.5mm}\nonumber\\
&\hspace{4mm}\cdot\boldsymbol{A}_{i j}^T\hat{\boldsymbol{\lambda}}_{j|i}^{(l+1)m}\Big]\hspace{-0.5mm}+\hspace{-0.5mm}p\left(\hat{\boldsymbol{x}}^{(l+1)m},\hat{\boldsymbol{\lambda}}^{(l+1)m}\right) -p(\boldsymbol{x}^{\star},\boldsymbol{\lambda}^{\star})\nonumber \\
&\leq \sum_{k=lm}^{(l\hspace{-0.3mm}+\hspace{-0.3mm}1)m\hspace{-0.3mm}-\hspace{-0.3mm}1}\sum_{j\in \mathcal{N}_{i(k)}}\Bigg[\Big[\boldsymbol{P}_{d,i(k)j}\left(\hat{\boldsymbol{\lambda}}_{j|i(k)}^{s(k,j)}-\hat{\boldsymbol{\lambda}}_{i(k)|j}^{(l+1)m}\right)\nonumber\\
&\hspace{8mm}+\boldsymbol{A}_{j i(k)}
\left(\hat{\boldsymbol{x}}_j^{(l+1)m}\hspace{-0.4mm}-\hspace{-0.4mm}\hat{\boldsymbol{x}}_j^{s(k,j)}\right)\nonumber\Big]^T\hspace{-0.8mm}
\left(\hat{\boldsymbol{\lambda}}_{i(k)|j}^{(l+1)m}\hspace{-0.4mm}-\hspace{-0.4mm}\boldsymbol{\lambda}_{i(k)|j}^{\star}\right)\nonumber \\
&\hspace{4mm}+\Big[\boldsymbol{P}_{p,i(k)j}\Big(\boldsymbol{c}_{i(k)j}\hspace{-0.4mm}-\hspace{-0.4mm}\boldsymbol{A}_{i(k) j}
\boldsymbol{x}_{i(k)}^{(l+1)m}\hspace{-0.4mm}-\hspace{-0.4mm}\boldsymbol{A}_{j i(k)}
\hat{\boldsymbol{x}}_j^{s(k,j)}\Big) \nonumber\\
&\hspace{8mm} \hspace{-0.4mm}+\hspace{-0.4mm}\hat{\boldsymbol{\lambda}}_{j|i(k)}^{s(k,j)}\hspace{-0.4 mm}-\hspace{-0.4mm}\hat{\boldsymbol{\lambda}}_{j|i(k)}^{(l+1)m}\Big]^T \boldsymbol{A}_{i(k) j}
\left(\hat{\boldsymbol{x}}_{i(k)}^{(l+1)m} -\boldsymbol{x}_{i(k)}^{\star} \right) \Bigg]\nonumber \\
&= \hspace{-2mm}\sum_{k=lm}^{(l\hspace{-0.3mm}+\hspace{-0.3mm}1)m\hspace{-0.3mm}-\hspace{-0.3mm}1}\hspace{-2mm}\sum_{j\in \mathcal{N}_{i(k)}}\hspace{-1mm}g(k,i(k),j)\hspace{-0.6mm}-\hspace{-0.6mm} \sum_{(i,j)\in \mathcal{E}}\hspace{-2mm}\Big(\hspace{-0.2mm}\left\|\hat{\boldsymbol{\lambda}}_{i|j}^{(l\hspace{-0.3mm}+\hspace{-0.3mm}1)m}\hspace{-0.6mm}-\hspace{-0.6mm}\hat{\boldsymbol{\lambda}}_{j|i}^{(l\hspace{-0.3mm}+\hspace{-0.3mm}1)m} \right\|_{\boldsymbol{P}_{d,ij}}^2 \nonumber \\
&\hspace{10mm}+ \left\|\boldsymbol{c}_{ij}-\boldsymbol{A}_{i j}
\hat{\boldsymbol{x}}_i^{(l+1)m}-\boldsymbol{A}_{j i}
\hat{\boldsymbol{x}}_j^{(l+1)m}\right\|_{\boldsymbol{P}_{p,ij}}^2 \hspace{-0.2mm}\Big),\label{equ:optCon_gen_asyn1}
\end{align}
where the function $g(k,i(k),j)$ is defined as
\begin{align}
&g(k,i(k),j)\nonumber\\
&=\Big[\boldsymbol{P}_{d,i(k)j}\left(\hat{\boldsymbol{\lambda}}_{j|i(k)}^{s(k,j)}-\hat{\boldsymbol{\lambda}}_{j|i(k)}^{(l+1)m}\right)\nonumber\\
&\hspace{8mm}+\hspace{-0.5mm}\boldsymbol{A}_{j i(k)}
\left(\hat{\boldsymbol{x}}_j^{(l+1)m}\hspace{-0.4mm}-\hspace{-0.4mm}\hat{\boldsymbol{x}}_j^{s(k,j)}\right)\nonumber\Big]^T\hspace{-0.8mm}
\left(\hat{\boldsymbol{\lambda}}_{i(k)|j}^{(l+1)m}\hspace{-0.4mm}-\hspace{-0.4mm}\boldsymbol{\lambda}_{i(k)|j}^{\star}\right)\nonumber\\
&\hspace{4mm}+\Big[\boldsymbol{P}_{p,i(k)j}\boldsymbol{A}_{j i(k)}
\Big(\hat{\boldsymbol{x}}_j^{(l+1)m}\hspace{-0.4mm}-\hspace{-0.4mm}\hat{\boldsymbol{x}}_j^{s(k,j)}\Big) \nonumber\\
&\hspace{8mm} \hspace{-0.4mm}+\hspace{-0.4mm}\hat{\boldsymbol{\lambda}}_{j|i(k)}^{s(k,j)}\hspace{-0.4 mm}-\hspace{-0.4mm}\hat{\boldsymbol{\lambda}}_{j|i(k)}^{(l+1)m}\Big]^T \boldsymbol{A}_{i(k) j}
\left(\hat{\boldsymbol{x}}_{i(k)}^{(l+1)m} -\boldsymbol{x}_{i(k)}^{\star} \right),\nonumber
\end{align}
where $lm\leq k<(l+1)m$ and $j\in \mathcal{N}_{i(k)}$.

Now we are in a position to analyze the right hand side of (\ref{equ:optCon_gen_asyn1}). By using the fact that each node $i$ has $|\mathcal{N}_i|$ different functions $g(k,i(k),j)$, we can conclude that each edge $(u,v)\in \mathcal{E}$ is associated with two functions $g(k_1,u(k_1),v)$ and $g(k_2,v(k_2),u)$, where iteration $k_1$ and $k_2$ activate $u$ and $v$, respectively. From (\ref{equ:optCon_gen_asyn1}), it is clear that each edge $(u,v)$ is also associated with the other two functions $\|\boldsymbol{c}_{uv}-\boldsymbol{A}_{u v}
\hat{\boldsymbol{x}}_u^{(l+1)m}-\boldsymbol{A}_{v u}\hat{\boldsymbol{x}}_v^{(l+1)m}\|_{\boldsymbol{P}_{p,uv}}^2$ and $\|\hat{\boldsymbol{\lambda}}_{v|u}^{(l+1)m}-\hat{\boldsymbol{\lambda}}_{u|v}^{(l+1)m}\|_{\boldsymbol{P}_{d,uv}}^2$.
We show in the following that the combination of the above four functions for every edge $(u,v)\in \mathcal{E}$ is independent of $k_1$ and $k_2$.
%We derive in the following an upper bound to the combination of the above four functions for edge $(u,v)\in \mathcal{E}$. 
In order to do so, we assume $k_1<k_2$ (or equivalently, $u<v$ from Lemma~\ref{lemma:act_node_order}). From (\ref{equ:sFun}), we know that $s(k_1,v)=lm$ and $s(k_2,u)=(l+1)m$. Based on the above information, the four functions for $(u,v)\in \mathcal{E}$ can be simplified as %the upper bound can then be derived as
\begin{align}
&\hspace{0mm}g(k_1,\hspace{-0.2mm}u(k_1),\hspace{-0.2mm}v)\hspace{-0.5mm}+\hspace{-0.5mm}g(k_2,\hspace{-0.2mm}v(k_2),\hspace{-0.2mm}u)\hspace{1mm}-\hspace{-0.7mm}\|\hat{\boldsymbol{\lambda}}_{v|u}^{(l+1)m}\hspace{-0.7mm}-\hspace{-0.6mm}\hat{\boldsymbol{\lambda}}_{u|v}^{(l+1)m}\|_{\boldsymbol{P}_{d,uv}}^2\nonumber\\
&-\|\boldsymbol{c}_{uv}-\boldsymbol{A}_{u v}\hat{\boldsymbol{x}}_u^{(l+1)m}-\boldsymbol{A}_{v u}
\hat{\boldsymbol{x}}_v^{(l+1)m}\|_{\boldsymbol{P}_{p,uv}}^2\nonumber\\
&\hspace{-1mm}=g(k_1,u(k_1),v)\hspace{-0.4mm}-\|\hat{\boldsymbol{\lambda}}_{v|u}^{(l+1)m}\hspace{-0.4mm}-\hspace{-0.4mm}\hat{\boldsymbol{\lambda}}_{u|v}^{(l+1)m}\|_{\boldsymbol{P}_{d,uv}}^2\nonumber\\
&\hspace{3mm}-\|\boldsymbol{c}_{uv}-\boldsymbol{A}_{u v}\hat{\boldsymbol{x}}_u^{(l+1)m}-\boldsymbol{A}_{v u}
\hat{\boldsymbol{x}}_v^{(l+1)m}\|_{\boldsymbol{P}_{p,uv}}^2\nonumber\\
&\hspace{-1mm}= \Big[\boldsymbol{P}_{d,uv}\left(\hat{\boldsymbol{\lambda}}_{v|u}^{lm}-\hat{\boldsymbol{\lambda}}_{v|u}^{(l+1)m}\right)\hspace{-0.5mm}+\hspace{-0.5mm}\boldsymbol{A}_{v u}
\left(\hat{\boldsymbol{x}}_v^{(l+1)m}\hspace{-0.4mm}-\hspace{-0.4mm}\hat{\boldsymbol{x}}_v^{lm}\right)\nonumber\Big]^T\hspace{-0.8mm}\nonumber\\
&\cdot\hspace{-1mm}\left(\hat{\boldsymbol{\lambda}}_{u|v}^{(l+1)m}\hspace{-0.4mm}-\hspace{-0.4mm}\boldsymbol{\lambda}_{u|v}^{\star}\right)\hspace{-0.5mm}+\hspace{-0.5mm}\Big[\boldsymbol{P}_{p,uv}\boldsymbol{A}_{v u}
\Big(\hat{\boldsymbol{x}}_v^{(l+1)m}\hspace{-0.4mm}-\hspace{-0.4mm}\hat{\boldsymbol{x}}_v^{lm}\Big)\hspace{-0.5mm}+\hspace{-0.5mm}\hat{\boldsymbol{\lambda}}_{v|u}^{lm}\hspace{-0.4 mm} \nonumber\\
&\hspace{1mm}-\hspace{-0.6mm}\hat{\boldsymbol{\lambda}}_{v|u}^{(l\hspace{-0.3mm}+\hspace{-0.3mm}1)m}\Big]^T\hspace{-0.8mm} \boldsymbol{A}_{u v}\hspace{-0.8mm}\left(\hat{\boldsymbol{x}}_{u}^{(l\hspace{-0.3mm}+\hspace{-0.3mm}1)m}\hspace{-0.6mm}-\hspace{-0.6mm}\boldsymbol{x}_{u}^{\star} \right)\hspace{-0.6mm}-\hspace{-0.6mm}\|\hat{\boldsymbol{\lambda}}_{v|u}^{(l+1)m}\hspace{-0.6mm}-\hspace{-0.6mm}\hat{\boldsymbol{\lambda}}_{u|v}^{(l+1)m}\|_{\boldsymbol{P}_{d,uv}}^2\nonumber\\
&\hspace{4mm}-\|\boldsymbol{c}_{uv}-\boldsymbol{A}_{u v}\hat{\boldsymbol{x}}_u^{(l+1)m}-\boldsymbol{A}_{v u}\hat{\boldsymbol{x}}_v^{(l+1)m}\|_{\boldsymbol{P}_{p,uv}}^2\label{equ:optCon_gen_asyn2} \\
&\hspace{-1mm}= d_{uv}^{l+1}\qquad u<v, \label{equ:optCon_gen_asyn3}
\end{align}
where $d_{uv}^{l+1}$ is given by (\ref{equ:d_asyn}), of which the derivation is similar to that for $d_{i|j}^{k+1}$ in (\ref{equ:d_syn}). The term $u(k_1)$ in (\ref{equ:optCon_gen_asyn2}) is simplified as $u$ since we already assume that at iteration $k_1$, node $u$ is activated. The quantity $d_{uv}^{l+1}$ is a function of $m$ and $l$ instead of $k_1$. Finally, combining (\ref{equ:optCon_gen_asyn1}) and (\ref{equ:optCon_gen_asyn3}) produces (\ref{equ:optCon_gen_asyn}). %The proof is complete.

\ifCLASSOPTIONcaptionsoff
  \newpage
\fi

\bibliographystyle{IEEEtran}
\bibliography{sigProcessing}

\end{document}